\definecolor{bleu_sombre}{rgb}{0,0,0.6}  \definecolor{rouge_sombre}{rgb}{0.8,0,0}\definecolor{vert_sombre}{rgb}{0,0.6,0}
\theoremstyle{plain}
\newtheorem{theorem}{{Theorem}}[section]
\newtheorem*{theorem*}{{Theorem}}
\newtheorem{proposition}[theorem]{Proposition}
\newtheorem*{proposition*}{Proposition}
\newtheorem{corollary}[theorem]{Corollary}
\newtheorem*{corollary*}{Corollary}
\newtheorem{lemma}[theorem]{Lemma}
\newtheorem{assumption}[theorem]{Assumption}
\newtheorem*{lemma*}{Lemma}
\theoremstyle{definition}
\newtheorem{definition}[theorem]{Definition}
\newtheorem*{definition*}{Definition}
\theoremstyle{remark}
\newtheorem{remark}[theorem]{Remark}
\newtheorem{notation}[theorem]{Notation}
\renewcommand{\leq}{\leqslant}	\renewcommand{\geq}{\geqslant}
\newcommand{\R}{\mathbb{R}}	
\newcommand{\C}{\mathbb{C}}
\newcommand{\N}{\mathbb{N}}	
\newcommand{\dd}{\mathrm{d}}
\renewcommand{\Re}{\mathrm{Re}\,}
\begin{document}

\newcommand\red{\textcolor{red}}

\title[]{Magnetic Dirac operator in strips\\ submitted to strong magnetic fields}

\author[L. Le Treust]{Loïc Le Treust}
\address[L. Le Treust]{Aix Marseille Univ, CNRS, I2M, Marseille, France}
\email{loic.le-treust@univ-amu.fr}

\author[N. Raymond]{Nicolas Raymond}
\address[N. Raymond]{Univ Angers, CNRS, LAREMA, Institut Universitaire de France, SFR MATHSTIC, F-49000 Angers, France}
\email{nicolas.raymond@univ-angers.fr}

\author[J. Royer]{Julien Royer}
\address[J. Royer]{Institut de math\'ematiques de Toulouse, Universit\'e Toulouse III, 118 route de Narbonne, F-31062 Toulouse cedex 9, France}
\email{julien.royer@math.univ-toulouse.fr}

\begin{abstract}
We consider the magnetic Dirac operator on a curved strip whose boundary carries the infinite mass boundary condition. When the magnetic field is large, we provide the reader with accurate estimates of the essential and discrete spectra. In particular, we give sufficient conditions ensuring that the discrete spectrum is non-empty.
	
\end{abstract}
\maketitle
	
\tableofcontents{}

\section{Motivations and main results}

\subsection{The magnetic Dirac operator on a strip}
We perform the spectral analysis of magnetic Dirac operators on bidimensional strips. 

\begin{figure}[htb]
    \centering
    \includegraphics[width=1\textwidth]{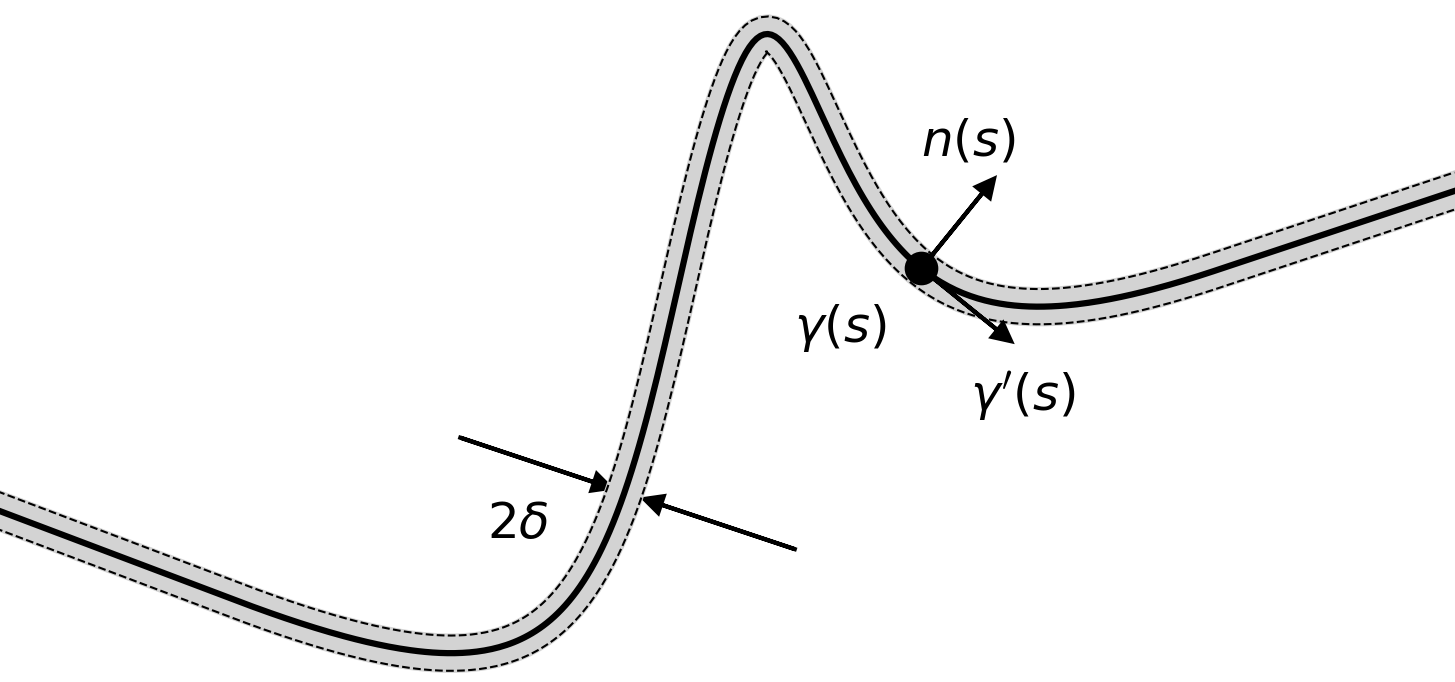}
    \caption{Typical Waveguide Profile}
    \label{fig:waveguide}
\end{figure}

\subsubsection{The strips}
The strips under consideration in this article are built from a smooth curve without self-intersections $\gamma : \R\to \R^2$ (with $|\gamma'|=1$) and from the application
\begin{equation}\label{eq:Diffeo}
	\Theta : \Omega_0\ni(s,t)\mapsto \gamma(s)+t\mathbf{n}(s)\,,
\end{equation}
where $\mathbf{n}=\gamma'^{\perp}$ is chosen so that $(\gamma',\bf{n})$ forms a direct orthonormal basis and $\Omega_0=\R\times(-\delta,\delta)$ is a straight strip of width $\delta>0$ so small that $\Theta$ is injective. The curvature \( \kappa \) of $\gamma$ is characterized by
\[
	\gamma^{\prime\prime}(s) = \kappa(s)\textbf{n}(s)
\]
for all \(s \in \mathbb{R}\). To simplify the analysis, we only consider the case when the curvature has compact support. Then, the strip is $\Omega=\Theta(\Omega_0)$, which, for $\delta$ small enough, is a smooth curved strip about the base curve $\gamma$ (see Figure \ref{fig:waveguide}).

\subsubsection{The choice of magnetic gauge}
In order to define the Dirac operator with constant magnetic field $B=1$, we need an associated vector potential $\mathbf{A} : \overline{\Omega}\to\R^2$ (that is a function such that $\partial_1A_2-\partial_2A_1=1$). Note that any two associated vector potentials yield unitarily equivalent operators, given that $\Omega$ is simply connected. In such a geometric context, there is a rather natural choice, whose nice properties lighten the presentation. 

For the straight strip, consider the bounded function $\phi_0$ on $\Omega_0$ given by 
\[
\phi_0(s,t)=\frac{t^2-\delta^2}2\,,
\]
which satisfies $\mathrm{curl}\mathbf{A}_0=1$, with  $\mathbf{A}_0=\nabla\phi^\perp_0=(-t,0)$. In particular, note that $\mathbf{A}_0$ is bounded. To get a smooth and bounded magnetic vector potentiel for the curved strip $\Omega$, we use the following proposition, established in \cite[Proposition 1.2]{BLLTRR23}.
\begin{proposition}\label{prop.phi}
	Let $\hat \phi_0 = \phi_0 \circ \Theta^{-1} \in \mathscr C^\infty(\overline \Omega)$. There exists a unique $\phi\in\mathscr{C}^\infty(\overline{\Omega})$ such that $\Delta\phi=1$, $\phi_{|\partial\Omega}=0$, and $\phi-\hat \phi_0\in\mathscr{S}(\overline{\Omega})$. Moreover, there exists $c_0>0$ such that $\partial_{\mathbf{N}} \phi \geq c_0$ on $\partial\Omega$, $\mathbf{N}$ being the outward pointing normal to the boundary.
\end{proposition}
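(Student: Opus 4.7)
The plan is to seek $\phi$ under the form $\phi=\hat\phi_0+\psi$ with $\psi\in\mathscr{S}(\overline\Omega)$. Since $\phi_0(s,\pm\delta)=0$, the conditions $\Delta\phi=1$ and $\phi_{|\partial\Omega}=0$ translate into $\Delta\psi=f$ on $\Omega$ and $\psi_{|\partial\Omega}=0$, where $f:=1-\Delta\hat\phi_0$. In the tubular coordinates $(s,t)\in\Omega_0$, the metric pulled back by $\Theta$ is $(1-t\kappa(s))^2\,\mathrm{d}s^2+\mathrm{d}t^2$, so the Euclidean Laplacian reads
\[
\Delta = \frac{1}{1-t\kappa}\Bigl[\partial_s\Bigl(\frac{1}{1-t\kappa}\partial_s\Bigr)+\partial_t\bigl((1-t\kappa)\partial_t\bigr)\Bigr],
\]
and applying it to $\phi_0(s,t)=(t^2-\delta^2)/2$ yields $f\circ\Theta(s,t)=t\kappa(s)/(1-t\kappa(s))$, which belongs to $\mathscr{C}^\infty_c(\overline\Omega)$ because $\kappa$ has compact support.

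Next, I would argue that the Dirichlet realization of $-\Delta$ on $\Omega$ is boundedly invertible from $H^{-1}(\Omega)$ to $H^1_0(\Omega)$: it is non-negative and self-adjoint, and the transverse Poincaré inequality transported via $\Theta$ from the straight strip shows that its spectrum is bounded away from $0$. This produces a unique weak solution $\psi\in H^1_0(\Omega)$ to $\Delta\psi=f$, and elliptic regularity up to the (smooth) boundary, together with $f\in\mathscr{C}^\infty_c$, yields $\psi\in\mathscr{C}^\infty(\overline\Omega)$.

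The main obstacle is the Schwartz decay of $\psi$. Choose $R>0$ with $\mathrm{supp}\,\kappa\subset[-R,R]$; then the portions of $\Omega$ corresponding to $\pm s\geq R$ are isometric to flat half-strips, on which $\psi\circ\Theta$ satisfies the free equation $(\partial_s^2+\partial_t^2)(\psi\circ\Theta)=0$ with Dirichlet conditions at $t=\pm\delta$. Expanding in the transverse orthonormal basis $e_n(t)=\delta^{-1/2}\sin(n\pi(t+\delta)/(2\delta))$ with eigenvalues $\mu_n^2=(n\pi/(2\delta))^2$, the coefficients $a_n(s):=\langle(\psi\circ\Theta)(s,\cdot),e_n\rangle$ solve $a_n''=\mu_n^2 a_n$; the $L^2$ constraint on $\psi$ selects the exponentially decaying branch $a_n(s)=a_n(R)e^{-\mu_n(s-R)}$ for $s\geq R$, and symmetrically on $s\leq -R$. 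The smoothness of $\psi$ at $s=\pm R$ ensures rapid decay in $n$ of the $a_n(\pm R)$, which together with the exponential weights produces exponential decay of $\psi$ and of all its derivatives in $|s|$ on the straight ends -- stronger than Schwartz decay. An alternative route would be a Combes--Thomas type exponential estimate for the resolvent of $-\Delta^{\mathrm{D}}_\Omega$ applied to the compactly supported $f$.

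For the lower bound on $\partial_{\mathbf{N}}\phi$, I would combine the strong maximum principle and Hopf's lemma. Since $\Delta\phi=1>0$, $\phi_{|\partial\Omega}=0$ and $\phi$ is bounded (as $\hat\phi_0$ is bounded and $\psi\in\mathscr{S}$), the maximum principle yields $\phi<0$ in $\Omega$, and Hopf's lemma then gives $\partial_{\mathbf{N}}\phi>0$ pointwise on $\partial\Omega$. For a quantitative bound, at a boundary point with $t=\pm\delta$ one has $\partial_{\mathbf{N}}\hat\phi_0=\pm\partial_t\phi_0=\delta$, while $\partial_{\mathbf{N}}\psi\to 0$ as $|s|\to\infty$ by the Schwartz decay; thus $\partial_{\mathbf{N}}\phi\geq\delta/2$ outside a compact part of $\partial\Omega$, on which continuity and strict positivity of $\partial_{\mathbf{N}}\phi$ together with compactness provide a uniform positive lower bound. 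Taking $c_0$ as the minimum of these two constants concludes.
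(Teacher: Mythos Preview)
Your argument is correct and follows precisely the route the paper has in mind: the paper does not prove this proposition but cites \cite[Proposition~1.2]{BLLTRR23}, and the parallel sketch given in the appendix (for the harmonic function $\beta$) is exactly your reduction to a Dirichlet Poisson problem with a compactly supported source coming from the curvature. One minor remark: for the exponential decay on the flat ends you do not actually need rapid decay in $n$ of the coefficients $a_n(\pm R)$; the $\ell^2$ bound from $\psi(R,\cdot)\in L^2$ together with the factor $e^{-\mu_n}$ already suffices to control all derivatives uniformly for $|s|\geq R+1$.
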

Thanks to the function $\phi$ given in Proposition \ref{prop.phi}, we get the existence of a smooth and bounded vector potential $\mathbf{A}=\nabla\phi^\perp$ on $\Omega$.

\subsubsection{The magnetic Dirac operators}
For $h>0$, we consider the magnetic Dirac operators
\[\mathscr{D}_{h}=\sigma\cdot(\mathbf{p}-\mathbf{A})
=\begin{pmatrix}
0&d_{h,\mathbf{A}}\\
d^\times_{h,\mathbf{A}}&0
\end{pmatrix}
=\begin{pmatrix}
0&-2ih\partial_{z}-A_1+iA_2\\
-2ih\partial_{\overline{z}}-A_1-iA_2&0
\end{pmatrix}\,,
\]
where $\mathbf{p}=-ih\nabla$, $\partial_z = \frac{\partial_1 -i\partial_2}{2}$, $\partial_{\overline{z}} = \frac{\partial_1 +i\partial_2}{2}$, and the Pauli matrices are given by
\[
\sigma_1 = \left(
\begin{array}{cc}
	0&1\\1&0
\end{array}
\right),
\quad
\sigma_2 = \left(
\begin{array}{cc}
	0&-i\\i&0
\end{array}
\right),
\quad
\sigma_3 = \left(
\begin{array}{cc}
	1&0\\0&-1
\end{array}
\right)\,,
\]
and \[\mathscr{D}_{h,0}=\sigma\cdot(\mathbf{p}-\mathbf{A}_0)=\begin{pmatrix}
0&-ih\partial_{s}-h\partial_{t}+t\\
-ih\partial_{s}+h\partial_{t}+t&0
\end{pmatrix}\,,\] 
 with respective domains
\[\begin{split}
\mathrm{Dom}(\mathscr{D}_h)&=\{\psi\in H^1(\Omega,\mathbb{C}^2) : -i\sigma_3(\sigma\cdot\mathbf{N})\psi=\psi\,,\text{ on }\partial\Omega\}\,,\\
\mathrm{Dom}(\mathscr{D}_{h,0})&=\{\psi\in H^1(\Omega_0,\mathbb{C}^2) : -i\sigma_3(\sigma\cdot\mathbf{N})\psi=\psi\,,\text{ on }\partial\Omega_0\}\,,
\end{split}\]
where $\mathbf{N}$ is the outward pointing normal to the boundary $\partial\Omega$ ($\mathbf{N} = \pm\mathbf{n}$). The boundary conditions are the so-called infinite mass boundary conditions. 

\begin{remark}
To establish that $\mathscr{D}_{h,0}$ and $\mathscr{D}_{h}$ are well-defined on $H^1$, and that they are self-adjoint, we rely on the boundedness of $\mathbf{A}_0$ and $\mathbf{A}$, as well as on the properties of the nonmagnetic Dirac operators and the analysis presented in \cite{ALTR20}. Introducing a gauge-equivalent unbounded vector potential would complicate the definition of the domain and the proof of self-adjointness. Moreover, the boundedness of these potentials is not the only reason for our choice of gauge. The functions $\phi_0$ and $\phi$ play a central role in the main results of this paper, as they encode the ‘magnetic geometry’ of the problem.
\end{remark}

\begin{remark}
	Note that the boundary condition can also be written as $\psi_2=\pm in\psi_1$ where $n=n_1+in_2$, $(n_1,n_2)$ being the coordinates of $\mathbf{n}$. This boundary condition, commonly known as the infinite mass or MIT bag model boundary condition, originates from the study of relativistic quantum particles. It was first introduced in 3D by Bogoliubov et al. \cite{Bogoliubov1968} in the context of the Bogoliubov bag model, and later extended by Chodos et al. in the MIT bag model to describe quark confinement \cite{MITBagModel1974}. In 2D, Berry and Mondragon \cite{Berry1987} applied the infinite mass boundary condition to neutrino billiards, marking its introduction in two dimensions. They also explored other local boundary conditions. The infinite mass boundary condition has since been adopted in the study of graphene quantum dots and nanoribbons, where it ensures the confinement of electrons and the presence of a spectral gap, in contrast to other boundary conditions such as zigzag, which allow zero-energy states \cite{CastroNeto2009}. We refer to the introduction of \cite{BLTRS23} for an extended description of the various boundary conditions, and to \cite[Section 1.3]{BLTRS23} for a detailed exploration of the relationship between Dirac operators with zigzag boundary conditions and the square root of the Dirichlet Pauli operators. 
\end{remark}

\subsection{Main results}
The aim of the article is to study the spectra of the operators $\mathscr{D}_{h,0}$ and $\mathscr{D}_{h}$ in the limit $h\to 0$ (which is equivalent to the large magnetic field limit, with a magnetic field of strength $h^{-1}$). Our first result describes their essential spectra by providing the reader with asymptotic estimates of the negative and positive thresholds of the essential spectrum.
\begin{theorem}\label{thm.ess}
For all $h>0$,
\[\mathrm{sp}_{\mathrm{ess}}(\mathscr{D}_h)=\mathrm{sp}(\mathscr{D}_{h,0})=\mathrm{sp}_{\mathrm{ess}}(\mathscr{D}_{h,0})\,.\]	
Moreover, for all $h>0$, there exist $\lambda^{\pm}_{\mathrm{ess}}(h)>0$ such that:
\[\mathrm{sp}(\mathscr{D}_{h,0})=\mathbb{R}\setminus(-\lambda^-_{\mathrm{ess}}(h),\lambda^+_{\mathrm{ess}}(h))\,,\]	
and we have
\[\lambda^+_{\mathrm{ess}}(h)=2\sqrt{\frac{h}{\pi}}e^{-\delta^2/h}(1+o(1)) \quad\textrm{and} \quad \lambda_{\mathrm{ess}}^-(h) = a_0 \sqrt{h} +\mathscr{O}(h^\infty)\,,\]
for some $a_0\in(0,\sqrt{2})$.
\end{theorem}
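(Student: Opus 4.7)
\emph{Stage 1: Equality of essential spectra.} The proof splits into three independent stages matching the three claims. For the essential spectrum equality, since the curvature $\kappa$ has compact support, there exists $s_0 > 0$ such that $\Theta$ is an isometry between $\Omega_0 \cap \{|s| > s_0\}$ and its image in $\Omega$. The plan is a two-way Weyl sequence transfer. On one hand, any singular sequence $(\psi_n)$ for $\mathscr{D}_{h,0}$ at an essential spectral value $\lambda$ can be translated in $s$ to be supported where $|s| > s_0 + 1$ and pulled back via $\Theta^{-1}$---after a gauge transformation absorbing the difference $\phi - \hat\phi_0 \in \mathscr{S}(\overline{\Omega})$ from Proposition \ref{prop.phi}---to yield a Weyl sequence for $\mathscr{D}_h$. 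Conversely, any Weyl sequence for $\mathscr{D}_h$ converges weakly to $0$, so its $L^2$-mass escapes to infinity; localising it outside the support of $\kappa$ and transferring it to $\Omega_0$ provides a Weyl sequence for $\mathscr{D}_{h,0}$. Both transfers rely on the first-order nature of $\mathscr{D}_h$: commutators with smooth cutoffs are bounded multiplication operators, producing only $o(1)$ errors.

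\emph{Stage 2: Direct integral and shape of the spectrum.} Since $\mathbf{A}_0 = (-t, 0)$ is $s$-independent, $\mathscr{D}_{h,0}$ commutes with $s$-translations. A partial Fourier transform $\mathscr{F}_s$ yields the direct integral
\begin{equation*}
\mathscr{F}_s \mathscr{D}_{h,0} \mathscr{F}_s^{-1} = \int_{\mathbb{R}}^{\oplus} \mathscr{D}_{h,0}(\xi)\, \dd\xi, \qquad \mathscr{D}_{h,0}(\xi) = \begin{pmatrix} 0 & h\xi + t - h\partial_t \\ h\xi + t + h\partial_t & 0 \end{pmatrix},
\end{equation*}
each fiber acting on $L^2((-\delta, \delta), \mathbb{C}^2)$ with the infinite mass boundary conditions. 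Each fiber has compact resolvent, and $\mathrm{sp}(\mathscr{D}_{h,0})$ is the closure of the union of fiber spectra, with no isolated eigenvalues by translation invariance, whence $\mathrm{sp}(\mathscr{D}_{h,0}) = \mathrm{sp}_{\mathrm{ess}}(\mathscr{D}_{h,0})$. A short computation with the explicit zero-modes of $d$ and $d^*$ shows that $0$ is never a fiber eigenvalue, so one labels $\mathrm{sp}(\mathscr{D}_{h,0}(\xi)) = \{\lambda_n^\pm(\xi, h) : n \geq 1\}$ with obvious sign convention. Setting $\eta = h\xi$, the potential $(t+\eta)^2$ in the quadratic form of $\mathscr{D}_{h,0}(\xi)^2$ is bounded below by $(|\eta| - \delta)^2$ when $|\eta| \geq \delta$, which forces $|\lambda_1^\pm(\xi, h)| \to +\infty$ as $|\xi| \to \infty$. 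Since the band functions $\xi \mapsto \lambda_1^\pm(\xi, h)$ are continuous, the intermediate value theorem implies that their ranges are exactly $[\lambda^+_{\mathrm{ess}}(h), +\infty)$ and $(-\infty, -\lambda^-_{\mathrm{ess}}(h)]$, where $\lambda^+_{\mathrm{ess}}(h) := \inf_\xi \lambda_1^+(\xi, h)$ and $-\lambda^-_{\mathrm{ess}}(h) := \sup_\xi \lambda_1^-(\xi, h)$. Higher bands stay outside the gap by definition, yielding $\mathrm{sp}(\mathscr{D}_{h,0}) = \mathbb{R} \setminus (-\lambda^-_{\mathrm{ess}}(h), \lambda^+_{\mathrm{ess}}(h))$.

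\emph{Stage 3: Asymptotics of the thresholds.} For $\lambda^+_{\mathrm{ess}}(h)$ we exploit a supersymmetric feature: the formal operator $d^*_\eta := h\partial_t + (t + \eta)$ has the explicit $L^2(\mathbb{R})$ zero-mode $u_\eta(t) = (\pi h)^{-1/4} e^{-(t+\eta)^2/(2h)}$. The quasi-mode $\psi = (u_\eta, v)^{\top}$ with $v$ chosen to match the infinite mass boundary conditions at $t = \pm \delta$ gives the upper bound $\lambda_1^+(\xi, h)^2 \leq C h^{-1/2} (e^{-(\delta+\eta)^2/h} + e^{-(\delta-\eta)^2/h})$; minimising in $\eta$ (the minimum is attained at $\eta = 0$ by symmetry) yields $\lambda^+_{\mathrm{ess}}(h) \leq 2\sqrt{h/\pi}\, e^{-\delta^2/h}(1+o(1))$. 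The matching lower bound follows from an Agmon-type weighted estimate showing that any approximate eigenfunction for a small positive fiber eigenvalue at $\eta = 0$ is, up to an exponentially small error, proportional to $u_0$, so the boundary correction extracted above is sharp. For $\lambda^-_{\mathrm{ess}}(h)$, the rescaling $u = (t+\eta)/\sqrt{h}$ turns $\mathscr{D}_{h,0}(\xi)^2/h$ into a one-parameter family of model Dirac-squared operators on a translated rescaled interval, parameterised by $\eta$. As $h \to 0$ these converge in the norm resolvent sense to explicit model operators, either on $\mathbb{R}$ (when $\eta$ stays strictly inside $(-\delta, \delta)$) or on a half-line (when $\eta$ is near $\pm \delta$); the infimum of their ground energies is a positive constant $a_0^2$, and the strict inequality $a_0 < \sqrt{2}$ reflects the fact that the half-line edge model has ground energy strictly below the bulk value $2$ of $-\partial_u^2 + u^2 - 1$ on $\mathbb{R}$. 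A standard BKW/Grushin quasi-mode hierarchy combined with Agmon-type lower bounds upgrades the leading asymptotic to the $\mathcal{O}(h^\infty)$ remainder. The main technical obstacle is the sharp lower bound for $\lambda^+_{\mathrm{ess}}(h)$: propagating the Agmon estimate all the way to the boundary while extracting exactly the prefactor $2\sqrt{h/\pi}$ demands a very precise comparison of the true eigenfunction with the Gaussian $u_0$.
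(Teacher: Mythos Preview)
Your Stages~1 and~2 are sound, though the paper handles Stage~1 differently: rather than shuttling Weyl sequences, it pulls $\mathscr{D}_h$ back globally to the straight strip via tubular coordinates (Lemma~\ref{lem.Dhunitary}), obtaining $\mathfrak{D}_h=\mathscr{D}_{h,0}+\mathfrak{P}_h$ with $\mathfrak{P}_h$ a first-order operator whose coefficients are compactly supported, and then shows $(\mathfrak{D}_h+i)^{-1}-(\mathscr{D}_{h,0}+i)^{-1}$ is compact. Your route would work too.

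The genuine gap is in Stage~3, the sharp upper bound for $\lambda^+_{\mathrm{ess}}$. Your displayed inequality $\lambda_1^+(\xi,h)^2\leq Ch^{-1/2}(e^{-(\delta+\eta)^2/h}+e^{-(\delta-\eta)^2/h})$ cannot yield $\lambda^+_{\mathrm{ess}}\leq 2\sqrt{h/\pi}\,e^{-\delta^2/h}$: taking the square root at $\eta=0$ gives only $e^{-\delta^2/(2h)}$, the wrong exponential rate. This is not a typo but reflects the difficulty of the two-component quasi-mode $(u_\eta,v)^\top$: a generic $v$ matching $v(\pm\delta)=\mp u_\eta(\pm\delta)$ contributes $\|d_{\mathbf A}v\|^2$ of order $e^{-\delta^2/h}$, not $e^{-2\delta^2/h}$, to the Rayleigh quotient of $\mathscr{D}^2$. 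The correct $v$ is the one solving $d_{\mathbf A}v=\mu u_\eta$ together with both boundary conditions, which forces $\mu=\nu_1(\eta,h)$; in other words, building the right quasi-mode already requires the paper's scalar reduction. The paper sidesteps this via a \emph{nonlinear min-max characterisation} (Proposition~\ref{prop.charact}): $\mu_1^\pm(\xi,h)$ is the unique positive root of $\lambda\mapsto\ell_1^\pm(\lambda,\xi,h)$, the bottom of the scalar form $\mathscr{Q}^\pm_{\lambda,\xi,h}(\psi)=\|(\pm h\partial_t+\xi+t)\psi\|^2+\lambda h\|\psi\|^2_{\partial I}-\lambda^2\|\psi\|^2$ on $H^1(I)$, together with the transfer bound $|\ell_1^\pm(\lambda)|\geq\lambda|\mu_1^\pm-\lambda|$. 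Plugging the single Gaussian $u_\xi$ into the equivalent formula $\mu_1^+=\min_\psi\rho^+_{\xi,h}(\psi)$ gives $\mu_1^+\leq\nu_1(\xi,h)=h\|u_\xi\|^2_{\partial I}/\|u_\xi\|^2$ exactly; the minimum of $\nu_1$ at $\xi=0$ is then proved by computing $\partial_\xi\nu_1$ explicitly (mere evenness, your ``by symmetry'', does not pin the minimum to the origin). The matching lower bound for $\lambda^+$ (Lemma~\ref{lem.approxpsixi}) and both bounds for $\lambda^-$ (IMS localisation of $\mathscr{Q}^-$ into two half-line pieces and one bulk piece, compared to the model forms of \cite{BLTRS23}) are likewise run through this scalar framework. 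Your rescaling-to-models picture for $\lambda^-$ is correct in spirit, but the nonlinear min-max is the missing engine.
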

As expected, the essential spectra of $\mathscr{D}_h$ and $\mathscr{D}_{h,0}$ coincide, since $\Omega$ looks like $\Omega_0$ at infinity. The constant $a_0$ is the one appearing in \cite[Theorem 1.15]{BLTRS23}: it represents the spectral gap of the Dirac operator with magnetic field equal to $1$ when $h=1$ on a half-plane. 
\begin{remark}\label{rem.hypcurvature}
Note that one could relax our assumption that the curvature has compact support by assuming that $\kappa$ goes to $0$ at infinity sufficiently fast.
\end{remark}
Let us now discuss the existence of the discrete spectrum for $\mathscr{D}_h$. To ensure its existence, we will work under the following assumption.
\begin{assumption}\label{assu.0}
The function $\phi$ has a unique minimum attained at $x_{\min} (\in \Omega)$, which is non-degenerate. Moreover, we have $\phi_{\min}=\min_{\overline{\Omega}}\phi<\min_{\overline{\Omega}_0}\phi_{0}=-\frac{\delta^2}{2}$ and $\liminf_{\substack{|x| \to \infty \\ x \in \Omega }} \phi(x) = \min_{\overline{\Omega}_0}\phi_{0}> \phi_{\min}$.
\end{assumption}
It is known from \cite[Proposition 1.3]{BLLTRR23} and from Proposition \ref{prop.phi} above (\cite[Proposition 1.2]{BLLTRR23}) that Assumption \ref{assu.0} is satisfied when the strip is straight away from a compact set, thin enough, and when the square of the curvature $\kappa$ of its base curve $\gamma$ has a unique maximum, which is non-degenerate.

\begin{remark}
In this paper, we do not consider the thin waveguide limit $\delta \to 0$. Instead, we fix $\delta > 0$, assuming that Assumption \ref{assu.0} holds. The constants in our estimates may depend on $\delta$.
 \end{remark}

In order to formulate our main theorem, one will need the Segal-Bargmann space
	\[\mathscr{B}^2(\C) = \{u\in\mathscr{O}(\mathbb{C}) :N_{\mathscr{B}}(u)<+\infty\}\,,\] 
where
	\[
N_{\mathscr{B}}(u)=\left(\int_{\mathbb{R}^2} \left|u \left(y_1+iy_2\right)\right|^2e^{-\mathsf{Hess}_{x_{\min}}\phi(y, y)} \dd y\right)^{1/2}\,.
\]
One will also need the Hardy space $\mathscr{H}^2(\Omega)$ on $\Omega$, which is essentially made of holomorphic functions on $\Omega$ having a trace on $\partial\Omega$ that is $L^2(\partial\Omega)$. The Hardy space is equipped with $\|\cdot\|_{L^2(\partial\Omega)}$. More details about the Hardy space and its norm are given in Appendix \ref{app.A}, see also the discussion in Section \ref{sec.strategy}. The distances associated with the above norms are denoted by $\mathrm{dist}_{\mathscr{B}}$ and $\mathrm{dist}_{\mathscr{H}}$.

For $k \in \N^*$ we set (we will write $z_{\min}$ instead of $x_{\min}$ when $\Omega$ is considered as a subset of $\C$)
\[\begin{split}
	\mathbb{X}_k &= \{u\in \mathscr{H}^2(\Omega)\,, \forall j\in\{0,\dots,k-2\}\,,\;u^{(j)}(z_{\rm min})=0\,,\; u^{(k-1)}(z_{\rm min})=1\}\,,
	\\
	\mathbb{Y}_k&=\{u\in\mathbb{C}[X]\,,\deg u = k-1\,, u^{(k-1)}(0) = 1\}\,.
\end{split}\]
Then we set $d^{k}_\mathscr{H} = {\rm dist}_{\mathscr{H}^2(\Omega)}(0,\mathbb{X}_k)$ and $d^{k}_\mathscr{B} = {\rm dist}_{\mathscr{B}(\mathbb{C})}(0,\mathbb{Y}_k)$.

Here comes our main theorem.
\begin{theorem}\label{thm.boundstatesDirac}
Suppose that Assumption \ref{assu.0} holds.
\begin{enumerate}[\rm (i)]
\item\label{it.AsymEff+}
Let $k\geq 1$. Consider
\[\lambda^{\mathrm{eff}}_k(h)=\inf_{\underset{\dim W=k}{W\subset \mathscr{H}^2(\Omega)}}\sup_{u\in W\setminus\{0\}}\frac{h\|u\|^2_{\partial\Omega}}{\|e^{-\phi/h}u\|^2}\,,\]
Then, we have
\[
\lambda_k^{\rm eff}(h) = h^{1-k}e^{2\phi_{\rm min}/h}\left(\frac{d^{k}_\mathscr{H}}{d^{k}_\mathscr{B}}\right)^2(1+o_{h\to 0}(1))\,.
\]
\item \label{it.AsymEff+2}Consider $N\in\mathbb{N}^*$. There exists $h_0>0$ such that for all $h\in(0,h_0)$ the operator $\mathscr{D}_h$ has at least $N$ positive discrete eigenvalues (counted with multiplicities). Denoting the first $N$ eigenvalues by $(\lambda^+_k(h))_{k\in\{1,\ldots,N\}}$, we have for all $k\in\{1,\ldots,N\}$
\[\lambda^+_k(h) \mathop{\mbox{$\sim$}} \limits_{h \to 0}  \lambda^{\mathrm{eff}}_k(h)\,.\]

\end{enumerate}
\end{theorem}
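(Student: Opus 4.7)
The plan is to reduce both parts of the theorem to semiclassical analysis near the magnetic well $x_{\min}$: for (\ref{it.AsymEff+}) I would apply Laplace's method to the purely functional-analytic quantity $\lambda^{\mathrm{eff}}_k(h)$, and for (\ref{it.AsymEff+2}) I would combine a Green identity for $\mathscr{D}_h$ with an Agmon-type localization of eigenfunctions.

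For part (\ref{it.AsymEff+}), since $\phi$ admits a unique non-degenerate minimum at $x_{\min}$, the weight $e^{-2\phi/h}\dd x$ concentrates exponentially at $x_{\min}$ with a Gaussian profile governed by $\mathsf{Hess}_{x_{\min}}\phi$. The rescaling $x = x_{\min}+\sqrt{h}\,y$ combined with Taylor expansions of $\phi$ and $u$ at $z_{\min}$ yields, for $u\in\mathbb{X}_k$ with associated Taylor polynomial $v\in\mathbb{Y}_k$,
\[
\|e^{-\phi/h}u\|^2 = e^{-2\phi_{\min}/h}\,h^{k}\,N_{\mathscr{B}}(v)^2\,(1+o(1)),
\]
the factor $h^k$ arising from the area element $h\,\dd y$ together with $(\sqrt{h})^{2(k-1)}$ from the leading Taylor term. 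The numerator $h\|u\|^2_{\partial\Omega}$ is $h$ times the squared Hardy norm and insensitive to the rescaling. A Courant--Fischer argument then exploits the leading-order decoupling (distinct vanishing orders at $z_{\min}$ contribute at distinct powers of $h$) to show that the $k$-th min-max asymptotically selects $u\in\mathbb{X}_k$ minimizing the Hardy norm and $v\in\mathbb{Y}_k$ minimizing the Segal--Bargmann norm, giving $\lambda^{\mathrm{eff}}_k(h)\sim h^{1-k}e^{2\phi_{\min}/h}(d^k_{\mathscr{H}}/d^k_{\mathscr{B}})^2$.

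For part (\ref{it.AsymEff+2}), the cornerstone is a Green identity derived from the boundary condition $\psi_2=in\psi_1$ on $\partial\Omega$: integration by parts gives, for every $\psi\in\mathrm{Dom}(\mathscr{D}_h)$,
\[
\langle d_{h,\mathbf{A}}\psi_2,\psi_1\rangle - \langle\psi_2,d^\times_{h,\mathbf{A}}\psi_1\rangle = h\|\psi_1\|^2_{\partial\Omega},
\]
which, for an eigenpair $\mathscr{D}_h\psi=\lambda\psi$, combines with $\|d^\times_{h,\mathbf{A}}\psi_1\|^2=\lambda^2\|\psi_2\|^2$ into
\[
\lambda\|\psi_1\|^2 = h\|\psi_1\|^2_{\partial\Omega}+\lambda\|\psi_2\|^2.
\]
For the upper bound, I take a $k$-dimensional near-optimizer $W\subset\mathscr{H}^2(\Omega)$ for $\lambda^{\mathrm{eff}}_k(h)$ and build trial states $\psi^{(u)}=(e^{-\phi/h}u,\psi_2^{(u)})$ with $\psi_2^{(u)}$ an extension of the boundary datum $in\,e^{-\phi/h}u$ supported in a thin tube around $\partial\Omega$, chosen so that $\|\psi_2^{(u)}\|\ll\|\psi_1^{(u)}\|$. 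Since $d^\times_{h,\mathbf{A}}(e^{-\phi/h}u)=0$, the identity above applied variationally to $\mathscr{D}_h^2$ produces a Rayleigh quotient asymptotic to $h\|u\|^2_{\partial\Omega}/\|e^{-\phi/h}u\|^2$, and the min-max principle yields $\lambda^+_k(h)\leq\lambda^{\mathrm{eff}}_k(h)(1+o(1))$. Existence of $N$ positive discrete eigenvalues for small $h$ follows since, under Assumption \ref{assu.0} (giving $2\phi_{\min}<-\delta^2$), part (\ref{it.AsymEff+}) together with Theorem \ref{thm.ess} yields $\lambda^{\mathrm{eff}}_k(h)\ll\lambda^+_{\mathrm{ess}}(h)$.

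The main obstacle is the matching lower bound $\lambda^+_k(h)\geq\lambda^{\mathrm{eff}}_k(h)(1+o(1))$. Let $(\psi^{(j)})_{j=1}^k$ be orthonormal eigenfunctions for the $k$ smallest positive eigenvalues $(\lambda^{(j)})$. Agmon-type estimates, based on the Weitzenb\"ock identity $\mathscr{D}_h^2=(\mathbf{p}-\mathbf{A})^2-h\sigma_3$ and on the fact that $|\nabla\phi|$ is bounded away from zero outside any neighborhood of $x_{\min}$ (thanks to the non-degeneracy of the minimum and Proposition \ref{prop.phi}), show that each $\psi_1^{(j)}$ is exponentially localized at $x_{\min}$ and admits a decomposition $\psi_1^{(j)} = e^{-\phi/h}u^{(j)} + r^{(j)}$ with $u^{(j)}\in\mathscr{H}^2(\Omega)$ and $r^{(j)}$ of lower order, such that $(u^{(j)})_{j=1}^k$ spans a $k$-dimensional subspace whose Gram matrix in the weighted inner product $\int\cdot\,\overline{\cdot}\,e^{-2\phi/h}\dd x$ is asymptotically that of $(\psi_1^{(j)})$ in $L^2$. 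Since the above identity yields $\lambda^{(j)}\|\psi_1^{(j)}\|^2\geq h\|\psi_1^{(j)}\|^2_{\partial\Omega}$, translating to $u^{(j)}$ and applying the definition of $\lambda^{\mathrm{eff}}_k$ to the span of $(u^{(j)})$ closes the estimate, yielding the equivalence $\lambda^+_k(h)\sim\lambda^{\mathrm{eff}}_k(h)$.
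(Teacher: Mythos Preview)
Your approach to part~(\ref{it.AsymEff+}) is essentially that of the paper: Laplace's method at $x_{\min}$, Taylor expansion, and a min-max decoupling across vanishing orders. One technical point you gloss over is that polynomials do \emph{not} belong to $\mathscr{H}^2(\Omega)$ when $\Omega$ is unbounded, so ``associated Taylor polynomial $v\in\mathbb{Y}_k$'' cannot be used directly as a test function or as a projection target inside the Hardy space. The paper handles this by introducing a \emph{Hardy--Taylor expansion} $\mathrm{Tayl}_{\mathscr{H}^2(\Omega)}$, namely the orthogonal projection in $\mathscr{H}^2(\Omega)$ onto the span of carefully chosen $v_0,\dots,v_{k-1}\in\mathscr{H}^2(\Omega)$ with prescribed jets at $z_{\min}$; this is the key new device for the lower bound and is what replaces the naive Taylor truncation in your sketch.

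For part~(\ref{it.AsymEff+2}) your route diverges substantially from the paper's. The paper does \emph{not} use trial states in $\mathrm{Dom}(\mathscr{D}_h)$ nor Agmon estimates on Dirac eigenfunctions. Instead it introduces a nonlinear min-max quantity
\[
\mu_k(h)=\inf_{\dim W=k}\sup_{u\in W\setminus\{0\}}\frac{h\|u\|^2_{\partial\Omega}+\sqrt{h^2\|u\|^4_{\partial\Omega}+4\|d^\times_{\mathbf{A}}u\|^2\|u\|^2}}{2\|u\|^2}
\]
over $\mathfrak{h}_{\mathbf{A}}(\Omega)=H^1(\Omega)+e^{-\phi/h}\mathscr{H}^2(\Omega)$, shows $\mu_k(h)\sim\lambda^{\mathrm{eff}}_k(h)$ by elliptic projection onto the Hardy kernel, and then proves $\mu_k(h)=\lambda_k^+(h)$ exactly via an auxiliary self-adjoint operator $\mathscr{L}_\lambda$ (associated with $q_\lambda(u)=\|d^\times_{\mathbf{A}}u\|^2+h\lambda\|u\|^2_{\partial\Omega}-\lambda^2\|u\|^2$) and a map $\mathscr{J}_\lambda:u\mapsto(u,\lambda^{-1}d^\times_{\mathbf{A}}u)$ identifying $\ker\mathscr{L}_\lambda$ with $\ker(\mathscr{D}_h-\lambda)$. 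Crucially, Fredholmness of $\mathscr{D}_h-\mu_k(h)$ (which follows from $\mu_k(h)\ll\lambda^+_{\mathrm{ess}}(h)$) is transferred to $\mathscr{L}_\lambda$ and replaces the compact-resolvent hypothesis that would be needed in the bounded case.

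Your strategy has a genuine gap at the upper bound: applying the min-max principle ``variationally to $\mathscr{D}_h^2$'' yields the smallest eigenvalues of $\mathscr{D}_h$ \emph{in absolute value}, not the smallest positive ones. You would need to know a priori that no negative eigenvalue enters the exponentially small window near $0$; the paper only asserts $\lambda_1^-(h)\sim a_0\sqrt h$ under an additional analyticity hypothesis on $\gamma$, so this is not available in general. The nonlinear min-max is precisely the device that isolates the positive spectrum of a non-semibounded Dirac operator. Your Green identity is correct and does underlie the paper's $q_\lambda$, but it characterizes eigenfunctions rather than furnishing a variational principle for trial states; the passage from the identity to a usable Rayleigh quotient is what the $\mathscr{L}_\lambda$ machinery accomplishes. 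Your lower-bound sketch via Agmon decomposition $\psi_1^{(j)}=e^{-\phi/h}u^{(j)}+r^{(j)}$ is plausible in spirit, but you have not explained how to control the boundary trace $\|u^{(j)}\|_{\partial\Omega}$ of the Hardy component in terms of $\|\psi_1^{(j)}\|_{\partial\Omega}$, which is exactly where the paper's elliptic projection estimates (and the closedness of $q_\lambda$ on $\mathfrak{h}_{\mathbf{A}}$) do the work.
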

\begin{remark}
~
	\begin{enumerate}[\rm (i)]
		\item  Theorem \ref{thm.boundstatesDirac} establishes the non-emptyness of the discrete spectrum when $h$ is small enough. A similar question has recently been considered for the Dirichlet-Pauli operator in \cite{BLLTRR23} with some of the ideas from \cite{BLTRS21}: it solved an open problem by P. Duclos and P. Exner (see \cite{Exner12} and the non-exhaustive literature \cite{DE95, OM05, KR14, EK15} about waveguides, sometimes with magnetic fields). In the present article, we also provide the reader with the one-term asymptotics of the smallest positive eigenvalues and not only upper bounds.
		\item Theorem \ref{thm.boundstatesDirac} is an extension of \cite[Theorem 1.12]{BLTRS23} to unbounded and non-convex domains. We underline that some of the geometric quantities are related to the Hardy space on the curved strip $\mathscr{H}^2(\Omega)$ and that the polynomials do not belong to this space, contrary to the case when $\Omega$ is bounded. This is the reason why the constants attached to the Hardy space are written in a way slightly different from \cite[Theorem 1.12]{BLTRS23}. This absence of the polynomials in the Hardy space has important consequences on the proof, see Section \ref{sec.strategy} below.

\item The case of magnetic Dirac operators on annuli has been considered in \cite{lavignebon:hal-04037383,lavigne:hal-03275306}.
\item
When $\gamma$ is analytic and $\kappa$ tends to zero at infinity, we can show that Theorem \ref{thm.ess} remains valid, ensuring that the essential spectrum of the Dirac operator coincides with that of the Dirac operator on the straight strip (see Remark \ref{rem.hypcurvature}). Then, similarly to \cite[Theorem 1.22]{BLTRS23}, where analyticity plays a key role, we can prove that the smallest (in absolute value) negative eigenvalue, denoted by $-\lambda_1^-(h)<0$,  lies inside the spectral gap when $h$ is sufficiently small.

  Moreover, for some $\mathfrak{c}_0>0$, we have
\[\lambda_1^-(h)=a_0\sqrt{h}+h^{\frac32}\mathfrak{c}_0\lambda+o(h^{\frac32})\,,\] 
where the groundstate energy satisfies \[\lambda = \min\left(\lambda_1\left(D_s^2-\frac{\kappa(s)^2}{12(1-\delta\kappa)^2}\right),\lambda_1
\left(
D_s^2-
\frac{\kappa(s)^2}
{12(1+\delta\kappa)^2}\right)\right)<0\,.\]
\item Non-magnetic Dirac operators on tubes with infinite mass boundary condition have recently been considered in \cite{BBKO22, BKO22, K23, LTOBR24}, especially in the thin width limit. Note that this limit is not considered in the present paper, where the domain is fixed and the constant $h$ tends to $0$. The asymptotic problems we encounter are fully bidimensional (at least for positive eigenvalues). This contrasts with the thin tube limit, where the scales at which the normal and tangential operators naturally act differ, leading to simpler one-dimensional models.
\end{enumerate}
\end{remark}

\begin{remark}
	There are explicit expressions for the constants $(d^k_\mathscr{B})_k$ and $(d^k_\mathscr{H})_k$.
	\begin{enumerate}[\rm (i)]
	\item The sequence of $N_{\mathscr{B}}$-orthogonal monic polynomials $(P_m)_{m\geq 0}$ with
\[
	P_m\colon z\mapsto z^m + \sum_{n=0}^{m-1}c_{m,n}z^n\,,m\in\mathbb{N}\,,c_{m,n}\in\mathbb{C}\,,
\]
obtained after a Gram-Schmidt process over the family $(z^k)_{k\geq0}$ satisfies
for $m\geq 0$, $(d^{m+1}_\mathscr{B}) =N_{\mathscr{B}}(P_{m})/m!$,
		\[
			P_m\colon z\mapsto \begin{cases}
				z^m&\text{ if }a=b\,,
				\\
				\left|\frac{b-a}{ab}\right|^{m/2}He_m
					\left(
						z\sqrt{\left|\frac{ab}{b-a}\right|}
					\right)&\text{ if }a\ne b\,,
			\end{cases}
		\]
and
	\[
		N_{\mathscr{B}}(P_m)^2 = \frac{2\pi m! (a + b)^m}{(ab)^{m+\frac{1}{2}}}\,,
	\]
where $a/2,b/2$ are the eigenvalues of $\mathsf{Hess}_{x_{\min}}\phi$ and $He_m$ are the probabilist's Hermite polynomials \cite[\href{https://dlmf.nist.gov/18.3}{Section 18.3}]{NIST:DLMF},
\[
	He_m\colon z\mapsto (-1)^{m}\mathrm {e} ^{z^{2}/2}{\frac {\mathrm {d} ^{m}}{\mathrm {d} z^{m}}}\mathrm {e} ^{-z^{2}/2}
	 = \sum_{l=0}^{\left\lfloor m/2 \right\rfloor}\frac{(-1)^lm!}{2^ll!(m-2l)!}z^{m-2l}\,.
\]
Therefore, we have
\[
(d^k_\mathscr{B})^2 =N_{\mathscr{B}}(P_{k-1})^2/[(k-1)!]^2
 =
  \frac{\pi (B(x_{\rm min}))^{k-1}}{2^{k-1}(k-1)!(\det\,\mathsf{Hess}_{x_{\min}}\phi)^{k-\frac{1}{2}}}\,,k\geq 1
 \,.
\]
The isotropic case $a=b$ is straightforward : $(P_n)_{n\geq 0} = (z^n)_{n\geq 0}$. The anisotropic case is a consequence of \cite{MR1041203} with $A = \min \left(\frac{a}{b}, \frac{b}{a}\right)$ and a change of scale $\tilde z = \left|\frac{b-a}{ab}\right| z$ (see also \cite{MR3935204}). For a general presentation on orthogonal polynomials, see \cite[Section 2.3.4]{MR3289583}.	
	\item For $k\geq 1$,
	\[
		d^k_{\mathscr{B}} = 
			\inf\left\{
				\frac{\|u\|_{\mathscr{H}^2(\Omega)}}{|u^{(k-1)}(z_{\rm min})|}
				\,,
				u\in \mathscr{H}^2(\Omega)\,, 
				\begin{array}{l}
					u^{(j)}(z_{\rm min}) =0\,, \forall j\in\{0,\dots,k-2\}\,,\\
					u^{(k-1)}(z_{\rm min})\ne0
				\end{array}
				\right\}\,.
	\]
	On the unit disk $\mathbb{D}$, the sequence $(z^{k-1})_{k\geq 1}$ realizes the minima and their values are $\left(\frac{\sqrt{2\pi}}{(k-1)!}\right)_{k\geq 1}$. Notice then that the sequence $\left(\Lambda z^{k-1}\right)_{k\geq 1}$ realizes the minima on $\Omega$ where $\Lambda$ is the isometric isomorphism defined by
	\[
		\begin{array}{llll}
			\Lambda\colon &\mathscr{H}^2(\mathbb{D})&\longrightarrow &\mathscr{H}^2(\Omega)
			\\
			&u&\longmapsto&\left[z\mapsto \sqrt{\varphi'(z)}u\circ\varphi(z)\right]\,,
		\end{array}
	\]
	$\varphi$ being a biholomorphism from the unit disk $\mathbb{D}$ to $\Omega$ such that $\varphi(0) = z_{\rm min}$.
	Therefore, we have
	\[
	d^k_\mathscr{H} = \frac{\sqrt{2\pi}}{(k-1)!}|\varphi'(0)|^{k-\frac{1}{2}}\,.
	\]
	\item We have for $k\geq 1$,
	\[
	\left(\frac{d^k_\mathscr{H}}{d^k_\mathscr{B}}\right)^2
	=
	\frac{2^k|\varphi'(0)|^{2k-1}(\det\,\mathsf{Hess}_{x_{\min}}\phi)^{k-\frac{1}{2}}}{  (k-1)!B(x_{\rm min})^{k-1}}.
	\]
	
	\end{enumerate}
\end{remark}

\subsection{Organization and strategy}\label{sec.strategy}
Section \ref{sec.2} is devoted to the proof of Theorem \ref{thm.ess}. In Section \ref{sec.21}, we show that $\mathscr{D}_h$ and $\mathscr{D}_{h,0}$ have the same essential spectrum, see Proposition \ref{prop.esscoin}. To do so, we prove that $\mathscr{D}_h$ is unitarily equivalent to an operator on the straight strip $\Omega_0$ and we use the Weyl criterion. In Section \ref{sec.22}, we study the spectrum of $\mathscr{D}_{h,0}$ be means of the Fourier transform in the longitudinal variable. We get a family of one dimensional Dirac operators $(\mathscr{D}_{h,0,\xi})_{\xi\in\R}$ (which have compact resolvent). For each $\xi$, the spectrum of $\mathscr{D}_{h,0,\xi}$ is made of positive eigenvalues $(\mu_n^+(\xi,h))_{n\geq 1}$ and of negative eigenvalues $(-\mu_n^-(\xi,h))_{n\geq 1}$, which are even function of $\xi$ (see Lemma \ref{lem.even}). Then, we focus on a description of $\mu_1^+(\xi,h)$, which is characterized in Proposition \ref{prop.charact}. This characterization implies an estimate of $\inf_{\xi\in\R}\mu_1^+(\xi,h)$, see Proposition \ref{prop.curves+}, and of the threshold $\lambda^+_{\rm ess}(h) =\inf_{\xi\in\mathbb{R}}\mu_1^+(\xi,h)$, see Corollary \ref{cor.lambdaess+s}. Section \ref{sec.25} is devoted to the estimate of $\inf_{\xi\in\R}\mu_1^-(\xi,h)=\lambda^-_{\mathrm{ess}}(h)$. In Section \ref{sec.3}, we prove Theorem \ref{thm.boundstatesDirac}. Sections \ref{sec.31} (upper bound) and \ref{sec.32} (lower bound) establish Point \eqref{it.AsymEff+}. We emphasize that the polynomials do not belong to the Hardy space on $\Omega$ and that Taylor expansions near $x_{\min}$ have to be replaced by a suitable "Taylor expansion" in the Hardy space $\mathrm{Tayl}_{\mathscr{H}^2(\Omega)}$. More precisely, one has to approximate functions by functions in the Hardy space having the same Taylor expansion at $x_{\min}$, see Notation \ref{not.quasiMode2} and Lemma \ref{lem.tayl2}. Up to this key idea (which actually allows to deal with general unbounded domains), the proof follows then the same steps as in \cite[Section 3]{BLTRS23}. Section \ref{sec.33} is devoted to the proof of Point \eqref{it.AsymEff+2}. We start by introducing some $\mu_k(h)$ in \eqref{eq.mukh}. These numbers will turn to be exactly the $\lambda^+_k(h)$. To check that, one first must check that they do not belong to the essential spectrum when $h$ is small enough, see Proposition \ref{prop.asymptoticmu}. The analysis of the Fredholmness is the key to deal with the fact that $\mathscr{D}_h$ does not have compact resolvent. A crucial point that allows the connection between the $\mu_k(h)$ and the $\lambda^+_k(h)$ is Proposition \ref{prop.J} \eqref{eq.Jiv}, see Section \ref{sec.proof312}.

\section{Estimate of the essential spectrum}\label{sec.2}

\subsection{\texorpdfstring{The operators $\mathscr{D}_h$ and $\mathscr{D}_{h,0}$ share the same essential spectrum}{The operators on the straight and curved waveguides share the same essential spectrum}}\label{sec.21}

For $(s,t) \in \Omega_0$ we set
\[
m(s,t) = 1 - t \kappa(s).
\]
Then we consider on $L^2(\Omega_0;\C)$ the operator $\mathfrak{D}_h$ defined by
\[
\mathfrak{D}_h =
			\frac{\sigma_1}{2}\left(m^{-1}\left(hD_s + t-\frac{t^2\kappa}{2}\right) + \left(hD_s + t-\frac{t^2\kappa}{2}\right)m^{-1}\right)
		+\sigma_2(hD_t)\,,
\]
on the domain
\[
\mathrm{Dom}(\mathfrak{D}_h) = \{\varphi \in H^1(\Omega_0;\C^2) : \varphi_2(s,\pm \delta) = \mp \varphi_1(s,\pm \delta), \forall s \in \R\}.
\]

The following lemma follows from standard arguments (see for instance \cite[Theorem 2.1]{LTOBR24}).
\begin{lemma}\label{lem.Dhunitary}
	The operator $\mathscr{D}_h$ is unitarily equivalent to	$\mathfrak{D}_h$.
\end{lemma}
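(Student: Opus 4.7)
The plan is to construct an explicit unitary operator $U : L^2(\Omega,\mathbb{C}^2) \to L^2(\Omega_0,\mathbb{C}^2)$ and verify $U\mathscr{D}_h U^{-1} = \mathfrak{D}_h$ by direct computation. The transformation $U$ is built from four ingredients: (i) pullback by the tubular diffeomorphism $\Theta$ of \eqref{eq:Diffeo}; (ii) multiplication by $\sqrt{m(s,t)}$ with $m = 1 - t\kappa$, which accounts for the Jacobian of $\Theta$ and makes $U$ isometric between $L^2(\Omega,\mathrm{d}x)$ and $L^2(\Omega_0,\mathrm{d}s\,\mathrm{d}t)$; (iii) a pointwise spinor rotation $R(s) = e^{i\theta(s)\sigma_3/2}$, where $\theta$ is the tangent angle of $\gamma$ (so that $\gamma'(s) = (\cos\theta,\sin\theta)$ and $\theta'(s) = \kappa(s)$), which realigns the Cartesian spinor frame to the Frenet frame; (iv) a scalar gauge factor $e^{ig(s,t)/h}$, chosen so that $\mathbf{A}$ is transformed into the specific primitive $(-t + t^2\kappa/2,\,0)$ of the pulled-back magnetic $2$-form $m\,\mathrm{d}s\wedge\mathrm{d}t$. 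Since $\Omega$ is simply connected, such a smooth bounded $g$ exists.

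Next I would compute $U\mathscr{D}_h U^{-1}$. Using $\partial_s\Theta = m\gamma'$ and $\partial_t\Theta = \mathbf{n}$, the tangential and normal components of $\mathbf{p}-\mathbf{A}$ become, in $(s,t)$-coordinates and after the gauge choice (iv), $m^{-1}(hD_s + t - t^2\kappa/2)$ and $hD_t$ respectively. The rotation $R(s)$ satisfies $R^{-1}\sigma_1 R = \sigma\cdot\gamma'$ and $R^{-1}\sigma_2 R = \sigma\cdot\mathbf{n}$, so it converts $\sigma\cdot(\mathbf{p}-\mathbf{A})$ into a principal part of the form $\sigma_1 m^{-1}(hD_s + t - t^2\kappa/2) + \sigma_2 h D_t$. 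Conjugation by $\sqrt{m}$ in (ii) and commuting $D_s$ through $R(s)$ produce two lower-order terms: a metric contribution $\partial_s \sqrt m/\sqrt m = -\tfrac12 \partial_s m / m$, and a spin-connection term $\tfrac12 \theta' \sigma_3 = \tfrac12 \kappa \sigma_3$. Together they reassemble the tangential piece into the manifestly symmetric form $\tfrac12\sigma_1(m^{-1}(\cdots) + (\cdots)m^{-1})$ appearing in $\mathfrak{D}_h$.

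Finally, I would verify that $U$ maps $\mathrm{Dom}(\mathscr{D}_h)$ bijectively onto $\mathrm{Dom}(\mathfrak{D}_h)$. Since $\Theta$ is a smooth diffeomorphism with $m,m^{-1}\in L^\infty$, and $R(s), e^{ig/h}$ are smooth unitary-valued factors, $H^1$-regularity is preserved in both directions. On $\partial\Omega$ we have $\mathbf{N} = \pm\mathbf{n}$, and the identity $R^{-1}\sigma_2 R = \sigma\cdot\mathbf{n}$ together with $[R,\sigma_3] = 0$ reduces the infinite-mass boundary condition $-i\sigma_3(\sigma\cdot\mathbf{N})\psi = \psi$ to $-i\sigma_3(\pm\sigma_2)\tilde\psi = \tilde\psi$, i.e.\ $\tilde\psi_2(s,\pm\delta) = \mp\tilde\psi_1(s,\pm\delta)$, which is exactly the condition defining $\mathrm{Dom}(\mathfrak{D}_h)$. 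Self-adjointness of $\mathfrak{D}_h$ is then inherited from that of $\mathscr{D}_h$ (established via \cite{ALTR20} and recalled in the paper).

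The main obstacle, though routine, is the bookkeeping in the second step: the spin-connection contribution $\tfrac{\kappa}{2}\sigma_3$ produced by commuting $D_s$ through $R(s)$, the Jacobian derivative $\partial_s\sqrt{m}$, and the $\sigma_2$-component of $\sigma\cdot\mathbf{n}$ rotated back to $\sigma_2$ must cancel in the right way to deliver the symmetric expression $\tfrac12(m^{-1}\cdot + \cdot\, m^{-1})$ without any residual lower-order matrix-valued term. Once this cancellation is checked, the identity $U\mathscr{D}_h U^{-1} = \mathfrak{D}_h$ follows, which is essentially the computation of \cite[Theorem 2.1]{LTOBR24} upgraded by the magnetic gauge factor (iv).
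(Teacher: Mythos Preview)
Your proposal is correct and follows essentially the same route as the paper: both build the unitary from the pullback by $\Theta$, the Jacobian factor $\sqrt{m}$, the spinor rotor $e^{i\sigma_3\theta/2}$, and a final change of gauge on the simply connected $\Omega_0$, and both cite \cite[Theorem 2.1]{LTOBR24} as the underlying computation. The paper applies these four conjugations sequentially and displays the intermediate expressions, whereas you package them into a single $U$ upfront; the key cancellation you flag (spin-connection term $\tfrac{\kappa}{2}\sigma_3$ against the normal-direction contribution $\tfrac{ih\kappa}{2m}$ from the $\sqrt{m}$-conjugation) is exactly the one that occurs in the paper's derivation.
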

\begin{proof}
	Let us describe the action of $\mathscr{D}_h$ in the tubular coordinates $(s,t)$ given by $x=\Theta(s,t)=\gamma(s)+t\mathbf{n}(s)$. Since $\textbf n = \gamma'^\perp$ and $\gamma'' = \kappa \textbf{n}$ we can write ${\rm Jac}\,\Theta(s,t)^{-T} = \begin{pmatrix}m^{-1}\gamma ' (s)\;, \textbf{n}(s)\end{pmatrix}$, and by the chain rule,
	\[
		\Theta^*\nabla_{x}(\Theta^*)^{-1} = \gamma'm^{-1}\partial_s + \textbf{n}\partial_t\,.
	\]
	Let us also consider the new vector potential 
	\[
		\widetilde{\mathbf{A}} = {\rm Jac}\,\Theta(s,t)^{T}(\mathbf{A}\circ\Theta) = \begin{pmatrix}m(A\circ \Theta)\cdot \gamma'\\(A\circ \Theta)\cdot \bf{n}\end{pmatrix}\,,
	\]
	 which satisfies $\mathrm{curl}\,\widetilde{\mathbf{A}}=1-t\kappa(s)$. We obtain
	 \[
	 \Theta^*\mathscr{D}_h (\Theta^*)^{-1} =\Theta^* (\sigma\cdot (\textbf{p}-\textbf{A})) (\Theta^*)^{-1}
	 	= m^{-1}(\sigma\cdot\gamma')(-ih\partial_s - \widetilde A_s)
		+(\sigma\cdot\textbf{n})(-ih\partial_t - \widetilde A_t).
	 \]
This is an equality between unbounded operators on the weighted space $L^2(\Omega_0,m\mathrm{d}s\mathrm{d}t)$. 
After conjugaison by $m^{1/2}$, we obtain that $\mathscr D_h$ is unitarily equivalent to the following operator on $L^2(\Omega_0,m\mathrm{d}s\mathrm{d}t)$.
	 \[\begin{split}
		&m^{-1}(\sigma\cdot\gamma')(-im^{1/2}h\partial_sm^{-1/2} - \widetilde A_s)
		+(\sigma\cdot\textbf{n})(-im^{1/2}h\partial_tm^{-1/2}  - \widetilde A_t)\,,
		\\&=
		m^{-1}(\sigma\cdot\gamma')(-ih\partial_s - \widetilde A_s - \frac{ith\kappa'}{2m})
		+(\sigma\cdot\textbf{n})(-ih\partial_t  - \widetilde A_t - \frac{ih\kappa}{2m})\,.
	 \end{split}\] 
	Let us recall that the Dirac equation is covariant. In particular, if $\gamma'(s) = e^{i\theta}$, then
	\[
		e^{i\sigma_3\theta/2}\sigma\cdot \gamma'e^{-i\sigma_3\theta/2} = \sigma_1\,,
		\quad
		e^{i\sigma_3\theta/2}\sigma\cdot \textbf{n}e^{-i\sigma_3\theta/2} = \sigma_2\,,
	\]
	and
	\[
		e^{i\sigma_3\theta/2}(-ih\partial_s)e^{-i\sigma_3\theta/2}
		=
		(-ih\partial_s) 
		-
		\frac{h\kappa \sigma_3}{2}\,,
	\]
	so that a conjugaison by the rotor $e^{-i\sigma_3\theta/2}$ leads to
		 \[
			m^{-1}\sigma_1(-ih\partial_s - \widetilde A_s - \frac{iht\kappa'}{2m})
		+\sigma_2(-ih\partial_t  - \widetilde A_t)
	 \]
	 which can be rewritten in an explicitly symmetric form as
	\[
			\frac{\sigma_1}{2}\left(m^{-1}(-ih\partial_s - \widetilde A_s) + (-ih\partial_s - \widetilde A_s)m^{-1}\right)
		+\sigma_2(-ih\partial_t  - \widetilde A_t)\,.
	 \]	 
		Note also that by the covariance of the Dirac operator, this operator is equipped with the infinite mass boundary condition on $\Omega_0$. Indeed, we have 
		\[
			e^{i\sigma_3\theta/2}\left(-i\sigma_3\sigma\cdot \textbf{n}\right)e^{ - i\sigma_3\theta/2} = -i\sigma_3\sigma_2 = \sigma_1
		\]
		and
		$\varphi_2(s,\pm\delta)=\mp\varphi_1(s,\pm\delta)$.
	Finally, we have
	\[
		{\rm curl }\begin{pmatrix}
			-t+\frac{t^2\kappa}{2}\\0
		\end{pmatrix} = 1-t\kappa = m = {\rm curl }\tilde A\,,
	\]
	and that $\Omega_0$ is simply connected. Hence, there exists a change of gauge so that $\mathscr{D}_h$ is unitarily equivalent to
	\[
			\frac{\sigma_1}{2}\left(m^{-1}\left(-ih\partial_s + t-\frac{t^2\kappa}{2}\right) + \left(-ih\partial_s + t-\frac{t^2\kappa}{2}\right)m^{-1}\right)
		+\sigma_2(-ih\partial_t)\,.
	 \]

\end{proof}

\begin{proposition}\label{prop.esscoin}
	We have
	\[\mathrm{sp}_{\mathrm{ess}}(\mathscr{D}_h)=\mathrm{sp}_{\mathrm{ess}}(\mathscr{D}_{h,0})\,.\]
\end{proposition}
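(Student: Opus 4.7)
The proof starts from Lemma~\ref{lem.Dhunitary}: since unitary equivalence preserves the essential spectrum, it suffices to prove $\mathrm{sp}_{\mathrm{ess}}(\mathfrak{D}_h)=\mathrm{sp}_{\mathrm{ess}}(\mathscr{D}_{h,0})$. The key geometric observation is that $\kappa$ has compact support, so the weight $m = 1 - t\kappa$ equals $1$ outside $\mathrm{supp}(\kappa)\times[-\delta,\delta]$; comparing the explicit expressions of the two operators shows that they coincide as differential operators on the complement of this compact set, and they share the same infinite-mass boundary condition on $\mathbb{R}\times\{\pm\delta\}$. The plan is therefore to apply the Weyl criterion symmetrically: from a singular sequence of one operator, one localizes away from $\mathrm{supp}(\kappa)$ to produce a singular sequence for the other.

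Fix $\chi\in\mathscr{C}^\infty_c(\mathbb{R})$ with $\chi\equiv 1$ on an open neighborhood of $\mathrm{supp}(\kappa)$. Since $\chi$ depends only on $s$, multiplication by $1-\chi$ preserves both $\mathrm{Dom}(\mathfrak{D}_h)$ and $\mathrm{Dom}(\mathscr{D}_{h,0})$. Let $\lambda\in\mathrm{sp}_{\mathrm{ess}}(\mathfrak{D}_h)$ and pick a singular sequence $(\psi_n)$ for $\mathfrak{D}_h$ at $\lambda$: $\|\psi_n\|=1$, $\psi_n\rightharpoonup 0$ in $L^2$, and $(\mathfrak{D}_h-\lambda)\psi_n\to 0$ in $L^2$. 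The graph-norm/$H^1$ equivalence for Dirac operators with infinite-mass boundary conditions (see \cite{ALTR20}) yields a uniform $H^1$-bound on $(\psi_n)$. Applying Rellich-Kondrachov to the bounded set $\mathrm{supp}(\chi)\times[-\delta,\delta]$ gives $\chi\psi_n\to 0$ in $L^2$, whence $\tilde\psi_n := (1-\chi)\psi_n$ satisfies $\|\tilde\psi_n\|\to 1$ and $\tilde\psi_n\rightharpoonup 0$.

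Since $\tilde\psi_n$ is supported in $\{\chi < 1\}$, hence away from $\mathrm{supp}(\kappa)$, one has $\mathfrak{D}_h\tilde\psi_n = \mathscr{D}_{h,0}\tilde\psi_n$, so
\[
	(\mathscr{D}_{h,0}-\lambda)\tilde\psi_n = (1-\chi)(\mathfrak{D}_h-\lambda)\psi_n - [\mathfrak{D}_h,\chi]\psi_n\,.
\]
The first term tends to $0$ in $L^2$ by assumption. The commutator $[\mathfrak{D}_h,\chi]$ is of order zero; since $\chi'$ is supported outside $\mathrm{supp}(\kappa)$ where $m=1$, a short calculation reduces it to the bounded multiplication $-ih\sigma_1\chi'$, with coefficients compactly supported in $s$. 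A second application of Rellich on $\mathrm{supp}(\chi')\times[-\delta,\delta]$ then yields $[\mathfrak{D}_h,\chi]\psi_n\to 0$ in $L^2$. Consequently $\tilde\psi_n/\|\tilde\psi_n\|$ is a singular sequence for $\mathscr{D}_{h,0}$ at $\lambda$, giving the inclusion $\mathrm{sp}_{\mathrm{ess}}(\mathfrak{D}_h)\subset\mathrm{sp}_{\mathrm{ess}}(\mathscr{D}_{h,0})$. The reverse inclusion is obtained by the very same argument with the roles of $\mathfrak{D}_h$ and $\mathscr{D}_{h,0}$ interchanged. The only delicate step to anticipate is the identification of $[\mathfrak{D}_h,\chi]$, which must exploit that $\chi'$ vanishes on $\mathrm{supp}(\kappa)$ in order to eliminate the non-commutative weight $m^{-1}$ from the symmetrized form of $\mathfrak{D}_h$.
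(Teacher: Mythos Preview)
Your argument is correct, but it follows a different route than the paper. The paper also reduces to $\mathfrak{D}_h$ via Lemma~\ref{lem.Dhunitary}, but then writes $\mathfrak{D}_h=\mathscr{D}_{h,0}+\mathfrak{P}_h$ with $\mathfrak{P}_h=V_h+\tfrac12(W_h hD_s+hD_sW_h)$, where $V_h,W_h$ are bounded multiplication operators with compact support (coming from $m^{-1}-1$ and the $t^2\kappa$ term). It then uses the resolvent formula to show that $(\mathfrak{D}_h+i)^{-1}-(\mathscr{D}_{h,0}+i)^{-1}$ is compact, by factoring each piece as a bounded operator times an operator of the form $W_h(\mathfrak{D}_{h}+i)^{-1}$ or its adjoint, compact by Rellich--Kondrachov. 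Your approach instead manipulates Weyl sequences directly: you localize away from $\mathrm{supp}(\kappa)$ and exploit that the two operators literally coincide there, with the commutator $[\mathfrak{D}_h,\chi]=-ih\sigma_1\chi'$ handled by a second local compactness argument. Both proofs rest on the same two ingredients (graph-norm $H^1$-control and Rellich on the compact region where $\kappa\neq0$); your version is arguably more elementary since it avoids decomposing $\mathfrak{P}_h$ and tracking several operator products, while the paper's resolvent-difference argument is more robust in that it does not rely on the operators agreeing \emph{exactly} outside a compact set, only on the perturbation being relatively compact.
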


\begin{proof}
	Thanks to Lemma \ref{lem.Dhunitary}, we may focus on $\mathfrak{D}_h$. We have
	\[\mathfrak{D}_h=\mathfrak{D}_{h,0}+\mathfrak{P}_h\,,\]
	where $\mathfrak{D}_{h,0} = \mathscr{D}_{h,0} = \sigma_1(hD_s + t)+\sigma_2D_t$ and $\mathfrak{P}_h=V_h+\frac{1}{2}\left(W_h hD_s + hD_sW_h\right)$
	with
	\[
		V_h = \sigma_1\left((m^{-1}-1)t - \frac{t^2\kappa}{2m}\right)
		\,,\quad
		W_h = \sigma_1(m^{-1}-1)\,.
	\]
	Since $\kappa$ is compactly supported, so are $V_h$ and $W_h$.
	Let us explain why $(\mathfrak{D}_h+i)^{-1}-(\mathfrak{D}_{h,0}+i)^{-1}$ is compact. In virtue of the Weyl criterion, this will imply that $\mathrm{sp}_{\mathrm{ess}}(\mathfrak{D}_h)=\mathrm{sp}_{\mathrm{ess}}(\mathfrak{D}_{h,0})$ and thus the conclusion.
	
	We use the resolvent formula to get
	\[\begin{split}(\mathfrak{D}_h+i)^{-1}-(\mathfrak{D}_{h,0}+i)^{-1}&=(\mathfrak{D}_{h,0}+i)^{-1}(\mathfrak{D}_{h,0}-\mathfrak{D}_{h})(\mathfrak{D}_{h}+i)^{-1}\\
		&=-(\mathfrak{D}_{h,0}+i)^{-1}\mathfrak{P}_h(\mathfrak{D}_{h}+i)^{-1}\,.
	\end{split}\]
	Since $\mathfrak{D}_h$ and $\mathfrak{D}_{h,0}$ are self-adjoint, $hD_s(\mathfrak{D}_{h}\pm i)^{-1}$ and $hD_s(\mathfrak{D}_{h,0}\pm i)^{-1}$ are bounded from $L^2(\Omega_0,\mathbb{C}^2)$ to $L^2(\Omega_0,\mathbb{C}^2)$. Their adjoints $(\mathfrak{D}_{h}\pm i)^{-1}hD_s$ and $(\mathfrak{D}_{h,0}\pm i)^{-1}hD_s$ can be extended to become bounded operators from $L^2(\Omega_0,\mathbb{C}^2)$ to $L^2(\Omega_0,\mathbb{C}^2)$.
	Moreover, by Rellich–Kondrachov theorem, $(\mathfrak{D}_{h}\pm i)^{-1}$ and $(\mathfrak{D}_{h,0}\pm i)^{-1}$ are compact from $L^2(\Omega_0,\mathbb{C}^2)$ to $L^2_{\rm loc}(\Omega_0,\mathbb{C}^2)$. Since, $V_h$ and $W_h$ are bounded and compactly supported, we get that $W_h(\mathfrak{D}_{h}\pm i)^{-1}$, $V_h(\mathfrak{D}_{h}\pm i)^{-1}$, $W_h(\mathfrak{D}_{h,0}\pm i)^{-1}$, $V_h(\mathfrak{D}_{h,0}\pm i)^{-1}$ as well as their adjoints are compact operators from $L^2(\Omega_0,\mathbb{C}^2)$ to $L^2(\Omega_0,\mathbb{C}^2)$.
	Therefore, we obtain that
	\[\begin{split}
		(\mathfrak{D}_{h,0}+i)^{-1}hD_sW_h(\mathfrak{D}_{h}+i)^{-1} 
		&= \left[(\mathfrak{D}_{h,0}+i)^{-1}hD_s\right]\left[W_h(\mathfrak{D}_{h}+i)^{-1}\right]\,,\\
		(\mathfrak{D}_{h,0}+i)^{-1}W_hhD_s(\mathfrak{D}_{h}+i)^{-1} 
		&= \left[(\mathfrak{D}_{h,0}+i)^{-1}W_h\right]\left[hD_s(\mathfrak{D}_{h}+i)^{-1}\right] \,,\\
		(\mathfrak{D}_{h,0}+i)^{-1}V_h(\mathfrak{D}_{h}+i)^{-1}
		&=\left[(\mathfrak{D}_{h,0}+i)^{-1}\right]\left[V_h(\mathfrak{D}_{h}+i)^{-1}\right]\,,
	\end{split}\]
	are compact from $L^2(\Omega_0,\mathbb{C}^2)$ to $L^2(\Omega_0,\mathbb{C}^2)$.
	Therefore, $(\mathfrak{D}_h+i)^{-1}-(\mathfrak{D}_{h,0}+i)^{-1}$ is compact and the Weyl criterion gives the equality of the essential spectra.
\end{proof}

\subsection{A fibered family of Dirac operators}\label{sec.22}
By using the semiclassical Fourier transform, we see that
\begin{equation}\label{eq.direct}
\mathscr{D}_{h,0}=\int^\oplus\mathscr{D}_{h,0,\xi}\dd\xi\,,
\end{equation}
with
\[\mathscr{D}_{h,0,\xi}=(\xi+t)\sigma_1	+\sigma_2 D_t=\begin{pmatrix}
0&\xi-h\partial_{t}+t\\
\xi+h\partial_{t}+t&0
\end{pmatrix}\,,\]
with domain 
\[
	\mathrm{Dom}(\mathscr{D}_{h,0,\xi})=\{\psi = (\psi_1,\psi_2)\in H^1(I, \mathbb{C}^2),\psi_1(\pm\delta)=\mp\psi_2(\pm\delta) \}\,,
\]
where $I = (-\delta,\delta)$.

\begin{figure}[htb]
    \centering
    \includegraphics[width=1\textwidth]{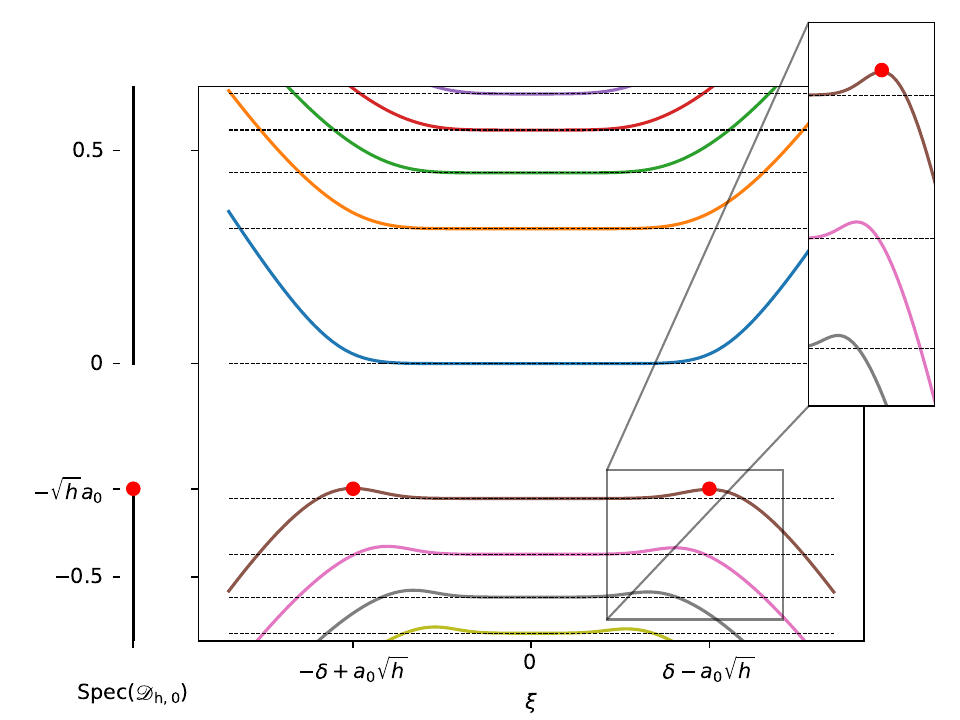}
    \caption{Dispersion curves for $h = 0.05$}
    \label{fig:dispCurves}
\end{figure}
In this section, we outline the standard spectral properties of the operators $(\mathscr{D}_{h,0,\xi})_\xi$. We provide only a brief proof of their invertibility and the symmetries of the associated dispersion curves, the latter being a consequence of the symmetries of $\Omega_0$. The proofs of the other stated properties are omitted, as they are very close to the existing literature (see for instance  \cite[Proposition 4.2]{BLTRS23}). 

The associated dispersion curves are illustrated in Figure \ref{fig:dispCurves}.
\begin{proposition}\label{prop.propDispCurves}
Let $h>0$. The following holds.
\begin{enumerate}[\rm (i)]
	\item\label{pt.prop.propDispCurves1} For $\xi\in\mathbb{R}$, the operator $\mathscr{D}_{h,0,\xi}$ is neither bounded from below nor from above, it is self-adjoint, inversible and has compact resolvent. Its eigenvalues are simple and denoted by
	\[\ldots\leq -\mu_2^-(\xi,h)\leq-\mu_1^-(\xi,h)<0<\mu_1^+(\xi,h)\leq \mu_2^+(\xi,h)\leq \ldots\,.\]
	\item \label{pt.prop.propDispCurves2}  For  $k\in\mathbb{N}\setminus\{0\}$, the map $\xi\mapsto \mu_k^{\pm}(\xi,h)$ is analytic and even.
\end{enumerate}
\end{proposition}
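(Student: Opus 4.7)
My plan is to follow the strategy indicated in the excerpt: defer the standard features of (i) and the analyticity part of (ii) to the analogous statement \cite[Proposition 4.2]{BLTRS23}, and give self-contained arguments only for the invertibility and the evenness of the dispersion curves. The operator $\mathscr{D}_{h,0,\xi}$ is a one-dimensional Dirac operator with smooth coefficients on the bounded interval $I=(-\delta,\delta)$ with a local boundary condition; self-adjointness and the compactness of the resolvent (via the compact embedding $H^1(I)\hookrightarrow L^2(I)$) are then classical, and one obtains discrete spectrum. Simplicity of the eigenvalues follows from the structure of the eigenvalue equation: it is a $2\times 2$ first-order linear ODE system whose solution space is two-dimensional, so that each of the two scalar boundary conditions at $t=\pm\delta$ reduces the eigenspace by one dimension. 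Unboundedness in both directions of the spectrum is a standard Sturm--Liouville-type oscillation statement for $1$D Dirac systems.

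For the invertibility I would solve $\mathscr{D}_{h,0,\xi}\psi=0$ explicitly. The two components decouple into
\[
h\partial_t\psi_1=-(\xi+t)\psi_1\,,\qquad h\partial_t\psi_2=(\xi+t)\psi_2\,,
\]
so that $\psi_1(t)=c_1\,e^{-(\xi t+t^2/2)/h}$ and $\psi_2(t)=c_2\,e^{(\xi t+t^2/2)/h}$. Injecting these into the boundary conditions $\psi_1(\pm\delta)=\mp\psi_2(\pm\delta)$ yields
\[
c_1=-c_2\,e^{(2\xi\delta+\delta^2)/h}\qquad\text{and}\qquad c_1=c_2\,e^{(-2\xi\delta+\delta^2)/h}\,.
\]
Dividing these relations gives $-e^{4\xi\delta/h}=1$, which is impossible. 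Hence $c_1=c_2=0$, so $\ker\mathscr{D}_{h,0,\xi}=\{0\}$; together with the compactness of the resolvent this gives invertibility.

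For (ii), the analyticity of the eigenvalue branches comes from the Kato--Rellich theorem on holomorphic families of type (A): the domain is $\xi$-independent, $\mathscr{D}_{h,0,\xi}$ depends linearly on $\xi$, and simplicity of the eigenvalues has already been established. For evenness I introduce the involutive unitary $U$ on $L^2(I,\mathbb{C}^2)$ defined by $(U\psi)(t)=\sigma_3\psi(-t)$. The factor $\sigma_3=\mathrm{diag}(1,-1)$ is precisely what is needed to exchange the two infinite-mass conditions $\psi_1(\pm\delta)=\mp\psi_2(\pm\delta)$ under $t\mapsto -t$, so that $U$ preserves the domain. Using $\sigma_3\sigma_j=-\sigma_j\sigma_3$ for $j\in\{1,2\}$ together with the sign flip of $\partial_t$ under $t\mapsto -t$, a short direct computation gives
\[
U\,\mathscr{D}_{h,0,\xi}\,U\;=\;\mathscr{D}_{h,0,-\xi}\,,
\]
so that the spectra (and hence the ordered branches $\mu_k^{\pm}$) coincide for $\xi$ and $-\xi$.

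The only delicate point is bookkeeping: the signs in the anticommutation relations of the intertwining computation, and checking that $U$ swaps the two boundary conditions at $\pm\delta$ without extra signs. Once these are correctly handled, both of the arguments required by the excerpt are indeed short.
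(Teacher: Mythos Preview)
Your proposal is correct and follows essentially the same route as the paper. The unitary you use for evenness differs from the paper's $S:\psi\mapsto i\sigma_3\psi(-\cdot)$ only by the harmless scalar factor $i$. For invertibility the paper takes a slightly slicker path: instead of writing down the explicit exponential solutions, it pairs the two first-order equations via $\langle(\xi-h\partial_t+t)\psi_2,\psi_1\rangle=0$ and integrates by parts to force $\psi_1(\pm\delta)=\psi_2(\pm\delta)=0$, after which the first-order ODEs give $\psi=0$. Your explicit computation is equally valid and arguably more transparent; the paper's version has the mild advantage of not requiring any formula for the solutions.
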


\subsubsection{Proof of the symmetry of the dispersion curves}

To prove that the dispersion curves are even, note that $\Omega_0$ is left stable by the point symmetry around $0$. The covariance of the Dirac operator is expressed on the fibered operators in the following lemma ($i\sigma_3$ being the rotor associated with the symmetry).
\begin{lemma}\label{lem.even}
Considering the unitary transformation $S : \psi\mapsto i\sigma_3\psi(-\cdot)$, we have, for all $\xi\in\R$, $S\mathrm{Dom}(\mathscr{D}_{h,0,\xi}) = \mathrm{Dom}(\mathscr{D}_{h,0,\xi})$ and
\[S^{*}\mathscr{D}_{h,0,\xi}S=\mathscr{D}_{h,0,-\xi}\,.
\]	
In particular, we have 
\[\mu^\pm_k(\xi,h)=\mu^\pm_k(-\xi,h)\,.
\]
\end{lemma}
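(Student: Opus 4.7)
\emph{Proof proposal.} This is a direct algebraic verification exploiting the invariance of $\Omega_0 = \R\times I$ under the reflection $t\mapsto -t$ together with the covariance of the Dirac operator under such a reflection (the matrix $i\sigma_3$ being the $\mathrm{Spin}(2)$-lift of the point symmetry). My plan has three steps: check that $S$ is unitary on $L^2(I,\C^2)$ and preserves $\mathrm{Dom}(\mathscr{D}_{h,0,\xi})$; compute $S^{*}\mathscr{D}_{h,0,\xi}S$ directly; then deduce the evenness of the dispersion curves from the resulting unitary equivalence.

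For unitarity, the self-adjointness of $\sigma_3$ together with $\sigma_3^2 = I$ and the fact that the pullback by $t\mapsto -t$ is an isometry of $L^2(I)$ (since $I=(-\delta,\delta)$ is symmetric about $0$) immediately gives $S^{*}S = I$, and a one-line computation yields $S^{-1}\psi(t) = -i\sigma_3\psi(-t)$. For the domain, I would expand $(S\psi)(t) = (i\psi_1(-t),-i\psi_2(-t))$ and check the infinite mass condition at each endpoint: the sign flip produced by the endpoint swap $\pm\delta\mapsto\mp\delta$ in $\psi_1(\pm\delta) = \mp\psi_2(\pm\delta)$ is exactly compensated by the sign flip that $\sigma_3$ produces on the second component.

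The heart of the argument is the conjugation computation. The algebraic inputs are the Pauli identities $\sigma_1\sigma_3 = -i\sigma_2$ and $\sigma_2\sigma_3 = i\sigma_1$, together with the anticommutation of $\partial_t$ with the reflection $t\mapsto -t$. Applying $\mathscr{D}_{h,0,\xi}$ to $S\psi$ produces terms of the form $(\xi+t)\sigma_1\sigma_3\psi(-t)$ and $h\sigma_2\sigma_3\psi'(-t)$; multiplying by $-i\sigma_3$ on the left and evaluating at $-t$ (which is precisely the action of $S^{-1}$) then turns these into $(-\xi+t)\sigma_1\psi(t)$ and $\sigma_2 D_t\psi(t)$, reproducing $\mathscr{D}_{h,0,-\xi}\psi(t)$. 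The only real obstacle is sign bookkeeping: four separate sources of signs (the substitution $t\mapsto -t$ acting on $\xi+t$, the anticommutation $D_tR = -RD_t$, the Pauli anticommutation $\sigma_1\sigma_3 = -\sigma_3\sigma_1$, and the cross product $\sigma_2\sigma_3 = i\sigma_1$) must combine coherently so that the final result flips only the sign of $\xi$ while leaving $t$ and the derivative term untouched.

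Once $S^{*}\mathscr{D}_{h,0,\xi}S = \mathscr{D}_{h,0,-\xi}$ is in hand, unitary equivalence immediately gives $\mathrm{sp}(\mathscr{D}_{h,0,\xi}) = \mathrm{sp}(\mathscr{D}_{h,0,-\xi})$. Since the eigenvalues are simple and ordered as in Proposition \ref{prop.propDispCurves}\eqref{pt.prop.propDispCurves1}, this equality of spectra forces the ordered labels to match term by term, yielding $\mu_k^\pm(\xi,h) = \mu_k^\pm(-\xi,h)$ for every $k\in\N^*$ and every $\xi\in\R$.
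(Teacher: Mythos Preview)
Your proposal is correct and follows exactly the approach the paper intends: the paper does not actually write out a proof of this lemma beyond the remark that $\Omega_0$ is stable under the point symmetry and that $i\sigma_3$ is the associated rotor, so what you have written is precisely the routine verification the authors omit. Your sign bookkeeping (anticommutation of $\sigma_1,\sigma_2$ with $\sigma_3$, the reflection on $D_t$, and the endpoint swap in the boundary condition) is accurate, and invoking Proposition~\ref{prop.propDispCurves}\eqref{pt.prop.propDispCurves1} for the ordering is legitimate since that item is established independently of the evenness.
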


\begin{remark}\label{rem.charge}
Note that the \emph{charge conjugation} for the bidimentionnal Dirac operator is $C\colon \psi \mapsto \sigma_1\overline{\psi}$. For the fibered ones, it reduces to a multiplication by $\sigma_1$ and to a change of sign for $\xi$. This leaves the domain $\mathrm{Dom}(\mathscr{D}_{h,0,\xi})$ stable. The operator $\mathscr{D}_{h,0,\xi}$ is transformed 
 into $-\left((\xi-t)\sigma_1+\sigma_2 D_t\right)$ (the magnetic field has opposite sign). 

\end{remark}

\subsubsection{Proof of the invertibility of the operators}
Since the spectrum is discrete, it is sufficient to consider the equation $\mathscr{D}_{h,0,\xi}\psi=0$. Let $\psi \in \ker(\mathscr D_{h,0,\xi})$. We have
\[(\xi-h\partial_{t}+t)\psi_2=0\,,\quad (\xi+h\partial_{t}+t)\psi_1=0\,.\] 	
Then,
\[\langle(\xi-h\partial_{t}+t)\psi_2,\psi_1\rangle=0\,,\]
and, by integration by parts,
\[\langle(\xi-h\partial_{t}+t)\psi_2,\psi_1\rangle=-h(\psi_2(\delta)\overline{\psi_1}(\delta)-\psi_2(-\delta)\overline{\psi_1}(-\delta))=0\,,\]
so that, using the boundary condition,  $\psi_1(\pm\delta)=\psi_2(\pm\delta)=0$. Thus, $\psi_{1}, \psi_2$ vanish at the boundary and solve first order linear ordinary differential equations, therefore $\psi = 0$.

\subsection{A characterization of the eigenvalues}
The next sections will focus on a more detailed and quantitative analysis of the dispersion curves. The key tool for this analysis is the nonlinear min-max characterization developed in \cite{BLTRS23}, adapted to our specific context (see \cite{BLTRS23} for references on min-max characterizations for eigenvalues of Dirac operator in domains without boundaries). This theory was primarily established in \cite[Section 2]{BLTRS23} for two-dimensional magnetic Dirac operators on bounded domains. In the simpler setting of the one-dimensional magnetic Dirac operator on the half-line \cite[Section 4]{BLTRS23}, similar results were obtained using the same arguments. These arguments can also be applied to Dirac operators on a bounded interval. Note also that Section \ref{sec.33} is dedicated to similar characterization arguments in the two-dimensional case of the strip, where the presence of essential spectrum complicates the proof.

The results for this setting are summarized in the following proposition.
To simplify the statement of the proposition, let us introduce the following notations.
\begin{notation}
Let $h>0$, $\lambda\geq 0$ and $\xi\in\R$.
	\begin{enumerate}
		\item
			For $\psi\in H^1(I)$, we define
	\[\mathscr{Q}^\pm_{\lambda,\xi,h}(\psi):=\|(\pm h\partial_t+\xi+t)\psi\|^2+\lambda h\|\psi\|^2_{\partial I}-\lambda^2\|\psi\|^2\,,\]
and $\ell_1^\pm(\lambda,\xi,h) = \inf_{\underset{\|\psi\|=1}{\psi\in H^1(I)}}\mathscr{Q}^\pm_{\lambda,\xi,h}(\psi)$. 
		\item For $\psi\in H^1(I)\setminus\{0\}$, we define
		\[
		\rho^\pm_{\xi,h}(\psi ) := \frac{h\|\psi\|^2_{\partial I}+\sqrt{h^2\|\psi\|^4_{\partial I}+4\|\psi\|^2\|(\pm h\partial_t+t+\xi)\psi\|^2}}{2\|\psi\|^2}\,.
		\]
	\end{enumerate}
\end{notation}
\begin{remark}
Let $h>0$, $\lambda\geq 0$ and $\xi\in\R$.
\begin{enumerate}
\item
 The operator associated to $\mathscr{Q}^\pm_{\lambda,\xi,h}$ is of compact resolvent and $\ell_1^\pm(\lambda,\xi,h)$ represents the corresponding lowest eigenvalue. 
 \item For $\psi\in H^1(I)\setminus\{0\}$, the mapping $\lambda\mapsto \mathscr{Q}^\pm_{\lambda,\xi,h}(\psi)$ is a quadratic polynomial with two roots : one that is non-positive and another, denoted by $\rho^\pm_{\xi,h}(\psi )$ which is positive.
 \end{enumerate}
\end{remark}
\begin{proposition}\label{prop.charact}
The following two characterizations of $\mu^\pm_1(\xi, h)$ hold.
\begin{enumerate}[\rm (i)]
\item We have
\begin{equation*}\label{eq.muroot}
\mu^\pm_1(\xi,h)=\min_{\psi\in H^1(I)\setminus\{0\}}\rho^\pm_{\xi,h}(\psi )\,.
\end{equation*}	
\item
$\mu^\pm_1(\xi,h)$ is the unique positive solution $\lambda$ of
\[\ell^\pm_1(\lambda,\xi,h)=0\,.\]
Moreover, we have
$\ell^\pm_1(\lambda,\xi,h)>0$ for all $\lambda\in(0,\mu^\pm_1(\xi,h))$ and $\ell^\pm_1(\lambda,\xi,h)<0$ for all $\lambda\in(\mu^\pm_1(\xi,h),+\infty)$.
\end{enumerate}
In addition, we have for $\lambda>0$,
\begin{equation}\label{eq.cont23}
|\ell^\pm_1(\lambda,\xi,h)|\geq \lambda|\mu_1^\pm(\xi,h)-\lambda|\,.	
\end{equation}
\end{proposition}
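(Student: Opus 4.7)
The plan is to adapt the nonlinear min-max framework of \cite[Section 4]{BLTRS23} to the one-dimensional fibered Dirac operator $\mathscr{D}_{h,0,\xi}$ on the bounded interval $I=(-\delta,\delta)$. I treat the $+$ case in detail; the $-$ case follows by symmetry (e.g.\ via the charge conjugation of Remark \ref{rem.charge}). Setting $L_\pm=\pm h\partial_t+\xi+t$, the central observation is a two-way correspondence: if $(\psi_1,\psi_2)$ is an eigenspinor of $\mathscr{D}_{h,0,\xi}$ with eigenvalue $\lambda>0$, the Dirac equations read $L_+\psi_1=\lambda\psi_2$ and $L_-\psi_2=\lambda\psi_1$, so that $L_-L_+\psi_1=\lambda^2\psi_1$ in $I$, while the infinite-mass conditions $\psi_1(\pm\delta)=\mp\psi_2(\pm\delta)$ turn into the Robin-type conditions $L_+\psi_1(\pm\delta)=\mp\lambda\psi_1(\pm\delta)$. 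Conversely, any $H^1(I)$-solution of this Robin boundary-value problem lifts via $(\psi,\lambda^{-1}L_+\psi)$ to a Dirac eigenspinor with eigenvalue $\lambda$.

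For (ii), I would first identify $\ell_1^+(\lambda,\xi,h)$ as the lowest eigenvalue of the operator associated with $\mathscr{Q}^+_{\lambda,\xi,h}$ on $H^1(I)$. An integration by parts in $\|L_+\psi\|^2$ shows that this operator is the Schrödinger-type operator $T_\lambda^+=-h^2\partial_t^2+(\xi+t)^2-h-\lambda^2$ equipped with Robin conditions exactly equivalent to $L_+\psi(\pm\delta)=\mp\lambda\psi(\pm\delta)$; it has compact resolvent via Rellich. The map $\lambda\mapsto\ell_1^+(\lambda,\xi,h)$ is concave (as an infimum of downward quadratics in $\lambda$), nonnegative at $\lambda=0$, and tends to $-\infty$ as $\lambda\to\infty$, so it has a unique positive zero $\tilde\mu$ with the sign pattern announced in (ii). At $\lambda=\tilde\mu$, the ground state of $T_{\tilde\mu}^+$ lifts by the above correspondence to a Dirac eigenspinor of eigenvalue $\tilde\mu$, so $\mu_1^+(\xi,h)\leq\tilde\mu$. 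Reciprocally, injecting the first component $\psi_1$ of the Dirac ground state into the integration-by-parts identity
\[
\|L_+\psi_1\|^2=\langle\psi_1,L_-L_+\psi_1\rangle+h\bigl[\psi_1\,\overline{L_+\psi_1}\bigr]_{-\delta}^{\delta}=(\mu_1^+)^2\|\psi_1\|^2-\mu_1^+\,h\,\|\psi_1\|_{\partial I}^2
\]
(using $L_+\psi_1(\pm\delta)=\mp\mu_1^+\psi_1(\pm\delta)$) gives $\mathscr{Q}^+_{\mu_1^+,\xi,h}(\psi_1)=0$, whence $\ell_1^+(\mu_1^+,\xi,h)\leq 0$ and $\tilde\mu\leq\mu_1^+$, yielding $\tilde\mu=\mu_1^+(\xi,h)$.

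Part (i) then follows directly: for any $\psi\in H^1(I)\setminus\{0\}$, $\rho^+_{\xi,h}(\psi)$ is the positive root of the downward parabola $\lambda\mapsto\mathscr{Q}^+_{\lambda,\xi,h}(\psi)$, and must satisfy $\rho^+_{\xi,h}(\psi)\geq\mu_1^+(\xi,h)$, since otherwise $\mathscr{Q}^+_{\mu_1^+,\xi,h}(\psi)<0$ would contradict $\ell_1^+(\mu_1^+,\xi,h)=0$; equality is attained at $\psi_1$. For the quantitative bound \eqref{eq.cont23}, I would exploit the factorization
\[
\mathscr{Q}^+_{\lambda,\xi,h}(\psi)=-\|\psi\|^2\bigl(\lambda-\rho^+_{\xi,h}(\psi)\bigr)\bigl(\lambda-\rho^-_{\xi,h}(\psi)\bigr),
\]
with $\rho^-_{\xi,h}(\psi)\leq 0\leq\mu_1^+\leq\rho^+_{\xi,h}(\psi)$. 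For $\lambda\in(0,\mu_1^+)$, every unit $\psi$ gives $(\rho^+-\lambda)(\lambda-\rho^-)\geq(\mu_1^+-\lambda)\lambda$, minorizing $\ell_1^+$; for $\lambda>\mu_1^+$, evaluating at $\psi_1$ (where $\rho^+(\psi_1)=\mu_1^+$) produces $-\mathscr{Q}^+_{\lambda,\xi,h}(\psi_1)/\|\psi_1\|^2\geq\lambda(\lambda-\mu_1^+)$, majorizing $-\ell_1^+$.

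The main technical obstacle is the exact matching of boundary conditions in the correspondence underlying (ii): one must verify that the Robin constants yielded by integration by parts in $\mathscr{Q}^+_{\lambda,\xi,h}$---and in particular the specific weight $\lambda h\|\psi\|^2_{\partial I}$ in the form---coincide on the nose with those imposed on $\psi_1$ by the infinite-mass conditions through the substitution $\psi_2=\lambda^{-1}L_+\psi_1$. This alignment is precisely what makes the nonlinear min-max apparatus the natural one for the Dirac eigenvalue problem at hand.
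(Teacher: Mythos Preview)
Your approach coincides with the paper's: both exploit the two-way correspondence between Dirac eigenspinors and the Robin problem for the first component, and both identify $\mu_1^\pm$ as the unique positive zero of $\lambda\mapsto\ell_1^\pm(\lambda,\xi,h)$. For \eqref{eq.cont23} the paper uses the monotonicity $\lambda_1^{-1}\ell_1^\pm(\lambda_1)-\lambda_2^{-1}\ell_1^\pm(\lambda_2)\geq\lambda_2-\lambda_1$ (for $0<\lambda_1<\lambda_2$) in place of your direct factorization, but the two arguments are equivalent and equally short.

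There is one small gap in your step (ii). You assert that concavity, $\ell_1^+(0,\xi,h)\geq 0$, and $\ell_1^+\to-\infty$ together yield a unique positive zero. But in fact $\ell_1^+(0,\xi,h)=0$ exactly (the Gaussian $t\mapsto e^{-(\xi+t)^2/(2h)}$ spans $\ker L_+$ in $H^1(I)$), and a concave function vanishing at $0$ and tending to $-\infty$ may well be strictly negative on all of $(0,\infty)$; in that scenario there is no positive zero to lift, and your chain $\mu_1^+\leq\tilde\mu\leq\mu_1^+$ never gets started. The paper closes this by first observing that $\nu_1^+:=\inf_{\psi}\rho^+_{\xi,h}(\psi)>0$: the infimum is attained (compact resolvent on the bounded interval $I$), and a minimizer with $\rho^+=0$ would have to lie in $\ker L_+$ \emph{and} vanish on $\partial I$, which is impossible. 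This immediately gives $\ell_1^+(\lambda)>0$ on $(0,\nu_1^+)$ and hence the existence of the positive zero. Once you insert this observation (which you implicitly rely on again in part (i)), your argument is complete.
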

\begin{remark}\label{rem.proccomm}
The previous proposition guarantees that the following two procedures yield the same constant, which corresponds to the Dirac eigenvalue $\mu^\pm_1(\xi,h)$.
\begin{enumerate}[\rm (a)]
\item First, minimize $\mathscr{Q}^\pm_{\lambda,\xi,h}$ over normalized $H^1$ function to get $\ell^\pm_1(\lambda,\xi,h)$, and then find the positive root $\lambda>0$ of $\ell^\pm_1(\lambda,\xi,h)$.
\item First, find the root $\rho^\pm_{\xi,h}(\psi )$ of $\mathscr{Q}^\pm_{\lambda,\xi,h}(\psi)$ for any normalized $\psi\in H^1(I)$, and then, find the infimum $\lambda$ of the roots.
\end{enumerate}
\end{remark}
\begin{proof}
Firstly, note that since we are considering a compact set, all infima are attained as minima. 
\begin{enumerate}[\rm a)]
\item 
Let us begin by proving that the two procedures of Remark \ref{rem.proccomm} yield the same constant.
We have that $\ell_1^\pm(0,\xi,h) = 0$, the function $\lambda\mapsto \ell_1^\pm(\lambda,\xi,h)$ is concave, and
\[
\limsup_{\lambda\to+\infty}\ell_1^\pm(\lambda,\xi,h)\leq \limsup_{\lambda\to+\infty} \left(-\lambda^2 + \inf_{\underset{\|\psi\|=1}{\psi\in H^1_0(I)}}\|(\pm h\partial_t+\xi+t)\psi\|^2\right) = -\infty.
\]
Moreover, the constant 
\begin{equation*}
\nu^\pm_1(\xi,h):=\min_{\psi\in H^1(I)\setminus\{0\}}\rho^\pm_{\xi,h}(\psi )\,,
\end{equation*}	
is positive.
By definition, for any normalized $\psi\in H^1$, we have $\nu^\pm_1(\xi,h)\leq \rho^\pm_{\xi,h}(\psi )$ so that the quadratic polynomial $\lambda\mapsto \mathscr{Q}^\pm_{\lambda,\xi,h}(\psi)$ is positive on $(0, \nu^\pm_1(\xi,h))\subset (0,\rho^\pm_{\xi,h}(\psi ))$. Therefore, for all $\lambda \in (0, \nu^\pm_1(\xi,h))$, we have
\[
	\ell_1^\pm(\lambda,\xi,h) = \inf_{\underset{\|\psi\|=1}{\psi\in H^1(I)}}\mathscr{Q}^\pm_{\lambda ,\xi,h}(\psi)> 0.
\] 
The inequality is strict because infimum is attained as a minimum. 
Now, let $\psi\in H^1$ be a normalized minimizer of $\nu^\pm_1(\xi,h)$. The quadratic polynomial $\lambda\mapsto \mathscr{Q}^\pm_{\lambda,\xi,h}(\psi)$ is negative on $(\nu^\pm_1(\xi,h),+\infty)$, so that for all $\lambda>\nu^\pm_1(\xi,h)$, 
\[
	\ell_1^\pm(\lambda,\xi,h)\leq \mathscr{Q}^\pm_{\lambda ,\xi,h}(\psi)<0\,.
\]
 This proves that $\nu^\pm_1(\xi,h)$ is the unique positive root to the equation $\ell_1^\pm(\lambda,\xi,h) = 0$.
\item Let $0<\lambda_1<\lambda_2$ and $\psi\in H^1$ be a normalized function. We have
\[\begin{split}
&\left(\lambda_1^{-1} \mathscr{Q}^\pm_{\lambda_1 ,\xi,h}(\psi) - \lambda_2^{-1} \mathscr{Q}^\pm_{\lambda_2 ,\xi,h}(\psi)\right)
\\&= \frac{\lambda_2-\lambda_1}{\lambda_1\lambda_2}\|(\pm h\partial_t+\xi+t)\psi\|^2
+(\lambda_2-\lambda_1)\|\psi\|^2
\\&\geq (\lambda_2-\lambda_1)\|\psi\|^2\,,
\end{split}\]
so that $\lambda^{-1}_1\ell_1^\pm(\lambda_1,\xi,h)-\lambda^{-1}_2\ell_1^\pm(\lambda_2,\xi,h)\geq \lambda_2-\lambda_1$. Let $\lambda>0$. Taking $(\lambda_1,\lambda_2) = (\lambda, \nu^\pm_1(\xi,h))$  and $(\lambda_1,\lambda_2) = (\nu^\pm_1(\xi,h), \lambda)$ in the previous inequality implies that
\begin{equation*}
|\ell^\pm_1(\lambda,\xi,h)|\geq \lambda|\nu_1^\pm(\xi,h)-\lambda|\,.	
\end{equation*}
\item Let us prove now that $\nu^+_1(\xi,h)$ is also the lowest positive eigenvalue of the Dirac operator. The same arguments apply for the first negative eigenvalue. 
Let $\varphi = (\varphi_1,\varphi_2)\in \mathrm{Dom}(\mathscr{D}_{h,0,\xi})$ be an eigenfunction of the operator $\mathscr{D}_{h,0,\xi}$ associated with the smallest positive eigenvalue $\lambda := \mu_1^+(\xi,h)$. Then, we have
\begin{equation}\label{eq.eig}
\begin{array}{l}
	(\xi+t +h\partial_t)\varphi_1 = \lambda \varphi_2\,,
	\\(\xi+t -h\partial_t)\varphi_2 = \lambda \varphi_1\,.
\end{array}
\end{equation}
Additionally, by integrating by parts, we obtain
\[\begin{split}
	\|(\xi+t +h\partial_t)\varphi_1\|^2 
	&= \braket{(\xi+t +h\partial_t)\varphi_1,\lambda\varphi_2}_{L^2(I)} 
	\\
	&= \lambda\braket{\varphi_1,(\xi+t -h\partial_t)\varphi_2}_{L^2(I)} + h\lambda[\braket{\varphi_1,\varphi_2}_{\C}]^\delta_{-\delta}
	\\
	&= \lambda^2\|\varphi_1\|^2 - h\lambda\|\varphi_1\|^2_{\partial I}\,.
\end{split}\]
For the last equality, the boundary condition was used. This shows that 
\[
\ell_1^+(\lambda,\xi,h)\leq \mathscr{Q}^+_{\lambda,\xi,h}(\varphi_1) = 0\,,
\]
so that $\nu^+_1(\xi,h)\leq \lambda =\mu_1^+(\xi,h)$. 
The conclusion follows once we prove that $\nu^+_1(\xi,h)$ is an eigenvalue of the Dirac operator. To proceed, let $\overline{\lambda} = \nu^+_1(\xi,h)$ and let $\varphi_1\in H^1(I)$ be a normalized minimizer of $\ell_1^+(\overline{\lambda},\xi,h) = 0$. The function $\varphi_1$ satisfies
\[\begin{array}{l}
	\left(
		-h^2\partial_t + (\xi+t)^2 - h - \overline{\lambda}^2
	\right) \varphi_1 = 0 \text{ on }I\,,
	\\
	\overline{\lambda}^{-1}\left(
		h\partial_t+\xi+t
	\right)\varphi_1(\pm \delta) = \mp\varphi_1(\pm \delta)\text{ on }\partial I\,.
\end{array}\]
Let us remark that $\varphi_1\in H^2(I)$ and that $\varphi := (\varphi_1, \overline{\lambda}^{-1}(h\partial_t+\xi+t)\varphi_1)$ is an eigenfunction of the Dirac operator associated with the eigenvalue $\overline{\lambda}$. This concludes the proof.

\end{enumerate}
\end{proof}

\subsection{\texorpdfstring{Study of $\mu^+_1(\xi,h)$}{Study of the first positive dispersion curve}}

The following proposition presents some properties of the first positive dispersion curve.
\begin{proposition}\label{prop.curves+}
		\begin{enumerate}[\rm (a)]
			\item \label{pt.propdisp1}For all $\xi$, we have $\mu^+_1(\xi,h) \leq \nu_1(\xi,h)$\,,
			with 
			\[
				\nu_1(\xi,h):=\frac{h(e^{-\frac{(\xi-\delta)^2}{h}}+e^{-\frac{(\xi+\delta)^2}{h}})}{\int_{-\delta}^\delta e^{-(\xi+t)^2/h}\dd t}\,.
			\]
			\item \label{pt.propdisp2}
			$
				\inf_{\xi\in\mathbb{R}}\nu_1(\xi,h) = \nu_1(0,h)=2\sqrt{\frac{h}{\pi}}e^{-\frac{\delta^2}{h}}(1+o_{h\to0}(1))\,.
			$
			\item \label{pt.propdisp3} For all $|\xi|>\delta$, 
\[\mu_1^+(\xi,h)\geq |\xi|-\delta-\frac{h}{|\xi|-\delta}\,.\]
			\item \label{pt.propdisp4} For all $\xi$ such that $\mu_1^+(\xi,h) \leq  2\sqrt{h}e^{-\delta^2/h}$ we have
			\[
				\mu_1^+(\xi,h)\geq \nu_1(\xi,h)\left(1+o_{h\to0}(1)\right)
			\]
			with $o_{h\to0}(1)$ uniform in $\xi$.
		\end{enumerate}
\end{proposition}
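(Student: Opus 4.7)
The plan is organized part by part. For (a), I would apply the min--max characterization $\mu^+_1(\xi,h)=\min_\psi\rho^+_{\xi,h}(\psi)$ from Proposition~\ref{prop.charact}\,(i) with the Gaussian test function $\tilde\psi(t)=e^{-(\xi+t)^2/(2h)}$, which identically satisfies $(h\partial_t+\xi+t)\tilde\psi=0$; the quotient $\rho^+_{\xi,h}(\tilde\psi)$ then collapses to $h\|\tilde\psi\|^2_{\partial I}/\|\tilde\psi\|^2$, which is exactly $\nu_1(\xi,h)$. For (b), a direct rewriting in hyperbolic form yields
\[
\nu_1(\xi,h)=\frac{2h\,e^{-\delta^2/h}\cosh(2\xi\delta/h)}{\int_{-\delta}^{\delta}e^{-t^2/h}\cosh(2\xi t/h)\,dt},
\]
the $\sinh$ contribution in the denominator dropping out by oddness. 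The pointwise bound $\cosh(2\xi t/h)\leq\cosh(2\xi\delta/h)$ on $[-\delta,\delta]$ then produces $\nu_1(\xi,h)\geq\nu_1(0,h)$, and Laplace's method on $\int_{-\delta}^{\delta}e^{-t^2/h}\,dt=\sqrt{\pi h}(1+O(e^{-\delta^2/h}))$ delivers the asymptotic.

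For (c), by Lemma~\ref{lem.even} I may assume $\xi>\delta$. Taking a normalized eigenfunction $\psi=(\psi_1,\psi_2)$ at the eigenvalue $\mu=\mu^+_1(\xi,h)$ and writing $\mu^2=\|\mathscr{D}_{h,0,\xi}\psi\|^2=\|(\xi+t+h\partial_t)\psi_1\|^2+\|(\xi+t-h\partial_t)\psi_2\|^2$, integration by parts in each summand produces boundary contributions $\pm h(\xi\pm\delta)|\psi_j(\pm\delta)|^2$ with opposite signs in $j$, so they cancel exactly thanks to the infinite mass condition $|\psi_2(\pm\delta)|=|\psi_1(\pm\delta)|$. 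What remains is
\[
\mu^2=h^2\|\psi'\|^2+\|(\xi+t)\psi\|^2+h(\|\psi_2\|^2-\|\psi_1\|^2)\geq(\xi-\delta)^2-h,
\]
and the claim follows from the elementary inequality $\sqrt{a^2-h}\geq a-h/a$ valid for $a\geq\sqrt{h}>0$ (the bound being trivial when $(\xi-\delta)^2\leq h$, since then the right-hand side is nonpositive).

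The bulk of the work lies in (d). Let $\psi_0\in H^1(I)$, $\|\psi_0\|=1$, be a minimizer of $\rho^+_{\xi,h}$, so that $\mathscr{Q}^+_{\mu,\xi,h}(\psi_0)=0$ with $\mu=\mu^+_1(\xi,h)$. I would decompose $\psi_0=c\tilde\psi+\beta$ orthogonally in $L^2(I)$ with $\beta\perp\tilde\psi$; since $(h\partial_t+\xi+t)\tilde\psi=0$, this gives $\|(h\partial_t+\xi+t)\beta\|^2=\mu^2-\mu h\|\psi_0\|^2_{\partial I}\leq\mu^2$. A semiclassical analysis of the nonnegative operator $-h^2\partial_t^2+(\xi+t)^2-h$ on $I$ with the natural Robin boundary condition (whose ground state is $\tilde\psi$ at eigenvalue $0$) furnishes a spectral gap of order $h$, and the hypothesis $\mu\leq 2\sqrt{h}\,e^{-\delta^2/h}$ then forces $\|\beta\|^2=O(e^{-2\delta^2/h})$ and hence $|c|^2\|\tilde\psi\|^2=1-o(1)$. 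Semiclassical trace inequalities control $\|\beta\|^2_{\partial I}$ by $O(h^{-1}e^{-2\delta^2/h})$, which stays negligible in front of $|c|^2\|\tilde\psi\|^2_{\partial I}$; thus $\|\psi_0\|^2_{\partial I}\geq|c|^2\|\tilde\psi\|^2_{\partial I}(1-o(1))$, and the elementary lower bound $\mu\geq h\|\psi_0\|^2_{\partial I}/\|\psi_0\|^2$ (itself a direct consequence of $\mathscr{Q}^+_{\mu,\xi,h}(\psi_0)=0$) closes the argument. The hard part will be uniformity in $\xi$: combining (c) with the hypothesis pins $\xi$ to $|\xi|\leq\delta+O(\sqrt{h})$, a range where $\tilde\psi$ has boundary mass of order $1$ at one endpoint; sustaining the spectral gap and the trace comparison uniformly on that critical window is the technical crux.
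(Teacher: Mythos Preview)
Your approach is correct throughout, and for parts (b) and (c) it is in fact more economical than the paper's.

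For (a) your argument coincides with the paper's: plug the Gaussian into the variational formula of Proposition~\ref{prop.charact}.

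For (b) the paper computes $\partial_\xi\nu_1(\xi,h)$ explicitly and shows it is positive on $(0,\infty)$. Your cosh monotonicity argument reaches the same conclusion in one line. For (c) the paper works through the nonlinear characterization of Proposition~\ref{prop.charact}, evaluating $\ell_1^+(\lambda,\xi,h)$ at $\lambda=\xi-\delta$ and invoking the continuity estimate \eqref{eq.cont23}; your direct expansion of $\mu^2=\|\mathscr{D}_{h,0,\xi}\psi\|^2$ with the boundary-term cancellation is more elementary and bypasses that machinery entirely.

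For (d) the overall strategy matches the paper's (Lemma~\ref{lem.approxpsixi}): project the minimizer onto the Gaussian and control the remainder. The point you flag as the ``technical crux'' --- the uniform spectral gap of order $h$ for the Robin realization of $-h^2\partial_t^2+(\xi+t)^2-h$ above its zero ground state --- is in fact immediate by supersymmetry. Writing $d=-h\partial_t+\xi+t$ on $H^1_0(I)$ and $d^*=h\partial_t+\xi+t$ on $H^1(I)$, your Robin operator is $dd^*$; its nonzero spectrum coincides with that of $d^*d=-h^2\partial_t^2+(\xi+t)^2+h$ on $H^2\cap H^1_0$, which is manifestly $\geq h$, uniformly in $\xi$. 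This is exactly the mechanism the paper exploits, though packaged differently: rather than quoting the gap, the paper writes $\psi_\xi-\Pi_\xi\psi_\xi=d\varphi_\xi$ with $\varphi_\xi\in H^2\cap H^1_0$, bounds $\|d^*d\,\varphi_\xi\|=\|d^*\psi_\xi\|\leq\mu$, and extracts an $H^1$ bound on $d\varphi_\xi$ from an algebraic expansion of $\|d^*d\,\varphi_\xi\|^2$. Either route closes the argument uniformly in $\xi$; once you observe the supersymmetric link, your version is no harder than the paper's.
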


Proposition \ref{prop.curves+} gathers the ingredient to characterize the positive part of the spectrum of $\mathscr{D}_{h,0}$ as stated in Theorem \ref{thm.ess}. 
\begin{corollary}\label{cor.lambdaess+s}
We have
\[
	\mathrm{sp}(\mathscr{D}_{h,0})\cap [0,+\infty[ = [\lambda^+_{\rm ess}(h),+\infty[\,,
\]
with $\lambda^+_{\rm ess}(h) :=\inf_{\xi\in\mathbb{R}}\mu_1^+(\xi,h) =  2\sqrt{\frac{h}{\pi}}e^{-\delta^2/h}(1+o_{h\to0}(1))$.
\end{corollary}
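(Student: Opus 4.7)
The plan is to combine the fibered decomposition \eqref{eq.direct} with the quantitative information gathered in Proposition \ref{prop.curves+} to identify both the shape of the positive spectrum and the asymptotic behavior of its threshold.

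First, I would use the direct integral decomposition \eqref{eq.direct} together with the analyticity of $\xi\mapsto\mu_k^+(\xi,h)$ (Proposition \ref{prop.propDispCurves}\eqref{pt.prop.propDispCurves2}) to identify
\[
\mathrm{sp}(\mathscr{D}_{h,0})\cap[0,+\infty) = \overline{\bigcup_{k\geq 1}\mu_k^+(\R,h)}.
\]
Since $\mu_1^+\leq \mu_k^+$ for every $k$, and since by point \eqref{pt.propdisp3} of Proposition \ref{prop.curves+} we have $\mu_1^+(\xi,h)\to+\infty$ as $|\xi|\to\infty$, the continuous even function $\mu_1^+(\cdot,h)$ attains its infimum at some $\xi^*=\xi^*(h)\geq 0$. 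The intermediate value theorem then yields $\mu_1^+(\R,h)=[\mu_1^+(\xi^*,h),+\infty)$, and this interval already contains the ranges of all $\mu_k^+$, so
\[
\mathrm{sp}(\mathscr{D}_{h,0})\cap[0,+\infty)=[\lambda_{\rm ess}^+(h),+\infty),\qquad \lambda_{\rm ess}^+(h)=\mu_1^+(\xi^*,h).
\]

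For the asymptotics, the upper bound is immediate: combining Proposition \ref{prop.curves+}\eqref{pt.propdisp1} and \eqref{pt.propdisp2},
\[
\lambda_{\rm ess}^+(h)\leq \mu_1^+(0,h)\leq \nu_1(0,h)=2\sqrt{\tfrac{h}{\pi}}e^{-\delta^2/h}(1+o(1)).
\]
In particular, for $h$ small enough, $\lambda_{\rm ess}^+(h)\leq 2\sqrt{h}\,e^{-\delta^2/h}$, so point \eqref{pt.propdisp4} is applicable at the minimizer $\xi^*$. This gives
\[
\lambda_{\rm ess}^+(h)=\mu_1^+(\xi^*,h)\geq \nu_1(\xi^*,h)(1+o(1))\geq \nu_1(0,h)(1+o(1))=2\sqrt{\tfrac{h}{\pi}}e^{-\delta^2/h}(1+o(1)),
\]
using once more the minimality of $\nu_1(0,h)$ from \eqref{pt.propdisp2}. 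The two inequalities combine into the announced equivalent.

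The only delicate point is the justification of the direct integral description of the spectrum (and the fact that it is precisely the closure of the union of the ranges of the dispersion curves), which is standard once one knows that each fiber has compact resolvent and that the eigenvalue branches are continuous in $\xi$; everything else is a straightforward chaining of the bounds in Proposition \ref{prop.curves+}. I expect no other real obstacle.
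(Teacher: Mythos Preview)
Your proof is correct and follows essentially the same route as the paper: coercivity of $\mu_1^+(\cdot,h)$ from Proposition \ref{prop.curves+}\eqref{pt.propdisp3} gives a minimizer, the direct integral description identifies the positive spectrum with $[\inf_\xi\mu_1^+(\xi,h),+\infty)$, and the asymptotics follow by sandwiching with $\nu_1$ via \eqref{pt.propdisp1}, \eqref{pt.propdisp2} and \eqref{pt.propdisp4}. The paper cites \cite[Theorem XIII.85]{MR493421} for the direct integral step, which is exactly the standard fact you invoke.
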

\begin{proof}[Proof of Corollary \ref{cor.lambdaess+s}]
	Let $h>0$.
	By \cite[Theorems XIII.85]{MR493421},
	we have 
	\[
		\{\mu_1^+(\xi,h)\,,\xi\in\mathbb{R}\} \subset \mathrm{sp}(\mathscr{D}_{h,0})\cap [0,+\infty[ \,,
	\]
	and
	\begin{equation}\label{eq.infspec12}
		\inf_{\xi\in\mathbb{R}}\mu_1^+(\xi,h) = \inf\mathrm{sp}(\mathscr{D}_{h,0})\cap [0,+\infty[\,.
	\end{equation}
	By Point \eqref{pt.propdisp3} of Proposition \ref{prop.curves+}, $\xi\mapsto \mu_1^+(\xi,h)$ is coercive : 
	\[
		\lim_{|\xi|\to+\infty}\mu_1^+(\xi,h) = +\infty.
	\]
	Therefore, there exists $\xi_h$ such that $\inf_{\xi\in\R} \mu_1^+(\xi_h,h) = \mu_1^+(\xi_h,h)$ and
	\begin{equation*}
		\{\mu_1^+(\xi,h)\,,\xi\in\mathbb{R}\}  = [\inf_{\xi\in\R}\mu_1^+(\xi,h),+\infty[\subset\mathrm{sp}(\mathscr{D}_{h,0})\cap [0,+\infty[  \,.
	\end{equation*}
	 This, together with \eqref{eq.infspec12}, implies that
	\begin{equation*}\label{eq.specess111}
		[\inf_{\xi\in\R}\mu_1^+(\xi,h),+\infty[ = \mathrm{sp}(\mathscr{D}_{h,0})\cap [0,+\infty[ \,.
	\end{equation*}
	By Points \eqref{pt.propdisp1} and \eqref{pt.propdisp2} of Proposition \ref{prop.curves+}, we also have that
	\begin{equation}\label{eq.estupp}
	\inf_{\xi\in\R}\mu_1^+(\xi,h)\leq \inf_{\xi\in\R}\nu_1(\xi,h)= \nu_1(0,h)\leq 2\sqrt{\frac{h}{\pi}}e^{-\delta^2/h}(1+o_{h\to0}(1))\,.
	\end{equation}
	To get the lower bound, note that \eqref{eq.estupp} and Point \eqref{pt.propdisp4} of Proposition \ref{prop.curves+} ensure
	\[\begin{split}
	\inf_{\xi\in\R}\mu_1^+(\xi,h)
	&= \mu_1^+(\xi_h,h)
	\geq \nu_1(\xi_h,h)(1+o_{h\to0}(1))
	\\&\geq (1+o_{h\to0}(1))\inf_{\xi\in\R}\nu_1(\xi,h) = (1+o_{h\to0}(1))\nu_1(0,h)\,.
	\end{split}\]
	The conclusion follows from Point \eqref{pt.propdisp2} of Proposition \ref{prop.curves+}.
\end{proof}

\subsubsection{Proof of Point \eqref{pt.propdisp1} and \eqref{pt.propdisp2} of Proposition \ref{prop.curves+}}\label{sec.23}
By considering the function $u_{\xi}(t)=e^{-\frac{(\xi+t)^2}{2h}}$, Proposition \ref{prop.charact} implies Point \eqref{pt.propdisp1}.

To prove \eqref{pt.propdisp2}, remark that 
\[\nu_1(\xi,h)=he^{-\delta^2/h}\frac{e^{2\xi\delta/h}+e^{-2\xi\delta/h}}{\int_{-\delta}^\delta e^{-2\xi t/h}e^{-t^2/h}\dd t}\,,\]	
and 
\begin{multline*}
\partial_\xi\nu_1(\xi,h)=\frac{2e^{-\delta^2/h}}{\left(\int_{-\delta}^\delta e^{-2\xi t/h}e^{-t^2/h}\dd t\right)^2}\Big[(e^{2\xi\delta/h}-e^{-2\xi\delta/h})\int_{-\delta}^\delta \delta e^{-2\xi t/h}e^{-t^2/h}\dd t\\
+(e^{2\xi\delta/h}+e^{-2\xi\delta/h})\int_{-\delta}^\delta t e^{-2\xi t/h}e^{-t^2/h}\dd t \Big]\,.
\end{multline*}
The quantity between the brackets equals
\[\int_{-\delta}^\delta(t+\delta)e^{(\delta-t)\xi/h}e^{-t^2/h}\mathrm{d}t+\int_{-\delta}^\delta(t-\delta)e^{-(\delta+t)\xi/h}e^{-t^2/h}\mathrm{d}t\,,\]
and also (by using $t\mapsto -t$ in the second integral)
\begin{multline*}
\int_{-\delta}^\delta(t+\delta)e^{(\delta-t)\xi/h}e^{-t^2/h}\mathrm{d}t-\int_{-\delta}^\delta(t+\delta)e^{-(\delta-t)\xi/h}e^{-t^2/h}\mathrm{d}t\\
=2\int_{-\delta}^\delta(t+\delta)\sinh((\delta-t)\xi/h)e^{-t^2/h}\mathrm{d}t\,.
\end{multline*}
Thus,
\begin{equation*}
\partial_\xi\nu_1(\xi,h)=\frac{4e^{-\delta^2/h}}{\left(\int_{-\delta}^\delta e^{-2\xi t/h}e^{-t^2/h}\dd t\right)^2}\int_{-\delta}^\delta(t+\delta)\sinh((\delta-t)\xi/h)e^{-t^2/h}\mathrm{d}t\,,
\end{equation*}
which is positive when $\xi>0$. Therefore, the function $\xi\mapsto\nu_1(\xi,h)$ is even, increasing on $\R_+$ and Point \eqref{pt.propdisp2} follows.

\subsubsection{Proof of Point \eqref{pt.propdisp3} of Proposition \ref{prop.curves+}}\label{sec:proofc}

We consider the case $\xi > \delta$.
Let us use Proposition \ref{prop.charact} and consider $\ell_1^+(\lambda,\xi,h)$ with $\lambda=\xi-\delta>0$. We have
\[
\mathscr{Q}^+_{\lambda,\xi,h}(\psi)=\|h\partial_t\psi\|^2+\|(\xi+t)\psi\|^2+2h\Re\langle \partial_t\psi,(\xi+t)\psi\rangle+\lambda h\|\psi\|^2_{\partial I}-\lambda^2\|\psi\|^2\,,
\]
and, by integration by parts,
\begin{multline*}
\mathscr{Q}^+_{\lambda,\xi,h}(\psi)=\|h\partial_t\psi\|^2+\|(\xi+t)\psi\|^2-h\|\psi\|^2\\
+h(\xi+\delta)|\psi(\delta)|^2-h(\xi-\delta)|\psi(-\delta)|^2+\lambda h\|\psi\|^2_{\partial I}-\lambda^2\|\psi\|^2\,.
\end{multline*}
Since $\xi > \delta$ we have $\|(\xi+t) \psi\|^2 \geq \lambda^2 \|\psi\|^2$. We deduce that
\begin{equation*}
	\mathscr{Q}^+_{\lambda,\xi,h}(\psi)\geq -h\|\psi\|^2\,,
\end{equation*}
so that
\[\ell_1^+(\xi-\delta,\xi,h)\geq -h\,.\]
If $\ell_1^+(\xi-\delta,\xi,h)\geq 0$, Proposition \ref{prop.charact} ensures that $\tilde \lambda\mapsto \ell_1^+(\tilde \lambda,\xi,h)$ is positive on $(0,\xi-\delta)$ and that its unique positive root $\mu_1^+(\xi,h)$ satisfies $\mu_1^+(\xi,h)\geq \xi-\delta$.

If $\ell_1^+(\xi-\delta,\xi,h)<0$, we get $|\ell_1^+(\xi-\delta,\xi,h)|\leq h$ and by \eqref{eq.cont23} (Proposition \ref{prop.charact}),
\[(\xi-\delta)|\mu^+_1(\xi,h)-(\xi-\delta)|\leq h\,.\]
We observe that the symmetry $(\xi,t) \mapsto (-\xi,-t)$ ensures that $\ell^+_1(\lambda, \xi, h) = \ell^+_1(\lambda, -\xi, h)$ for all $\xi$, $h$, and $\lambda$. Consequently, the case $\xi < -\delta$ follows.

\subsubsection{Proof of Point \eqref{pt.propdisp4} of Proposition \ref{prop.curves+}}
Let us turn to lower bounds for $\mu_1^+(\cdot,h)$. By Point \eqref{pt.propdisp1} and \eqref{pt.propdisp2}, there is $h_0>0$ such that for $h\in(0,h_0)$,
\begin{equation}\label{eq.setmin}
	\Xi_h:=\left\{\xi\geq 0 : \mu_1^+(\xi,h) \leq  2\sqrt h e^{-\frac {\delta^2}{h}}\right\}\ne\emptyset\,.
\end{equation}

\begin{notation}
We consider on $L^2(I)$ the operator 
\[
d_{h,\xi} = - h \partial_t + \xi + t 
\]
with domain  
\[
\mathrm{Dom}(d_{h,\xi}) = H^1_0(I)\,.
\]
Its adjoint is 
\[
d^*_{h,\xi} = h \partial_t + \xi + t
\]
with domain 
\[
\mathrm{Dom}(d_{h,\xi}^*) = H^1(I).
\]
We denote by $\Pi_\xi$ the orthogonal projection on $\mathrm{Ker}(d_{h,\xi}^*)$, which is spanned by $t\mapsto e^{-(\xi+t)^2/2h}$.
\end{notation}
\begin{notation}
Let $\xi\in\R$. By Proposition \ref{prop.charact}, we consider $\psi_\xi \in H^1(I)$ such that $\|\psi_\xi\| = 1$ and
\[
\mu_1^+(\xi,h) = \frac  h 2 \|\psi_\xi\|^2_{\partial I}+ \frac 12 \sqrt{h^2\|\psi_\xi\|^4_{\partial I}+4 \|(h\partial_t+t+\xi)\psi_\xi\|^2}\,.
\]	
\end{notation}

\begin{lemma}\label{lem.approxpsixi}
There exist $h_0>0$ and $C>0$ such that, for all $h\in(0,h_0)$ and $\xi\in\Xi_h$, 
\[\|\psi_\xi-\Pi_\xi\psi_\xi\|_{H^1(I)}\leq Ch^{-\frac32}e^{-\delta^2/h}\,.\]
\end{lemma}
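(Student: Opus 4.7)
The first step is to extract from the nonlinear characterization $\mu_1^+(\xi,h) = \rho_{\xi,h}^+(\psi_\xi)$ (recall $\|\psi_\xi\|=1$) the bound $\|d_{h,\xi}^*\psi_\xi\| \leq \mu_1^+(\xi,h)$: this is immediate from
\[
2\mu_1^+(\xi,h) = h\|\psi_\xi\|^2_{\partial I} + \sqrt{h^2\|\psi_\xi\|^4_{\partial I} + 4\|d_{h,\xi}^*\psi_\xi\|^2} \geq 2\|d_{h,\xi}^*\psi_\xi\|.
\]
For $\xi\in\Xi_h$ (cf.\ \eqref{eq.setmin}) this gives $\|d_{h,\xi}^*\psi_\xi\| \leq 2\sqrt h\,e^{-\delta^2/h}$. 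Setting $\varphi := \psi_\xi - \Pi_\xi \psi_\xi \in H^1(I)$, we have $\varphi \perp \ker(d_{h,\xi}^*)$ in $L^2(I)$ and, since $\Pi_\xi\psi_\xi \in \ker(d_{h,\xi}^*)$, $d_{h,\xi}^*\varphi = d_{h,\xi}^*\psi_\xi$, so the bound transfers: $\|d_{h,\xi}^*\varphi\| \leq 2\sqrt h\,e^{-\delta^2/h}$.

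The crucial step is a spectral gap of the form
\[
\|d_{h,\xi}^* \varphi\|^2 \geq 2h\,\|\varphi\|^2 \qquad \text{for every } \varphi \in H^1(I) \cap (\ker d_{h,\xi}^*)^\perp.
\]
I would derive it from the supersymmetric partner $d_{h,\xi}^*d_{h,\xi}$. An integration by parts using the Dirichlet boundary condition on $H^1_0(I) = \mathrm{Dom}(d_{h,\xi})$ yields
\[
\|d_{h,\xi}\psi\|^2 = h^2\|\partial_t\psi\|^2 + \|(t+\xi)\psi\|^2 + h\|\psi\|^2, \qquad \psi \in H^1_0(I).
\]
Extending $\psi$ by zero to $\R$ and applying the ground-state inequality $h^2\|\partial_t\tilde\psi\|^2 + \|(t+\xi)\tilde\psi\|^2 \geq h\|\tilde\psi\|^2$ for the harmonic oscillator on the whole line, one obtains $\|d_{h,\xi}\psi\|^2 \geq 2h\|\psi\|^2$, i.e.\ $d_{h,\xi}^*d_{h,\xi} \geq 2h$ as a form on $H^1_0(I)$. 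The standard supersymmetric identification of the nonzero spectra of $d^*d$ and $dd^*$ then gives the desired gap for $d_{h,\xi}d_{h,\xi}^*$ away from its one-dimensional kernel $\C \cdot u_\xi$. Applied to $\varphi$, this yields
\[
\|\varphi\|^2 \leq \tfrac{1}{2h}\|d_{h,\xi}^*\varphi\|^2 \leq 2\, e^{-2\delta^2/h}.
\]

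The last step recovers the $H^1$-norm via the pointwise identity $h\partial_t \varphi = d_{h,\xi}^*\varphi - (t+\xi)\varphi$. By Proposition \ref{prop.curves+}~\eqref{pt.propdisp3}, the condition $\xi \in \Xi_h$ forces $|\xi| \leq \delta + O(\sqrt h)$, so $t+\xi$ is uniformly bounded on $I$; hence
\[
\|\partial_t\varphi\|_{L^2} \leq \tfrac{1}{h}\bigl(\|d_{h,\xi}^*\varphi\| + (|\xi|+\delta)\|\varphi\|\bigr) \leq C h^{-1} e^{-\delta^2/h}.
\]
Combined with the $L^2$ estimate, this gives $\|\varphi\|_{H^1} \leq Ch^{-1}e^{-\delta^2/h} \leq Ch^{-3/2}e^{-\delta^2/h}$, with room to spare.

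The main technical ingredient is the spectral gap. Its delicate-looking nature (the mixed boundary condition of $d_{h,\xi}d_{h,\xi}^*$ depends on the eigenvalue itself and does not obviously admit a harmonic-oscillator comparison) is neutralised once one passes to the supersymmetric partner $d_{h,\xi}^*d_{h,\xi}$: the boundary condition becomes plain Dirichlet and the lower bound $2h$ drops out of the standard harmonic-oscillator estimate, with no need to track the location of the Gaussian peak $-\xi$ relative to the endpoints of $I$.
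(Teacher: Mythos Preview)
Your proof is correct and takes a genuinely different route from the paper's. The paper does not invoke the supersymmetric spectral gap; instead it writes $\psi_\xi-\Pi_\xi\psi_\xi=d_{h,\xi}\varphi_\xi$ for some $\varphi_\xi\in H^2(I)\cap H^1_0(I)$ (using that $d_{h,\xi}$ has closed range), observes $\|\mathscr L^+_{h,\xi}\varphi_\xi\|=\|d_{h,\xi}^*d_{h,\xi}\varphi_\xi\|\le 2\sqrt h\,e^{-\delta^2/h}$, and then expands $\|(-h^2\partial_t^2+(\xi+t)^2+h)\varphi_\xi\|^2$ via integration by parts into a sum of non-negative terms. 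That expansion simultaneously controls $h^2\|\varphi_\xi\|^2$, $h^4\|\partial_t^2\varphi_\xi\|^2$, and the weighted quantities $\|(\xi+t)\varphi_\xi\|$, $\|(\xi+t)\partial_t\varphi_\xi\|$, which are then reassembled into $\|d_{h,\xi}\varphi_\xi\|_{H^1}$. The virtue of the paper's approach is that it is uniform in $\xi$: the $(\xi+t)$--weights are handled intrinsically and no a~priori bound on $|\xi|$ is ever used. Your argument is shorter and actually yields the sharper bound $Ch^{-1}e^{-\delta^2/h}$, at the price of importing the coercivity estimate of Proposition~\ref{prop.curves+}\,\eqref{pt.propdisp3} to confine $\xi\in\Xi_h$ to $[0,\delta+O(\sqrt h)]$; since that estimate is established earlier and independently, there is no circularity.
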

\begin{proof}
For all $\xi\in\Xi_h$, we have
	\[
	\| d_{h,\xi}^* (\psi_\xi - \Pi_\xi \psi_\xi) \| = \| d_{h,\xi}^* \psi_\xi \|\leq \mu_1^+(\xi,h) \leq 2\sqrt h e^{-\frac {\delta^2}{h}}\,.
	\]
	Since $d_{h,\xi}$ has closed range, we have $\psi_\xi - \Pi_\xi \psi_\xi \in \mathrm{Ker}(d_{h,\xi}^*)^\bot = \mathrm{Ran} (d_{h,\xi})$. Let $\varphi_\xi \in H^2(I) \cap H_0^1(I)$ such that $d_{h,\xi} \varphi_\xi = \psi_\xi - \Pi_\xi \psi_\xi$. We consider the operator
	\[
	\mathscr L_{h,\xi}^+ = d_{h,\xi}^* d_{h,\xi} = -h^2 \partial_t^2 + (\xi+t)^2 + h \,,
	\]
	with domain $H^2(I) \cap H_0^1(I)$. We have $\| \mathscr L _{h,\xi}^+ \varphi_\xi \| \leq 2h^{\frac 12} e^{-\frac {\delta^2}{h}}$. 

Note that, 	since $\varphi_\xi$ satisfies the Dirichlet boundary condition, we have 
	\begin{multline*}
	\| \mathscr L_{h,\xi}^+ \varphi_\xi \|^2 
	 = \| h^2 \partial_t^2 \varphi_\xi \|^2 + \|(\xi+t)^2 \varphi_\xi \|^2 + \| h \varphi_\xi \|^2 + 2h^3 \| \partial_t \varphi_\xi \|^2  + 2 h \| (\xi+t) \varphi_\xi \|^2
	 \\+ 2 \Re \left\langle -h^2 \partial_t^2 \varphi_\xi , (\xi+t)^2 \varphi_\xi \right\rangle \,.
	\end{multline*}
Moreover,
\[2\Re \left\langle -h^2 \partial_t^2 \varphi_\xi , (\xi+t)^2 \varphi_\xi \right\rangle
=2h^2 \Re \left\langle \partial_t \varphi_\xi , \partial_t\left[(\xi+t)^2 \varphi_\xi\right] \right\rangle\]
and
\begin{align*}
	2h^2 \Re \left\langle \partial_t \varphi_\xi , \partial_t[ (\xi+t)^2 \varphi_\xi] \right\rangle
	& = 2h^2 \| (\xi+t) \partial_t \varphi_\xi \|^2 + 4 h^2 \Re \left\langle \partial_t \varphi_\xi, (\xi+t) \varphi_\xi \right\rangle\\
	&= 2h^2 \| (\xi+t) \partial_t \varphi_\xi \|^2 -  2h^2 \| \varphi_\xi \|^2\,.
	\end{align*}
	Therefore,
	\begin{multline*}
	\| \mathscr L_{h,\xi}^+ \varphi_\xi \|^2 
	 = \| h^2 \partial_t^2 \varphi_\xi \|^2 
	 + \|(\xi+t)^2 \varphi_\xi \|^2 
	 +h^2\|\varphi_\xi\|^2
	 + 2h^3 \| \partial_t \varphi_\xi \|^2
	 + 2 h \| (\xi+t) \varphi_\xi \|^2
	 \\
	 +2h^2 \| (\xi+t) \partial_t \varphi_\xi \|^2
	   -2h^2\|\varphi_\xi\|^2\,.
	\end{multline*}
	Thus,
		\begin{multline}\label{eq.L+2}
		\| \mathscr L_{h,\xi}^+ \varphi_\xi \|^2 
		= \| h^2 \partial_t^2 \varphi_\xi \|^2 + \|(\xi+t)^2 \varphi_\xi \|^2 
		+h^2\|\varphi_\xi\|^2+2h^2 \| (\xi+t) \partial_t \varphi_\xi \|^2\\
		+ 2h \| (h\partial_t+\xi+t)\varphi_\xi\|^2\,.
	\end{multline}
	and in particular
	\[h^4\|\partial_t^2\varphi_\xi\|^2+h^2\|\varphi_\xi\|^2\leq (2h^{1/2} e^{-\frac {\delta^2}{h}})^2\,.\]
	Since $\varphi_\xi\in H^2(I)\cap H^1_0(I)$, we get
		\[\|\varphi_\xi\|^2_{H^2(I)}\leq Ch^{-3} e^{-\frac {2\delta^2}{h}}\,.\]
	Then, we have
	\[\begin{split}
	\|d_{h,\xi}\varphi_\xi\|_{H^1(I)}&=\|(-h\partial_t+\xi+t)\varphi_\xi\|_{H^1(I)}\\
	&\leq C\left(h\|\partial_t\varphi_\xi\|_{H^1} + \|(\xi+t)\varphi_\xi\|_{H^1}\right)
	\\
	&\leq C\left(\|\varphi_\xi\|_{H^2(I)}+\|(\xi+t)\varphi_\xi\|+\|(\xi+t)\partial_t\varphi_\xi\|\right)\\
	&\leq \tilde Ch^{-\frac32}e^{-\delta^2/h}\,,
	\end{split}\]
where we used \eqref{eq.L+2} and the Cauchy-Schwarz inequality.
\end{proof}
Now, we can prove Point \eqref{pt.propdisp4}. We have
\[
\mu_1^+(\xi,h)\geq h \|\psi_\xi\|^2_{\partial I}\,.
\]
By Lemma \ref{lem.approxpsixi} and a usual trace theorem, considering $h$ small enough and $\xi\in\Xi_h$, we get
\[\|\psi_\xi-\Pi_\xi\psi_\xi\|+\|\psi_\xi-\Pi_\xi\psi_\xi\|_{\partial I}
\leq Ch^{-\frac32}e^{-\delta^2/h}\,,\]
so
\[
\|\psi_\xi\|_{\partial I}^2 = \frac{\|\Pi_\xi\psi_\xi\|_{\partial I}^2}{\|\Pi_\xi\psi_\xi\|^2}\left(1+o_{h\to0}(1)\right).
\]
Finally,
\[ \frac{\|\Pi_\xi\psi_\xi\|_{\partial I}^2}{\|\Pi_\xi\psi_\xi\|^2} = \nu_1(\xi,h),
\]
and the conclusion follows.

\subsection{\texorpdfstring{Study of $\mu^-_1(\xi,h)$}{Study of the first negative dispersion curve}}\label{sec.25}

In this section, we continue the proof of Theorem \ref{thm.ess} by establishing the estimate of $\lambda^-_{\mathrm{ess}}(h)$.
\begin{proposition}\label{prop.curves-}
We have
		\begin{enumerate}[\rm (a)]
			\item \label{pt.propdisp-3} There exist \( C \) and \( h_0 > 0 \) such that, for all \( h \in (0, h_0] \) and all $|\xi|>\delta$, 
\[\mu_1^-(\xi,h)\geq |\xi|-\delta-\frac{C}{|\xi|-\delta}\,.\]
			\item \label{pt.propdisp-4} For any $N\in\N$, there exists $C>0$ such that for all $\xi$, and all $h>0$,
			\[
				\mu_1^-(\xi,h)\geq a_0\sqrt{h} - Ch^{N}\,.
			\]
			with $C$ independent of $\xi, h$.
			
			\item \label{pt.propdisp-5} We have
			\[
\mu^-_1( \delta - \sqrt{h}a_0,h)\leq  a_0\sqrt{h} + \mathscr{O}(h^\infty)\,.
\]
		\end{enumerate}

\end{proposition}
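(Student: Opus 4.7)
I mirror Section \ref{sec:proofc}, using Proposition \ref{prop.charact}: the bound $\mu_1^-(\xi,h)\geq \xi-\delta-C/(\xi-\delta)$ reduces, via \eqref{eq.cont23}, to proving $\ell_1^-(\xi-\delta,\xi,h)\geq -C$ for $\xi>\delta$. Testing $\mathscr{Q}^-_{\xi-\delta,\xi,h}$, the sign flip $-h\partial_t$ relative to $+h\partial_t$ reverses the integration-by-parts boundary contribution: after using $\|(\xi+t)\psi\|^2\geq(\xi-\delta)^2\|\psi\|^2$ and the nonnegativity of the $t=-\delta$ boundary term, I am left with a sign-indefinite $-2h\delta|\psi(\delta)|^2$, whose absence in the $+$ case accounts for the worse constant here. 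I absorb it through the trace identity $|\psi(\delta)|^2=\tfrac{1}{2\delta}\|\psi\|^2+\tfrac{1}{\delta}\mathrm{Re}\!\int(t+\delta)\bar\psi\partial_t\psi\,\dd t$ followed by Cauchy--Schwarz, getting $2h\delta|\psi(\delta)|^2\leq\|h\partial_t\psi\|^2+C\|\psi\|^2$. This yields $\ell_1^-(\xi-\delta,\xi,h)\geq -C$ with $C$ of size $4\delta^2$, and the case $\xi<-\delta$ follows from the parity of the dispersion curves (Proposition \ref{prop.propDispCurves}).

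\textbf{Plan for part (b).} For $|\xi|$ above a fixed constant, part (a) already produces a lower bound much larger than $a_0\sqrt h$ once $h$ is small. For bounded $\xi$, I perform the semiclassical rescaling $s=(\xi+t)/\sqrt h$, $\tilde\lambda=\lambda/\sqrt h$, which maps $\mathscr{Q}^-_{\lambda,\xi,h}$ to $h$ times the unit-scale quadratic form
\[
\tilde{\mathscr{Q}}^-_{\tilde\lambda}(\tilde\psi)=\|(-\partial_s+s)\tilde\psi\|^2+\tilde\lambda\bigl(|\tilde\psi(s_+)|^2+|\tilde\psi(s_-)|^2\bigr)-\tilde\lambda^2\|\tilde\psi\|^2
\]
on the dilated interval $(s_-,s_+)=((\xi-\delta)/\sqrt h,(\xi+\delta)/\sqrt h)$. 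Via Proposition \ref{prop.charact} in the rescaled variables, showing $\mu_1^-(\xi,h)\geq a_0\sqrt h-Ch^N$ amounts to $\tilde{\mathscr{Q}}^-_{\tilde\lambda}\geq 0$ for $\tilde\lambda<a_0-Ch^{N-1/2}$. On a true half-line this is precisely the half-plane spectral gap from \cite[Theorem 1.15]{BLTRS23}. The strip form differs from the half-line one only by the contribution of the second (distant) boundary, which is exponentially small thanks to the Gaussian decay of the half-plane ground state away from the boundary; a partition of unity concentrated near the near endpoint then transfers the gap to the strip uniformly in $\xi$.

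\textbf{Plan for part (c).} I construct a quasi-mode for $\mathscr{D}_{h,0,\xi}$ at $\xi=\delta-\sqrt h a_0$. Using the rescaling $u=(t+\delta)/\sqrt h$, the strip fibered operator becomes $\sqrt h$ times the half-line Dirac $(\tilde\xi+u)\sigma_1+\sigma_2 D_u$ on $(0,2\delta/\sqrt h)$ with effective momentum $\tilde\xi=(\xi-\delta)/\sqrt h=-a_0$. By the half-plane analysis of \cite[Theorem 1.15]{BLTRS23}, $\tilde\xi^*=-a_0$ is precisely the minimizer of the first negative eigenvalue of the Fourier fiber of the half-plane Dirac operator on $(0,+\infty)$, realized at the value $-a_0$ by an eigenfunction $\Phi$ with Gaussian decay at $+\infty$. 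Truncating $\Phi$ beyond an $h$-independent threshold yields a function supported in $(0,2\delta/\sqrt h)$ that satisfies the half-line eigenvalue equation up to an $O(e^{-c/h})$ error. Pulling back through the rescaling produces $\varphi\in\mathrm{Dom}(\mathscr{D}_{h,0,\xi})$ with $\|(\mathscr{D}_{h,0,\xi}+a_0\sqrt h)\varphi\|\leq Ch^N\|\varphi\|$ for every $N$; the spectral theorem then delivers a (negative) eigenvalue of $\mathscr{D}_{h,0,\xi}$ within $O(h^N)$ of $-a_0\sqrt h$, and hence $\mu_1^-(\xi,h)\leq a_0\sqrt h+O(h^\infty)$.

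\textbf{Main obstacle.} The hardest step is the quantitative half-plane comparison in (b): transferring the gap $a_0$ of the half-plane operator to the finite-width strip uniformly in $\xi$, with an $O(h^\infty)$ correction. It hinges on sharp Agmon decay for the half-plane ground state, a localization tailored to each boundary, and care in the transition between the half-plane regime $|\xi|\approx\delta$ and the interior regime $|\xi|\ll\delta$, where the operator is closer to the full-plane Landau problem and its gap $\sqrt{2h}$ still comfortably exceeds $a_0\sqrt h$.
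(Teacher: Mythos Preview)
Your plan follows the paper's proof closely; the differences are in implementation details rather than strategy.

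Part (a) is essentially identical: the paper uses the interpolated Sobolev bound $\|\psi\|_{L^\infty}^2\leq\epsilon\|\partial_t\psi\|^2+C\epsilon^{-1}\|\psi\|^2$ with $\epsilon=h/(2\delta)$ in place of your trace identity, but both absorb $-2h\delta|\psi(\delta)|^2$ into $\|h\partial_t\psi\|^2$ at the cost of a constant of order $\delta^2$.

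For (b), the paper does not split into large and bounded $|\xi|$ but uses a single three-piece partition $(\chi_1,\chi_2,\chi_3)$ on $[-\delta,\delta]$ (endpoints and interior), rescales each piece to the half-line or full-line model form at $\tilde\lambda=a_0$, and invokes the gaps from \cite{BLTRS23}; this handles all $\xi$ at once. One imprecision in your plan: the $O(h^\infty)$ refinement requires Agmon decay of the \emph{strip-form minimizer} $\psi$ in the support of the cutoff derivatives (i.e.\ away from both endpoints), not of the half-plane ground state. The paper first obtains $\mu_1^-\geq a_0\sqrt h-Ch^{3/2}$ from the crude IMS error $\sum\|h(\partial_t\chi_j)\psi\|^2=O(h^2)\|\psi\|^2$, and only then bootstraps via Agmon to replace $h^2$ by $h^\infty$.

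For (c), the paper works at the form level rather than with an operator quasi-mode: it simply restricts the rescaled half-line ground state $u_{a_0,-a_0}$ to the interval (no truncation), computes $\mathscr{Q}^-_{a_0\sqrt h,\,\delta-a_0\sqrt h,\,h}(v)=O(h^\infty)$ since $u_{a_0,-a_0}$ is Schwartz, and concludes via \eqref{eq.cont23}. This sidesteps the boundary condition at the far endpoint. Your quasi-mode route is also valid, provided the truncation is at a fixed distance in the original $t$-variable (hence at rescaled distance $\sim 1/\sqrt h$), which is indeed what gives the $e^{-c/h}$ error you claim.
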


This proposition \ref{prop.curves-} allows us to study the negative part of the spectrum of $\mathscr{D}_{h,0}$ as stated in Theorem \ref{thm.ess}. 
\begin{corollary}
We have
\[
	\mathrm{sp}(\mathscr{D}_{h,0})\cap ]-\infty,0] = (-\infty, -\lambda^-_{\rm ess}(h)]\,,
\]
with $\lambda^-_{\rm ess}(h) :=\inf_{\xi\in\mathbb{R}}\mu_1^-(\xi,h) =  \sqrt{h}a_0 + \mathscr{O}(h^{\infty})$.
\end{corollary}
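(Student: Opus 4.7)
The proof follows the same structure as the proof of Corollary \ref{cor.lambdaess+s}, replacing the role of $\mu_1^+$ by $-\mu_1^-$ and invoking Proposition \ref{prop.curves-} instead of Proposition \ref{prop.curves+}. The key inputs are the direct integral decomposition \eqref{eq.direct}, the analyticity (hence continuity) of the dispersion curves from Proposition \ref{prop.propDispCurves}\eqref{pt.prop.propDispCurves2}, and the three estimates (a), (b), (c) of Proposition \ref{prop.curves-}.

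First, I would apply \cite[Theorem XIII.85]{MR493421} to \eqref{eq.direct} to relate the spectrum of $\mathscr{D}_{h,0}$ to the ranges of its dispersion curves. This yields
\[
\{-\mu_1^-(\xi,h)\,,\xi\in\mathbb{R}\}\subset \mathrm{sp}(\mathscr{D}_{h,0})\cap\,]-\infty,0]
\]
and
\[
\sup_{\xi\in\mathbb{R}}(-\mu_1^-(\xi,h)) = \sup\bigl(\mathrm{sp}(\mathscr{D}_{h,0})\cap\,]-\infty,0]\bigr)\,,
\]
i.e.\ $-\lambda^-_{\rm ess}(h)$ is indeed the top of the negative part of the spectrum.

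Second, Point \eqref{pt.propdisp-3} of Proposition \ref{prop.curves-} implies that $\xi\mapsto\mu_1^-(\xi,h)$ is coercive, so by continuity of the dispersion curve the image $\{\mu_1^-(\xi,h)\,,\xi\in\mathbb{R}\}$ equals $[\lambda^-_{\rm ess}(h),+\infty[$, and an appropriate $\xi_h$ realizes the infimum. Consequently,
\[
\{-\mu_1^-(\xi,h)\,,\xi\in\mathbb{R}\}=\,]-\infty,-\lambda^-_{\rm ess}(h)]\subset \mathrm{sp}(\mathscr{D}_{h,0})\cap\,]-\infty,0]\,,
\]
which, combined with the first step, gives the claimed equality of sets.

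Finally, I would combine Points \eqref{pt.propdisp-4} and \eqref{pt.propdisp-5} of Proposition \ref{prop.curves-}. The upper bound \eqref{pt.propdisp-5} evaluated at the specific point $\xi=\delta-\sqrt{h}a_0$ gives
\[
\lambda^-_{\rm ess}(h)\leq \mu_1^-(\delta-\sqrt{h}a_0,h)\leq a_0\sqrt{h}+\mathscr{O}(h^\infty)\,,
\]
while the uniform lower bound \eqref{pt.propdisp-4} provides, for every $N\in\mathbb{N}$, a constant $C_N>0$ with
\[
\lambda^-_{\rm ess}(h)=\inf_{\xi\in\mathbb{R}}\mu_1^-(\xi,h)\geq a_0\sqrt{h}-C_N h^N\,.
\]
Since $N$ is arbitrary, this yields $\lambda^-_{\rm ess}(h)=a_0\sqrt{h}+\mathscr{O}(h^\infty)$. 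No step is genuinely difficult here: all the technical work was carried out in Proposition \ref{prop.curves-}; the corollary is essentially a bookkeeping of its three items together with the standard fibered spectrum description.
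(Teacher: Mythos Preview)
Your proposal is correct and follows exactly the route the paper intends: the paper does not even spell out a proof for this corollary, treating it as an immediate consequence of Proposition~\ref{prop.curves-} by direct analogy with the written-out proof of Corollary~\ref{cor.lambdaess+s}. Your three-step argument (fibered spectrum via \cite[Theorem XIII.85]{MR493421}, coercivity from \eqref{pt.propdisp-3} to get the half-line structure, and the two-sided asymptotic from \eqref{pt.propdisp-4}--\eqref{pt.propdisp-5}) is precisely that analogy made explicit.
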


\subsubsection{Proof of Point \eqref{pt.propdisp-3} of Proposition \ref{prop.curves-}}

As in Section \ref{sec:proofc}, we consider the case $\xi > \delta$ and take $\lambda=\xi-\delta$. We have, for all $\psi\in H^1(I)$,
\begin{multline*}
\mathscr{Q}^-_{\lambda,\xi,h}(\psi)=\|h\partial_t\psi\|^2+\|(\xi+t)\psi\|^2+h\|\psi\|^2\\
-h(\xi+\delta)|\psi(\delta)|^2+h(\xi-\delta)|\psi(-\delta)|^2+\lambda h\|\psi\|^2_{\partial I}-\lambda^2\|\psi\|^2\,.
\end{multline*}
Since $\xi>\delta$, we have for $t\in(-\delta,\delta)$, $(\xi+t)^2\geq (\xi-\delta)^2 = \lambda^2$. Moreover,
\[
-h(\xi+\delta)|\psi(\delta)|^2+h(\xi-\delta)|\psi(-\delta)|^2+\lambda h\|\psi\|^2_{\partial I} = -2h\delta|\psi(\delta)|^2+2h(\xi-\delta)|\psi(-\delta)|^2\,,
\]
so that
\[
\mathscr{Q}^-_{\lambda,\xi,h}(\psi)\geq \|h\partial_t\psi\|^2+h\|\psi\|^2-2h\delta|\psi(\delta)|^2\,.
\]
By Sobolev embedding, we recall that there exists $C>0$ such that, for all $\epsilon>0$,
	\[\|\psi\|^2_{L^\infty(I)}\leq \epsilon\|\partial_t\psi\|^2+C\epsilon^{-1}\|\psi\|^2\,.\]
	Thus, with $\epsilon=h(2\delta)^{-1}$, we deduce that
	\[\mathscr Q^-_{\lambda, \xi, h} (\psi)\geq-4C\delta^2\|\psi\|^2\,.\]
Therefore, by the min-max theorem,
\[\ell_1^{-}(\lambda, \xi, h)\geq -4C\delta^2\,.\]
By Proposition \ref{prop.charact}, if $\ell_1^{-}(\lambda, \xi, h)\geq 0$, then $\tilde \lambda\mapsto \ell_1^{-}(\tilde \lambda, \xi, h)$ is positive on $(0,\lambda)$ and $\mu_1^-(\xi,h)\geq \lambda=\xi-\delta$. If  $\ell_1^{-}(\lambda, \xi, h)\leq 0$, then $|\ell_1^{-}(\lambda, \xi, h)|\leq 4C\delta^2$. Thus, by \eqref{eq.cont23} (Proposition \ref{prop.charact}),
\[\mu_1^-( \xi, h)\geq \xi-\delta-4C\delta^2(\xi-\delta)^{-1}\,.\]

\subsubsection{Proof of Point \eqref{pt.propdisp-4} of Proposition \ref{prop.curves-}}

Let us denote for $u\in \{v\in H^1(\mathbb{R})\,,tv(t)\in L^2(\mathbb{R})\}$,
\[
\begin{split}
	\mathscr{Q}^-_{\lambda,\xi,\mathbb{R}}(u):= \|(-\partial_t+\xi+t)u\|^2_{L^2(\mathbb{R})} - \lambda^2\|u\|^2_{L^2(\mathbb{R})}\,,
\end{split}
\]
and $u\in \{v\in H^1(\mathbb{R}_+)\,,tv(t)\in L^2(\mathbb{R}_+)\}$,
\[
\begin{split}
	\mathscr{Q}^-_{\lambda,\xi,\mathbb{R}_+}(u):= \|(-\partial_t+\xi+t)u\|^2_{L^2(\mathbb{R}_+)} + \lambda|u(0)|^2 - \lambda^2\|u\|^2_{L^2(\mathbb{R}_+)}\,.
\end{split}
\]

\begin{remark}
As in Proposition \ref{prop.charact}, the quadratic forms $\mathscr{Q}^-_{\lambda,\xi,\mathbb{R}}$ and $\mathscr{Q}^-_{\lambda,\xi,\mathbb{R}_+}$ arise in the analysis of the negative spectrum of the magnetic Dirac operator on the domains $\mathbb{R}^2$ and $\mathbb{R} \times \mathbb{R}+$, with an infinite mass boundary condition and a magnetic field of strength 1 (after fibration). The notation in \cite[Notation 10]{BLTRS23} compares with ours as follows: for $u$ in the form domain,
\[\begin{split}
	&\mathscr{Q}^-_{\lambda,\xi,\mathbb{R}_+}(u) = q^-_{\R_+,\lambda,-\xi}(u)  - \lambda^2\|u\|^2
	\\&
	\mathscr{Q}^-_{\lambda,\xi,\mathbb{R}}(u) = q^-_{\R,-\xi}(u)  - \lambda^2\|u\|^2
	\,.
\end{split}\]
The bottom of the spectrum of the operator associated  to $\mathscr{Q}^-_{\lambda,\xi,\mathbb{R}}$ equals the Landau level $2-\lambda^2$ (\cite[Theorem 4.3]{BLTRS23}). In the study of  $\mathscr{Q}^-_{\lambda,\xi,\mathbb{R}_+}$, the constant $a_0$ plays a special role. By \cite[Propositions 4.12, 4.15]{BLTRS23}, the map $\xi\mapsto \inf_{v\ne0}\frac{1}{\|v\|^2}\mathscr{Q}^-_{a_0, \xi, \mathbb{R}_+}(v)$ is non-negative and has a unique non-degenerate minimum at $\xi = -a_0$, which is zero.
\end{remark}

Let $(\chi_1 ,\chi_2, \chi_3)$ be a partition of the unity of $[-\delta,\delta]$ such that $\mathrm{supp}(\chi_1) \subset [-\delta , -\delta/3]$,  $\mathrm{supp}(\chi_2) \subset [-2\delta/3 , 2\delta/3]$, $\mathrm{supp}(\chi_3) \subset [\delta/3, \delta]$,
\[
\chi_1^2 + \chi^2_2+ \chi_3^2 =1 \; \; \text{and} \; \; \vert \partial_t \chi_1 \vert^2 + \vert \partial_t \chi_2 \vert^2+ \vert \partial_t \chi_3 \vert^2  \leq C\,.
\]
From the localization formula, we have
\[\begin{split}
\| (-h\partial_t + \xi +t)\psi \|^2 
&
= \sum_{j=1}^{3} \left( \| (-h\partial_t + \xi +t)(\chi_j \psi) \|^2 - \| h(\partial_t \chi_j)\psi \|^2 \right)
\\
&\geq 
- Ch^2\|\psi \|^2
+\sum_{j=1}^{3} \| (-h\partial_t + \xi +t)(\chi_j \psi) \|^2 \,,
\end{split}\]
so that
\[
	\mathscr Q^-_{\lambda,\xi,h}(\psi)\geq - Ch^2\|\psi \|^2+\sum_{j=1}^{3}\mathscr Q^-_{\lambda,\xi,h}(\chi_i\psi).
\]
We introduce $v_1\colon s\in\mathbb{R}_+\mapsto (\chi_1\psi)(s\sqrt{h}-\delta)$, $v_2\colon s\in\mathbb{R}\mapsto (\chi_2\psi)(s\sqrt{h})$ and $v_3\colon s\in\mathbb{R}_+\mapsto (\chi_3\psi)(\delta - s\sqrt{h})$ so that
\[\begin{split}
&\mathscr Q^-_{\lambda,\xi,h}(\chi_1\psi) = h^{3/2}\mathscr Q^-_{\frac{\lambda}{\sqrt{h}}, \frac{\xi-\delta}{\sqrt{h}}, \mathbb{R}_+}(v_1)\,,
\\
&\mathscr Q^-_{\lambda,\xi,h}(\chi_2\psi) = h^{3/2}\mathscr Q^-_{\frac{\lambda}{\sqrt{h}}, \frac{\xi}{\sqrt{h}}, \mathbb{R}}(v_2)\,,
\\
&\mathscr Q^-_{\lambda,\xi,h}(\chi_3\psi) = h^{3/2}\mathscr Q^-_{\frac{\lambda}{\sqrt{h}}, -\frac{\xi+\delta}{\sqrt{h}}, \mathbb{R}_+}(v_3)\,.
\end{split}\]
Now, fix $\lambda = a_0\sqrt{h}$. By \cite[Propositions 4.12, 4.15]{BLTRS23}, the map $\xi\mapsto \inf_{v\ne0}\frac{1}{\|v\|^2}\mathscr{Q}^-_{a_0, \xi, \mathbb{R}_+}(v)$ is non-negative and has a unique non-degenerate minimum at $\xi = -a_0$, which is zero. By \cite[Theorem 4.3]{BLTRS23}, the map $\xi\mapsto \inf_{v\ne0}\frac{1}{\|v\|^2}\mathscr{Q}^-_{a_0, \xi, \mathbb{R}}(v)$ is constant equal to $2-a_0^2$. Therefore, there exists $c_0>0$ such that
\[
\sum_{j=1}^{3}\mathscr Q^-_{\lambda,\xi,h}(\chi_i\psi)\geq h^{3/2}c_0\min\left(1,\left| \frac{\xi-\delta}{\sqrt{h}}+a_0\right|^2, \left| -\frac{\xi+\delta}{\sqrt{h}}+a_0\right|^2\right)\|\psi\|^2\,.
\]
We obtain then that
\begin{equation}\label{eq.compeig-}
\ell^-_1(\sqrt{h}a_0, \xi,h)\geq h^{3/2}c_0\min\left(1,\left| \frac{\xi-\delta}{\sqrt{h}}+a_0\right|^2, \left| -\frac{\xi+\delta}{\sqrt{h}}+a_0\right|^2\right)-Ch^2\,.
\end{equation}
By Proposition \ref{prop.charact},
if $\ell^-_1(\sqrt{h}a_0, \xi,h)\geq 0$, then $\lambda\mapsto\ell^-_1(\lambda, \xi,h)$ is positive on $(0,\sqrt{h}a_0)$ and  $\mu^-_1(\xi,h)\geq \sqrt{h}a_0$. If $\ell^-_1(\sqrt{h}a_0, \xi,h)\leq 0$, then $\lambda\mapsto\ell^-_1(\lambda, \xi,h)$ is negative on $(\sqrt{h}a_0,+\infty)$, $\sqrt{h}a_0-\mu^-_1(\xi,h)\geq 0$ and by \eqref{eq.cont23},
\[
\sqrt{h}a_0(\sqrt{h}a_0-\mu^-_1(\xi,h)) \leq |\ell^-_1(\sqrt{h}a_0, \xi,h)|\leq 
Ch^2\,,
\]
so
\[
\mu^-_1(\xi,h)\geq \sqrt{h}a_0
-Ch^{3/2}\,.
\]
Note that \eqref{eq.compeig-} suggests that the minima of \(\xi \mapsto \mu^-_1(\xi,h)\) occur near \(\xi = \delta - \sqrt{h}a_0\) and \(\xi = -\delta + \sqrt{h}a_0\) (see Figure \ref{fig:dispCurves}). However, we leave the investigation of a more precise determination of the minima's location as an open question.
We now prove that the control term  $Ch^{3/2}$  can, in fact, be replaced by \( \mathscr{O}(h^\infty) \) using Agmon-type estimates.
For any  $\psi \neq 0$  and  $\xi$  such that
\[
h^{3/2} \frac{2 - a_0^2}{2} \|\psi\|^2 \geq \mathscr{Q}^-_{\lambda, \xi, h}(\psi),
\]
an Agmon-type estimate ensures that
\[
\sum_{j=1}^3 \| h (\partial_t \chi_j) \psi \|^2 = \mathscr{O}(h^\infty) \|\psi\|^2,
\]
which leads to
\[
\mathscr{Q}^-_{\lambda, \xi, h}(\psi) \geq h^{3/2} c_0 \min\left(1,
\left| \frac{\xi - \delta}{\sqrt{h}} + a_0 \right|^2,
\left| -\frac{\xi + \delta}{\sqrt{h}} + a_0 \right|^2
\right)\|\psi\|^2 - \mathscr{O}(h^\infty) \|\psi\|^2.
\]
Therefore, if such  $\psi$  and  $\xi$  exist, then
\[
\ell^-_1(\sqrt{h} a_0, \xi, h) \geq \mathscr{O}(h^\infty),
\]
and the arguments provided above ensure that
\[
\mu^-_1(\xi, h) \geq \sqrt{h} a_0 + \mathscr{O}(h^\infty).
\]
This, along with Point \eqref{pt.propdisp-5} of Proposition \ref{prop.curves-}, concludes the proof.

\subsubsection{Proof of Point \eqref{pt.propdisp-5} of Proposition \ref{prop.curves-}}
Let us focus on upper bounds on $\mu^-_1$. Following the notations of \cite[Section 4.4]{BLTRS23}, we denote by $u_{a_0,-a_0}$ the normalized ground state of $\mathscr{Q}^-_{a_0,-a_0,\mathbb{R}_+}$. Its energy is $0$. Let $v\colon t\in(-\delta,\delta)\mapsto u_{a_0,-a_0}\left(\frac{t+\delta}{\sqrt{h}}\right)$. Since $u_{a_0,-a_0}$ belongs to the Schwartz class, the rescaling performed in the previous section ensures that
\[
	\mathscr Q^-_{\sqrt{h}a_0,\delta - \sqrt{h}a_0,h}(v) = h^{3/2}\mathscr Q^-_{a_0, -a_0, \mathbb{R}_+}(u_{a_0,-a_0}) + \mathscr{O}(h^\infty) = \mathscr{O}(h^\infty)\,,
\]
so that
\[
	\ell^-_1(\sqrt{h}a_0,\delta - \sqrt{h}a_0,h)\leq  \mathscr{O}(h^\infty)\,.
\]
By Proposition \ref{prop.charact},
if $\ell^-_1(\sqrt{h}a_0, \delta - \sqrt{h}a_0,h)\geq 0$, then $\mu^-_1( \delta - \sqrt{h}a_0,h)\geq \sqrt{h}a_0$ and by \eqref{eq.cont23},
\[
\sqrt{h}a_0\left(\mu^-_1( \delta - \sqrt{h}a_0,h)- \sqrt{h}a_0\right)\leq \ell^-_1(\sqrt{h}a_0,\delta - \sqrt{h}a_0,h)\leq  \mathscr{O}(h^\infty)\,,
\]
so that
\[
\mu^-_1( \delta - \sqrt{h}a_0,h)\leq  \sqrt{h}a_0 + \mathscr{O}(h^\infty)\,.
\]
If $\ell^-_1(\sqrt{h}a_0, \delta - \sqrt{h}a_0,h)\leq 0$, then $\mu^-_1( \delta - \sqrt{h}a_0,h)\leq  \sqrt{h}a_0$\,.

\section{Estimates of the discrete spectrum}\label{sec.3}

\subsection{\texorpdfstring{Upper bound of Point \eqref{it.AsymEff+} of Theorem \ref{thm.boundstatesDirac}}{Upper bound}}\label{sec.31}
The key to understanding the asymptotics of the quotients appearing in \eqref{it.AsymEff+} lies in applying Laplace’s method to $\|e^{-\phi/h} u\|$ for fixed $u$. This asymptotic behavior is primarily determined by the value of $\phi$ at its minimum, the Hessian matrix of $\phi$ at $z_{\rm min}$, and the germ of $u$ at $z_{\rm min}$. Specifically, to obtain the lowest mode, we choose a function $u$ such that $u(z_{\rm min}) = 1$, in order to maximize the asymptotic behavior of the denominator $\|e^{-\phi/h} u\|$ in the quotient. To minimize the numerator, we then select the unique function $u$ that minimizes the Hilbert norm $\|\cdot\|_{\partial\Omega}$ among all Hardy functions $u$ satisfying $u(z_{\rm min}) = 1$. To obtain the first excited mode, it is useful to follow the same strategy, with Hardy functions for which $z_{\rm min}$ is a non-degenerate root. The technical challenge of the proof then lies in handling the orthogonality properties of these ‘natural’ test functions.

Let $k\in\mathbb{N}^*$ be fixed in this section.

\begin{notation}\label{not.quasiMode}
	~
	\begin{enumerate}[\rm a)]
	\item We denote by $\{v_0,\dots,v_{k-1}\}\subset \mathscr{H}^2(\Omega)$ a $k$-dimensional family of Hardy functions satisfying	 for $j,l\in\{0,\dots,k-1\}$, 
	$
		v_l^{(j)}(z_{\rm min}) = \delta_{jl}
	$, $\delta_{jl}$ being the Kronecker delta. We assume moreover that $v_{k-1}$ is the unique minimizer of $d^{k}_\mathscr{H} = {\rm dist}_{\mathscr{H}^2(\Omega)}(0,\mathbb{X}_k)$ where $\mathbb{X}_k$ is defined before Theorem \ref{thm.boundstatesDirac}.
	\item The familly $(P_n)_{n\in\mathbb{N}}$  is the $N_{\mathscr{B}}$-orthogonal family obtained after a Gram-Schmidt process on $(1,Z,\dots, Z^n,\dots)$ and normalized by $P_n^{(n)}(0) = b_{n,n}=1$ in $P_n(Z) = \sum_{j=0}^{n}\frac{b_{n,j}}{j!}Z^j$.
	\item For $n\in\{0,\dots,k-1\}$, we define $w_n = \sum_{j=0}^{n}b_{n,j}h^{\frac{n-j}{2}}v_j$.
	\item $\mathrm{Tayl}(w)$ is the polynomial part of the Taylor expansion of degree $k-1$ of the function $w$ at $z_{\rm min}$ :
	\[
		\mathrm{Tayl}(w)(z) = \sum_{l=0}^{k-1}\frac{w^{(l)}(z_{\rm min})}{l!}\left(z-z_{\rm min}\right)^l\,.
	\]
	\end{enumerate}
\end{notation}

\begin{lemma} \label{lem.Taylor1}
We have for $n\in\{0,\dots k-1\}$,
\[
\mathrm{Tayl}(w_n)\left(z_{\rm min} + \sqrt{h}z\right)
=h^{\frac{n}{2}} P_n(z)\,.
\]
\end{lemma}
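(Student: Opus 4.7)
The plan is that this lemma is essentially a direct bookkeeping calculation: the scaling factors $h^{(n-j)/2}$ in the definition of $w_n$ are engineered precisely so that the rescaling $z \mapsto z_{\min}+\sqrt{h}z$ collapses all the $h$-dependence into a single overall factor $h^{n/2}$, and the dual-basis property $v_l^{(j)}(z_{\min})=\delta_{jl}$ does the rest.

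Concretely, I would first compute $w_n^{(l)}(z_{\min})$ for every $l\in\{0,\dots,k-1\}$. By linearity of differentiation applied to $w_n = \sum_{j=0}^{n} b_{n,j} h^{(n-j)/2} v_j$ and by the biorthogonality relation $v_j^{(l)}(z_{\min}) = \delta_{jl}$, only the $j=l$ term survives (and only when $l\leq n$), giving
\[
w_n^{(l)}(z_{\min}) = \begin{cases} b_{n,l}\, h^{(n-l)/2} & \text{if } 0\leq l \leq n,\\ 0 & \text{if } n<l\leq k-1. \end{cases}
\]

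Next, I would insert this into the defining formula for the Taylor polynomial of degree $k-1$:
\[
\mathrm{Tayl}(w_n)(z) = \sum_{l=0}^{k-1} \frac{w_n^{(l)}(z_{\min})}{l!}(z-z_{\min})^l = \sum_{l=0}^{n} \frac{b_{n,l}}{l!} h^{(n-l)/2}(z-z_{\min})^l.
\]

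Finally, I would substitute $z \mapsto z_{\min}+\sqrt{h}\,z$, so that $(z-z_{\min})^l$ becomes $h^{l/2} z^l$, and the two powers of $h$ combine to give
\[
\mathrm{Tayl}(w_n)(z_{\min}+\sqrt{h}\,z) = \sum_{l=0}^{n} \frac{b_{n,l}}{l!} h^{(n-l)/2} h^{l/2} z^l = h^{n/2} \sum_{l=0}^{n} \frac{b_{n,l}}{l!} z^l = h^{n/2} P_n(z),
\]
using the formula $P_n(Z)=\sum_{j=0}^n \frac{b_{n,j}}{j!}Z^j$ from Notation \ref{not.quasiMode}.

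There is no real obstacle here; the content of the lemma is entirely encoded in the choice of normalizations. The only thing to verify carefully is that the Taylor polynomial really truncates at degree $n$ (not degree $k-1$), which follows from the vanishing $w_n^{(l)}(z_{\min})=0$ for $n<l\leq k-1$, itself a consequence of the dual-basis property of the $v_j$.
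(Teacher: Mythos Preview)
Your proof is correct and follows essentially the same computation as the paper: compute $w_n^{(l)}(z_{\min})$ via the biorthogonality $v_j^{(l)}(z_{\min})=\delta_{jl}$, plug into the degree-$(k-1)$ Taylor polynomial, and rescale. If anything, you are slightly more careful in explicitly noting the vanishing for $n<l\leq k-1$, which the paper leaves implicit through the convention $b_{n,l}=0$ for $l>n$.
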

\begin{proof}
Let $l,n\in\{0,\dots k-1\}$, we have
\[
w_n^{(l)}(z_{\rm min}) 
=  \sum_{j=0}^{n}b_{n,j}h^{\frac{n-j}{2}}v_j^{(l)}(z_{\rm min}) 
= \sum_{j=0}^{n}b_{n,j}h^{\frac{n-j}{2}}\delta_{lj}
= b_{n,l}h^{\frac{n-l}{2}}\,,
\]
and
\[
\mathrm{Tayl}(w_n)\left(z_{\rm min} + \sqrt{h}z\right)
=\sum_{l=0}^{k-1}\frac{w_n^{(l)}(z_{\rm min})}{l!}\left(\sqrt{h}z\right)^l
=h^{\frac{n}{2}} P_n(z)\,.
\]
\end{proof}

\begin{lemma}\label{lem.bargNormsup}
Let $c_0,\dots ,c_{k-1}\in\mathbb{C}$ and $w = \sum_{n=0}^{k-1}c_nw_n$.
We have, as $h\rightarrow 0$,
\[
\int_\Omega\left|\sum_{n=0}^{k-1}c_nw_n\right|^2e^{-2(\phi-\phi_{\rm min})/h}{\rm d}x\geq (1 + o(1))h	\sum_{n=0}^{k-1}|h^{n/2}c_n|^2N_{\mathscr{B}}\left(P_n\right)^2\,.
\]
\end{lemma}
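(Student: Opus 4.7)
The proof is a Laplace-method lower bound. By Assumption~\ref{assu.0}, $\phi - \phi_{\rm min} \geq c(\epsilon) > 0$ on $\Omega \setminus B(x_{\rm min}, \epsilon)$ for any fixed $\epsilon > 0$, so the integral over this exterior region is $O(e^{-c(\epsilon)/h})$ and can be dropped to obtain a lower bound. On $B(x_{\rm min}, \epsilon)$, I would Taylor expand $\phi$ (whose gradient vanishes at its minimum $x_{\rm min}$) and perform the rescaling $y = x_{\rm min} + \sqrt{h}\,\tilde y$; this contributes a Jacobian $h$ and transforms the weight into $e^{-\mathsf{Hess}_{x_{\rm min}}\phi(\tilde y, \tilde y) + O(\sqrt{h}\,|\tilde y|^3)}$. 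A further restriction to $|\tilde y| \leq h^{-1/8}$ renders the cubic correction in the exponent uniformly $o(1)$, while the excluded annulus is absorbed by the Gaussian tail.

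For the integrand, split $w(y) = \mathrm{Tayl}(w)(y) + R(y)$ with $R(y) = O(|y - x_{\rm min}|^k)$ on the neighborhood. Lemma~\ref{lem.Taylor1} gives
\[
\mathrm{Tayl}(w)(x_{\rm min} + \sqrt{h}\,\tilde y) = \sum_{n=0}^{k-1} c_n\, h^{n/2} P_n(\tilde y),
\]
and applying the elementary inequality $|a+b|^2 \geq (1-\eta)|a|^2 - \eta^{-1}|b|^2$ with $\eta = h^{1/4}$ separates the Taylor polynomial from its remainder. The main term is diagonalised by the $N_{\mathscr{B}}$-orthogonality of $(P_n)$ (recalled in the remarks following Theorem~\ref{thm.boundstatesDirac}), producing $\sum_{n=0}^{k-1} |c_n|^2 h^n N_{\mathscr{B}}(P_n)^2$ after integration against the Gaussian. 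The remainder contribution is $O(h^{-1/4} \cdot h^k) = O(h^{k-1/4})$ in the rescaled integral, hence $O(h^{k+3/4})$ after the Jacobian, which is negligible compared to every nonzero target term $|c_n|^2 h^{n+1} N_{\mathscr{B}}(P_n)^2$ (each of which is at least of order $h^k$). Reintroducing the Jacobian $h$ then yields the claimed bound.

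The main technical obstacle is ensuring that the $(1+o(1))$ loss is \emph{absolute}, uniform in the coefficients $(c_n)$: the errors coming from the cubic Taylor term in $\phi$, the truncation at $|\tilde y| \sim h^{-1/8}$, the remainder $R$, and the $(1-\eta)$ factor must all be dominated by the smallest surviving contribution $|c_n|^2 h^{n+1} N_{\mathscr{B}}(P_n)^2$ with $c_n \neq 0$, not merely by the leading one. This works because each such error is of order at most $h^{k+3/4}$ whereas every nonzero term of the target sum is at least of order $h^k$ (attained for $n=k-1$), making the relative error $O(h^{3/4})$; the clean decoupling provided by the $N_{\mathscr{B}}$-orthogonality of the $P_n$ is what prevents cross interactions between the modes from spoiling this uniform control.
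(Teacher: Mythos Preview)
Your approach is the paper's: restrict to a shrinking ball around $x_{\min}$ (your $|\tilde y|\le h^{-1/8}$ amounts to the paper's choice $\eta=3/8\in(1/3,1/2)$), replace $\phi$ by its quadratic part, rescale, split $w$ via its $(k-1)$-jet and invoke Lemma~\ref{lem.Taylor1}, apply Young's inequality with parameter $h^{1/4}$, and finish with the $N_{\mathscr{B}}$-orthogonality of the $P_n$.

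One point that needs tightening is the uniformity in $c$. The implicit constant in $R(y)=O(|y-x_{\min}|^k)$ scales like $\|c\|_2$, so the remainder contribution after Young is $O(h^{k+3/4}\|c\|_2^2)$, not a bare $O(h^{k+3/4})$. Your proposed resolution --- comparing the error to the \emph{smallest} nonzero target term $|c_n|^2 h^{n+1}N_{\mathscr{B}}(P_n)^2$ --- does not work, since that term can be made arbitrarily small relative to $\|c\|_2^2$ by letting some $|c_n|$ be tiny but nonzero. The paper instead compares to the \emph{total} target: writing $\tilde c_n=h^{n/2}c_n$, one has $\|c\|_2^2=\sum_n h^{-n}|\tilde c_n|^2\le h^{-(k-1)}\|\tilde c\|_2^2$, whence the remainder is $O(h^{7/4}\|\tilde c\|_2^2)=o(h\|\tilde c\|_2^2)$; the equivalence of norms on $\C^k$ then absorbs this into the $(1+o(1))$ in front of $h\sum_n|\tilde c_n|^2N_{\mathscr{B}}(P_n)^2$.
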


\begin{proof}
Let $\eta\in(1/3,1/2)$.
By Taylor's formula, there exists $C>0$ such that for $x\in D(x_{\rm min}, h^\eta)$
\[\begin{split}
	\phi(x)-\phi_{\rm min} 
	\leq \frac{1}{2}{\rm Hess}_{\rm min}\phi(x-x_{\rm min}, x-x_{\rm min}) + h^{3\eta} C
\end{split}\]
and with $x = x_{\rm min}+ \sqrt{h}y$,
\[\begin{split}
	&\int_\Omega|w|^2e^{-2(\phi-\phi_{\rm min})/h}{\rm d}x
	\geq
	\int_{B(x_{\rm min}, h^\eta)}|w|^2e^{-2(\phi-\phi_{\rm min})/h}{\rm d}x
	\\&\geq e^{-2Ch^{3\eta-1}}\int_{B(x_{\rm min}, h^\eta)}|w|^2e^{-{\rm Hess}_{\rm min}\phi(x-x_{\rm min}, x-x_{\rm min})/h}{\rm d}x
	\\&\geq (1+o(1))\int_{B(x_{\rm min}, h^\eta)}|w|^2e^{-{\rm Hess}_{\rm min}\phi(x-x_{\rm min}, x-x_{\rm min})/h}{\rm d}x
	\\&\geq (1+o(1))h\int_{B(0, h^{\eta-1/2})}|w(x_{\rm min} + \sqrt{h}y)|^2e^{-{\rm Hess}_{\rm min}\phi(y, y)}{\rm d}y\,.	
\end{split}\]
By Lemma \ref{lem.Taylor1}, we have for $y\in D(0, h^{\eta-1/2})$,  $n\in\{0,\dots,k-1\}$,
\[|w_n(x_{\rm min} +\sqrt{h}y)-h^{n/2}P_n(y)|\leq C(\sqrt{h}|y|)^k\,.
\]
Thus we have
\[
	w(x_{\rm min} + \sqrt{h}y)
	= {\mathrm{Tayl}}(w)(x_{\rm min} + \sqrt{h}y)
	+
	R_{h,c}(y)\,,
\]
with
\[
	|R_{h,c}(y)|\leq C|y\sqrt{h}|^{k}\left(\sum_{n=0}^{k-1}|c_n|^2\right)^{1/2}\,.
\]
By Young's inequality, we have for $\varepsilon = h^{1/4}>0$, $\tilde c = (\tilde c_n)_n = (h^{n/2}c_n)_n$,
\[\begin{split}
	&h\int_{D(0, h^{\eta-1/2})}|w(x_{\rm min} + \sqrt{h}y)|^2e^{-{\rm Hess}_{\rm min}\phi(y, y)}{\rm d}y
	\\&\geq
	(1-\varepsilon )h	\int_{D(0, h^{\eta-1/2})}|{\rm Tayl}(w)(x_{\rm min} + \sqrt{h}y)|^2e^{-{\rm Hess}_{\rm min}\phi(y, y)}{\rm d}y
	\\&\quad-\varepsilon^{-1} h\int_{D(0, h^{\eta-1/2})}|R_{h,c}(y)|^2e^{-{\rm Hess}_{\rm min}\phi(y, y)}{\rm d}y
		\\&\geq
	(1-\varepsilon )h	\int_{D(0, h^{\eta-1/2})}|\sum_{n=0}^{k-1}\tilde c_nP_n(y)|^2e^{-{\rm Hess}_{\rm min}\phi(y, y)}{\rm d}y
	-\varepsilon^{-1} h^{3/2}C\|\tilde c\|_2^2
	\\&\geq
	(1 + o(1))h	N_{\mathscr{B}}\left(\sum_{n=0}^{k-1}\tilde c_nP_n(y)\right)^2+o(h)\|\tilde c\|_2^2
	\end{split}\]
	By the orthogonality of the $(P_n)_n$ family, we get
	\[\begin{split}
	N_{\mathscr{B}}\left(\sum_{n=0}^{k-1}\tilde c_nP_n(y)\right)^2
	=
	\sum_{n=0}^{k-1}|\tilde c_n|^2N_{\mathscr{B}}\left(P_n\right)^2			
\end{split}\]
and the equivalence of the norms in finite dimensions ensures
\[
h\int_{D(0, h^{\eta-1/2})}|w(x_{\rm min} + \sqrt{h}y)|^2e^{-{\rm Hess}_{\rm min}\phi(y, y)}{\rm d}y
\geq
(1 + o(1))h	\sum_{n=0}^{k-1}|\tilde c_n|^2N_{\mathscr{B}}\left(P_n\right)^2\,.
\]
\end{proof}
 \begin{lemma}\label{lem.HardySup}
 Let $c_0,\dots ,c_{k-1}\in\mathbb{C}$ and $w = \sum_{n=0}^{k-1}c_nw_n$.
We have, as $h\rightarrow 0$,
 \[\begin{split}
 	\|w\|_{\mathscr{H}^2(\Omega)}
	\leq (1+o(1))\sum_{n=0}^{k-1}|c_n|\|v_n\|_{\mathscr{H}^2(\Omega)}\,,
 \end{split}\]
where $o(1)$ is independent of $(c_0, \dots, c_{k-1})$.

 \end{lemma}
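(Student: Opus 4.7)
The estimate will follow from two successive applications of the triangle inequality in $\mathscr{H}^2(\Omega)$, exploiting the fact that the ``correction'' terms in $w_n - v_n$ all carry a positive power of $\sqrt{h}$.

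First, by the triangle inequality applied to $w = \sum_{n=0}^{k-1} c_n w_n$, we have
\[
\|w\|_{\mathscr{H}^2(\Omega)} \leq \sum_{n=0}^{k-1} |c_n| \, \|w_n\|_{\mathscr{H}^2(\Omega)}.
\]
It therefore suffices to prove that, for each $n \in \{0,\dots,k-1\}$,
\[
\|w_n\|_{\mathscr{H}^2(\Omega)} \leq (1+o(1))\,\|v_n\|_{\mathscr{H}^2(\Omega)},
\]
with a $o(1)$ that is uniform in $n$ (and, a fortiori, in the coefficients $c_n$).

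Second, recalling from Notation \ref{not.quasiMode} that $b_{n,n}=1$, we split off the top term and write
\[
w_n = v_n + \sum_{j=0}^{n-1} b_{n,j}\, h^{\frac{n-j}{2}} v_j.
\]
The triangle inequality then yields
\[
\|w_n\|_{\mathscr{H}^2(\Omega)} \leq \|v_n\|_{\mathscr{H}^2(\Omega)} + \sum_{j=0}^{n-1} |b_{n,j}|\, h^{\frac{n-j}{2}} \|v_j\|_{\mathscr{H}^2(\Omega)}.
\]
Since $k$ is fixed, the coefficients $b_{n,j}$ arising from the Gram--Schmidt process are $h$-independent constants, and each $\|v_j\|_{\mathscr{H}^2(\Omega)}$ is a finite, $h$-independent positive constant. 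As $n-j \geq 1$ in the sum, we get
\[
\sum_{j=0}^{n-1} |b_{n,j}|\, h^{\frac{n-j}{2}} \|v_j\|_{\mathscr{H}^2(\Omega)} \leq C \sqrt{h} = o(1)\,\|v_n\|_{\mathscr{H}^2(\Omega)},
\]
for some constant $C$ depending only on $k$ and on the fixed family $\{v_0,\dots,v_{k-1}\}$.

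Combining the two triangle inequalities gives the claimed bound, and the $o(1)$ is manifestly independent of $(c_0,\dots,c_{k-1})$ since the factorisation by $|c_n|$ occurred at the first step and all subsequent estimates concern only the fixed building blocks $w_n$ and $v_n$. There is no substantive obstacle here: the only thing to verify is the uniformity of the $o(1)$, which follows automatically from the finiteness of $k$.
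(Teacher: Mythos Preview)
Your proof is correct and follows essentially the same approach as the paper: two applications of the triangle inequality, splitting off the leading term $v_n$ from $w_n$ and bounding the remaining sum by $C\sqrt{h}$. The paper keeps the coefficients $|c_n|$ throughout and writes the final bound as $(1+\sqrt{h}C)\sum_{n}|c_n|\|v_n\|_{\mathscr{H}^2(\Omega)}$ with an explicit constant $C=\max_{1\leq m\leq k-1}\sum_{j=0}^{m-1}|b_{m,j}|\|v_j\|_{\mathscr{H}^2(\Omega)}/\|v_m\|_{\mathscr{H}^2(\Omega)}$, but this is exactly your argument rearranged.
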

 \begin{proof}
By the triangle inequality and the definition of $(w_n)$, we have
 \[\begin{split}
 	&\|w\|_{\mathscr{H}^2(\Omega)}
	\leq \sum_{n=0}^{k-1}|c_n|\|w_n\|_{\mathscr{H}^2(\Omega)}\,,
	\\&\leq \sum_{n=0}^{k-1}|c_n|\sum_{j=0}^{n}|b_{n,j}|h^{\frac{n-j}{2}}\|v_j\|_{\mathscr{H}^2(\Omega)}
	\\&\leq 
	\sum_{n=0}^{k-1}|c_n|\|v_n\|_{\mathscr{H}^2(\Omega)}
	+
	 \sqrt{h}
	 \sum_{n=1}^{k-1}|c_n|\sum_{j=0}^{n-1}|b_{n,j}|\|v_j\|_{\mathscr{H}^2(\Omega)}
	 \\&\leq 
	(1+\sqrt{h}C)\sum_{n=0}^{k-1}|c_n|\|v_n\|_{\mathscr{H}^2(\Omega)}
	\,,
 \end{split}\]
 where $C = \max\{\sum_{j=0}^{m-1}|b_{m,j}|\|v_j\|_{\mathscr{H}^2(\Omega)}/\|v_m\|_{\mathscr{H}^2(\Omega)}\,;1\leq m\leq k-1\}$.
 \end{proof}

  \begin{lemma}\label{lem.MaxSup}
We have, 
 \[\begin{split}
 	\sup_{c\in\mathbb{C}^k\setminus\{0\}}\frac{h\left(\sum_{j=0}^{k-1}|c_j|\|v_j\|_{\mathscr{H}^2(\Omega)}\right)^2}
	{h	\sum_{j=0}^{k-1}|h^{j/2}c_j|^2N_{\mathscr{B}}\left(P_j\right)^2}
	= 
	(1+o(1))h^{(1-k)}\left(\frac{d^{k}_\mathscr{H}}
	{d^{k}_\mathscr{B}}\right)^2
	\,.
 \end{split}\]

 \end{lemma}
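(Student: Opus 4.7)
The claim reduces to an elementary finite-dimensional optimization once the $h$-scales are made explicit. My plan is to first cancel the common factor $h$ between numerator and denominator and then change variables $\tilde c_j = h^{j/2} c_j$, so that the quantity to estimate becomes
\[
\sup_{\tilde c \in \C^k \setminus \{0\}} \frac{\left(\sum_{j=0}^{k-1} h^{-j/2} |\tilde c_j|\, \|v_j\|_{\mathscr{H}^2(\Omega)}\right)^2}{\sum_{j=0}^{k-1} |\tilde c_j|^2 N_\mathscr{B}(P_j)^2}.
\]
The point of this substitution is that the $h$-dependence now sits only in the numerator, through the powers $h^{-j/2}$, which makes it clear that the index $j=k-1$ will dominate.

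Next, I would apply the weighted Cauchy--Schwarz inequality
\[
\left(\sum_{j=0}^{k-1} a_j |\tilde c_j|\right)^2 \leq \left(\sum_{j=0}^{k-1} \frac{a_j^2}{N_\mathscr{B}(P_j)^2}\right)\left(\sum_{j=0}^{k-1} |\tilde c_j|^2 N_\mathscr{B}(P_j)^2\right),
\]
with $a_j = h^{-j/2}\|v_j\|_{\mathscr{H}^2(\Omega)}$, and observe that equality is attained (up to positive scaling) for $|\tilde c_j| = a_j / N_\mathscr{B}(P_j)^2$, so that the supremum equals exactly
\[
\sum_{j=0}^{k-1} \frac{h^{-j}\,\|v_j\|_{\mathscr{H}^2(\Omega)}^2}{N_\mathscr{B}(P_j)^2}.
\]

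Then I would isolate the leading term $j=k-1$. By the choice of $v_{k-1}$ as the unique minimizer in $\mathbb{X}_k$ one has $\|v_{k-1}\|_{\mathscr{H}^2(\Omega)} = d^k_\mathscr{H}$, and since $P_{k-1}$ is the polynomial of $\mathbb{Y}_k$ which is $N_\mathscr{B}$-orthogonal to all polynomials of degree $<k-1$ (by the Gram--Schmidt construction, $P_{k-1}$ realizes the distance from $0$ to $\mathbb{Y}_k$ in the $N_\mathscr{B}$ norm), one has $N_\mathscr{B}(P_{k-1}) = d^k_\mathscr{B}$. Thus the $j=k-1$ contribution is exactly $h^{1-k}(d^k_\mathscr{H}/d^k_\mathscr{B})^2$, while the remaining terms are all $O(h^{-(k-2)}) = o(h^{1-k})$ as $h\to 0$. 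Putting this together yields the announced asymptotics.

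The only step requiring a small verification is the identification $N_\mathscr{B}(P_{k-1}) = d^k_\mathscr{B}$: in Notation \ref{not.quasiMode}, the normalization is $P_n^{(n)}(0)=1$, so $P_{k-1}$ belongs to $\mathbb{Y}_k$, and its $N_\mathscr{B}$-orthogonality to $1,Z,\dots,Z^{k-2}$ makes it the orthogonal projection of $0$ onto the affine subspace $\mathbb{Y}_k$, hence the unique minimizer. No genuine obstacle is expected; the proof is essentially an exercise in Cauchy--Schwarz plus bookkeeping of the powers of $h$.
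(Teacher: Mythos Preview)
Your proof is correct and arguably cleaner than the paper's, but the route is genuinely different. After the same change of variables $\tilde c_j = h^{j/2}c_j$, the paper does \emph{not} compute the supremum exactly: it obtains the lower bound by testing with $c=(0,\dots,0,1)$, and for the upper bound it splits the numerator as $|c_{k-1}|\|v_{k-1}\|_{\mathscr{H}^2(\Omega)} + \sum_{j\leq k-2}|c_j|\|v_j\|_{\mathscr{H}^2(\Omega)}$ and bounds each piece by a different crude minoration of the denominator, yielding
\[
\sqrt{h^{1-k}}\,\frac{d^k_\mathscr{H}}{d^k_\mathscr{B}} + h^{(2-k)/2}\cdot C\,,
\]
which after squaring gives the desired $(1+o(1))$. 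Your weighted Cauchy--Schwarz argument instead produces the \emph{exact} value $\sum_{j=0}^{k-1} h^{-j}\|v_j\|_{\mathscr{H}^2(\Omega)}^2/N_\mathscr{B}(P_j)^2$ as a finite sum, from which the asymptotics are immediate. Your approach buys an explicit closed form and avoids the slightly ad hoc splitting; the paper's approach avoids checking the equality case of Cauchy--Schwarz and is marginally more hands-on. Both rest on the same identifications $\|v_{k-1}\|_{\mathscr{H}^2(\Omega)}=d^k_\mathscr{H}$ and $N_\mathscr{B}(P_{k-1})=d^k_\mathscr{B}$, which you justify correctly.
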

 
 \begin{proof}
Considering $c = (0,\dots,0,1)$ we obtain
  \[\begin{split}
 	\sup_{c\in\mathbb{C}^k\setminus\{0\}}\frac{h\left(\sum_{j=0}^{k-1}|c_j|\|v_j\|_{\mathscr{H}^2(\Omega)}\right)^2}
	{h	\sum_{j=0}^{k-1}|h^{j/2}c_j|^2N_{\mathscr{B}}\left(P_j\right)^2}
	\geq 
	\frac{h^{(1-k)}\|v_{k-1}\|_{\mathscr{H}^2(\Omega)}^2}
	{N_{\mathscr{B}}\left(P_{k-1}\right)^2}\,.
 \end{split}\]
 Conversely, for $c = (c_0,\dots, c_{k-1})\in \mathbb{C}\setminus\{0\}$ we have
   \[\begin{split}
 	&\frac{\sum_{j=0}^{k-1}|c_j|\|v_j\|_{\mathscr{H}^2(\Omega)}}
	{\sqrt{\sum_{j=0}^{k-1}|h^{j/2}c_j|^2N_{\mathscr{B}}\left(P_j\right)^2}}
	=
 	\frac{|c_{k-1}|\|v_{k-1}\|_{\mathscr{H}^2(\Omega)} + \sum_{j=0}^{k-2}|c_j|\|v_j\|_{\mathscr{H}^2(\Omega)}}
	{\sqrt{\sum_{j=0}^{k-1}|h^{j/2}c_j|^2N_{\mathscr{B}}\left(P_j\right)^2}}
	\\&\leq
	\sqrt{\frac{h^{(1-k)}\|v_{k-1}\|_{\mathscr{H}^2(\Omega)}^2}
	{N_{\mathscr{B}}\left(P_{k-1}\right)^2}}
	+
	h^{(2-k)/2}\sup_{c\in\mathbb{C}^{k-1}}\frac{\sum_{j=0}^{k-2}|c_j|\|v_j\|_{\mathscr{H}^2(\Omega)}}
	{\sqrt{\sum_{j=0}^{k-2}|c_j|^2N_{\mathscr{B}}\left(P_j\right)^2}}
	\,.
 \end{split}\]
 Note now that $\|v_{k-1}\|_{\mathscr{H}^2(\Omega)} = d^{k}_\mathscr{H}$ and $N_{\mathscr{B}}\left(P_{k-1}\right) = d^{k}_\mathscr{B}$.
 The result follows.
 \end{proof}
 
The upper bound in Point \eqref{it.AsymEff+} of Theorem \ref{thm.boundstatesDirac} follow from Lemmas \ref{lem.bargNormsup}, \ref{lem.HardySup} and \ref{lem.MaxSup}.
\subsection{\texorpdfstring{Lower bound of Point \eqref{it.AsymEff+} of Theorem \ref{thm.boundstatesDirac}}{Lower bound}}\label{sec.32}

\subsubsection{Preliminaries}
The key point in the proof of the lower bound is to construct an orthogonal projection in $\mathscr{H}^2(\Omega)$ onto a space similar to the one used for the upper bound and to obtain estimates for the remaining terms.
The proof of the lower bound closely follows that presented in \cite[Section 3.1.2]{BLTRS23}. 
A notable difference is that polynomials do not belong to the Hardy space $\mathscr{H}^2(\Omega)$. This is addressed through the introduction of the Hardy-Taylor expansion described below.
Note that in \cite{BLTRS23}, the authors adhered to the classical Taylor expansion and overlooked the elegant and natural projection properties of the Hardy-Taylor expansion (see Lemma \ref{lem.tayl2}). This distinction is a noteworthy difference between the two proofs, offering new insight into the arguments presented.

Let $k\in\mathbb{N}^*$ be fixed in this section. 

\begin{notation}\label{not.quasiMode2}
	~
	\begin{enumerate}[\rm a)]
	\item
For \(l \in \{0, \dots, k-1\}\), we define
\[
\mathbb{X}_{k,l} = \{u \in \mathscr{H}^2(\Omega) : \forall j \in \{0, \dots, k-1\}, \; u^{(j)}(z_{\min}) = \delta_{jl}\}.
\]
Then we denote by \(v_l \in\mathbb{X}_{k,l} \subset  \mathscr{H}^2(\Omega)\) the unique minimizer of \(\mathrm{dist}_{\mathscr{H}^2(\Omega)}(0, \mathbb{X}_{k,l})\). Note that \(\mathbb{X}_{k,k-1} = \mathbb{X}_k\) as defined in Point \eqref{it.AsymEff+} of Theorem \ref{thm.boundstatesDirac}, ensuring that the notation for \(v_k\) is consistent with that introduced in Notation \ref{not.quasiMode}.

	\item 
	We denote by $\mathrm{Tayl}_{\mathscr{H}^2(\Omega)}(w)$ the Hardy - Taylor expansion of degree $k-1$ of the function $w\in\mathscr{H}^2(\Omega)$ at $z_{\rm min}$ :
	\[
		\mathrm{Tayl}_{\mathscr{H}^2(\Omega)}(w) = \sum_{l=0}^{k-1}w^{(l)}(z_{\rm min})v_l\in \mathscr{H}^2(\Omega)\,.
	\]
	\item We denote by $\mathscr{W}_k(h)\subset \mathscr{H}^2(\Omega)$ a subspace such that $\dim \mathscr{W}_k(h) = k$ and
	\begin{equation}\label{eq:sup}
	\sup_{w\in \mathscr{W}_k(h)\setminus\{0\}}\frac{h\|w\|^2_{\mathscr{H}^2(\Omega)}}{\|e^{-\phi/h}w\|^2}\leq \lambda^{\mathrm{eff}}_k(h)(1+o(1))\,.
	\end{equation}
	\end{enumerate}
	
\end{notation}

Let us clarify the relationship between the Taylor expansion and the Hardy-Taylor expansion as defined in Notations \ref{not.quasiMode} and \ref{not.quasiMode2} and prove a notable property of $\mathrm{Tayl}_{\mathscr{H}^2(\Omega)}$.

\begin{lemma}\label{lem.tayl2}
The following assertions hold:
\begin{enumerate}[\rm (i)]
\item \label{pt.orthoproj}The operator
\[
	\mathrm{Tayl}_{\mathscr{H}^2(\Omega)}\colon \mathscr{H}^2(\Omega)\to \mathscr{H}^2(\Omega)
\]
is the $\mathscr{H}^2(\Omega)$-orthogonal projection onto the orthogonal of
\[
\widetilde{\mathbb{X}}_{k+1} = \{u \in \mathscr{H}^2(\Omega) : \forall j \in \{0, \dots, k-1\}, \; u^{(j)}(z_{\min}) =0\}.
\]
\item \label{pt.taylortaylor}
For any \(w \in \mathscr{H}^2(\Omega)\), the following identity holds:
\[
\mathrm{Tayl}\left(w - \mathrm{Tayl}_{\mathscr{H}^2(\Omega)}(w)\right) = 0\,.
\]
\end{enumerate}
\end{lemma}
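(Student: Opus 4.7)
The plan is to identify $\mathrm{Tayl}_{\mathscr{H}^2(\Omega)}$ with the $\mathscr{H}^2(\Omega)$-orthogonal projection onto $\widetilde{\mathbb{X}}_{k+1}^{\perp}$ by combining the variational characterization of the $v_l$'s with elementary Hilbert space geometry. Point \eqref{pt.taylortaylor} will then follow immediately from \eqref{pt.orthoproj}.

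First I would observe that the derivative evaluations $\ell_j \colon w \mapsto w^{(j)}(z_{\min})$ are continuous linear functionals on $\mathscr{H}^2(\Omega)$, a standard Hardy space fact (presumably treated in Appendix \ref{app.A}; indeed, it suffices to control point evaluations at an interior point via a Cauchy-type estimate and differentiate). Consequently $\widetilde{\mathbb{X}}_{k+1} = \bigcap_{j=0}^{k-1}\ker \ell_j$ is a closed subspace of $\mathscr{H}^2(\Omega)$ of codimension at most $k$.

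Next I would exploit the variational definition of $v_l \in \mathbb{X}_{k,l}$. Since $\mathbb{X}_{k,l}$ is the closed affine subspace $v_l + \widetilde{\mathbb{X}}_{k+1}$, the standard projection argument (differentiating $t \mapsto \|v_l + tu\|_{\mathscr{H}^2(\Omega)}^2$ at $t = 0$ for every $u \in \widetilde{\mathbb{X}}_{k+1}$, and likewise with $it$) yields $v_l \perp \widetilde{\mathbb{X}}_{k+1}$ for each $l \in \{0,\dots,k-1\}$. The interpolation property $v_l^{(j)}(z_{\min}) = \delta_{jl}$ makes the family $(v_l)$ linearly independent, so $\{v_0,\dots,v_{k-1}\}$ is a basis of $\widetilde{\mathbb{X}}_{k+1}^{\perp}$, which therefore has dimension exactly $k$.

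The orthogonal projection then admits an explicit formula. For any $w \in \mathscr{H}^2(\Omega)$, the element $\mathrm{Tayl}_{\mathscr{H}^2(\Omega)}(w) = \sum_{l=0}^{k-1} w^{(l)}(z_{\min})\, v_l$ lies in $\widetilde{\mathbb{X}}_{k+1}^{\perp}$ by construction, while the interpolation relation gives
\[
\bigl(w - \mathrm{Tayl}_{\mathscr{H}^2(\Omega)}(w)\bigr)^{(j)}(z_{\min}) = w^{(j)}(z_{\min}) - \sum_{l=0}^{k-1} w^{(l)}(z_{\min})\,\delta_{jl} = 0, \quad j \in \{0,\dots,k-1\},
\]
so $w - \mathrm{Tayl}_{\mathscr{H}^2(\Omega)}(w) \in \widetilde{\mathbb{X}}_{k+1}$. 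This is precisely the orthogonal decomposition of $w$ along $\widetilde{\mathbb{X}}_{k+1}^{\perp} \oplus \widetilde{\mathbb{X}}_{k+1}$, proving \eqref{pt.orthoproj}. For \eqref{pt.taylortaylor}, since $\mathrm{Tayl}(u)$ is by definition determined by the quantities $u^{(l)}(z_{\min})$, $l \in \{0,\dots,k-1\}$, applying it to $u = w - \mathrm{Tayl}_{\mathscr{H}^2(\Omega)}(w)$ and using the vanishing just established yields $\mathrm{Tayl}(u) = 0$. I do not anticipate a serious obstacle here: the argument is a direct application of the projection theorem once the continuity of the $\ell_j$'s is granted, and the interpolation property of the basis $(v_l)$ does all the bookkeeping.
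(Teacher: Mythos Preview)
Your proposal is correct and follows essentially the same approach as the paper: both use the variational definition of the $v_l$ to obtain $v_l \perp \widetilde{\mathbb{X}}_{k+1}$, identify $\mathrm{span}(v_0,\dots,v_{k-1})$ with $\widetilde{\mathbb{X}}_{k+1}^{\perp}$, and verify the orthogonal decomposition via the interpolation relations. The only cosmetic differences are that the paper proves \eqref{pt.taylortaylor} first by a direct computation and establishes $\mathrm{span}(v_l)=\widetilde{\mathbb{X}}_{k+1}^{\perp}$ by an orthogonality argument rather than your codimension count.
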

\begin{proof}
	Let us begin with Point \eqref{pt.taylortaylor}: 
	\[\begin{split}
		\mathrm{Tayl}\left(\mathrm{Tayl}_{\mathscr{H}^2(\Omega)}(w)\right)(z) 
		&= \sum_{l=0}^{k-1}w^{(l)}(z_{\rm min})\mathrm{Tayl}(v_l)(z)
		=
		\sum_{l=0}^{k-1}\frac{w^{(l)}(z_{\rm min})}{l!}(z-z_{\rm min})^l	
		\\&=
		\mathrm{Tayl}(w)(z) 
		\,.	
	\end{split}\]
	
	We now prove Point \eqref{pt.orthoproj}. Note that, by the Cauchy formula estimation from Point \eqref{it.CauchyCurved} of Proposition \ref{prop:HardyCurved}, $\widetilde{\mathbb{X}}_{k+1}$ and $(\mathbb{X}_{k,l})_{0\leq l\leq k-1}$ are (non empty) closed subspaces so that the orthogonal projection is indeed well-defined.
	Let $0\leq l\leq k-1$ and $u\in \widetilde{\mathbb{X}}_{k+1}$, then $u+v_l\in \mathbb{X}_{k,l}$. Since $v_l$ is the orthogonal projection of $0$ onto $\mathbb{X}_{k,l}$, we get
	\[
	0 = \langle v_l, (u+v_l)-v_l \rangle_{\mathscr{H}^2(\Omega)} =  \langle v_l, u\rangle_{\mathscr{H}^2(\Omega)}\,,
	\]
	and $v_l\in (\widetilde{\mathbb{X}}_{k+1})^\perp$. This proves that
	\begin{equation*}
		{\rm span}(v_0,\dots,v_{k-1})\subset (\widetilde{\mathbb{X}}_{k+1})^\perp\,.
	\end{equation*}
	Assume now that $u\in (\widetilde{\mathbb{X}}_{k+1})^\perp\cap 
	({\rm span}(v_0,\dots,v_{k-1}))^\perp$. Then, $u-\mathrm{Tayl}_{\mathscr{H}^2(\Omega)}(u)\in \widetilde{\mathbb{X}}_{k+1}$ so that
	\[
	0 = \langle u, u-\mathrm{Tayl}_{\mathscr{H}^2(\Omega)}(u) \rangle_{\mathscr{H}^2(\Omega)} =  \|u\|^2_{\mathscr{H}^2(\Omega)}\,.
	\]	
	We proved that
	\begin{equation}\label{eq.spanproj2}
		{\rm span}(v_0,\dots,v_{k-1})=(\widetilde{\mathbb{X}}_{k+1})^\perp\,.
	\end{equation}
	Finally, note that 
	\begin{equation}\label{eq.spanproj3}
		\mathrm{Tayl}_{\mathscr{H}^2(\Omega)}(v) 
		= \begin{cases}
		v&\text{ if } v\in {\rm span}(v_0,\dots,v_{k-1})\,,
		\\
		0&\text{ if } v\in  \widetilde{\mathbb{X}}_{k+1}\,.
		\end{cases}
	\end{equation}
	We conclude from \eqref{eq.spanproj2} and \eqref{eq.spanproj3} that $\mathrm{Tayl}_{\mathscr{H}^2(\Omega)}$ is the orthogonal projection onto $(\widetilde{\mathbb{X}}_{k+1})^\perp$ and Point \eqref{pt.orthoproj} follows.
	
		Remark that by the Cauchy formula estimation from Point \eqref{it.CauchyCurved} of Proposition \ref{prop:HardyCurved}, we recover the boundedness of $\mathrm{Tayl}_{\mathscr{H}^2(\Omega)}:$
	\[
\|\mathrm{Tayl}_{\mathscr{H}^2(\Omega)}(w)\|_{\mathscr{H}^2(\Omega)}
\leq 
\sum_{l=0}^{k-1}|w^{(l)}(z_{\rm min})|\|v_l\|_{\mathscr{H}^2(\Omega)}
\leq C \|w\|_{\mathscr{H}^2(\Omega)}\,.
\]
\end{proof}

The next two lemmas provide a priori bounds on the functions in $\mathscr{W}_{k}(h)$.
\begin{lemma}\label{lem.vhL2}
	There exist constants \( C \) and \( h_0 > 0 \) such that for any \( v \in \mathscr{W}_k(h) \) and \( h \) in the range \( (0, h_0) \), the following inequality holds:
\[
\|v\|^2 \leq C h^{-k} e^{2\phi_{\min}/h} \int_{\Omega} e^{-2\phi/h} |v|^2 \, dx.
\]
\end{lemma}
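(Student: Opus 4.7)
The plan is to chain three ingredients: the defining property of $\mathscr{W}_k(h)$, the upper bound on $\lambda^{\mathrm{eff}}_k(h)$ already obtained in Section~\ref{sec.31}, and the continuous embedding of the Hardy space into $L^2(\Omega)$.

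First, by the very definition \eqref{eq:sup} of $\mathscr{W}_k(h)$, for every $v\in\mathscr{W}_k(h)\setminus\{0\}$,
\[
h\,\|v\|_{\mathscr{H}^2(\Omega)}^{2}\leq \lambda^{\mathrm{eff}}_k(h)(1+o(1))\,\|e^{-\phi/h}v\|^{2}.
\]
The upper bound constructed in Section~\ref{sec.31} with the test family $(w_n)$ gives, for $h$ small enough, $\lambda^{\mathrm{eff}}_k(h)\leq C_1\,h^{1-k}e^{2\phi_{\min}/h}$, with $C_1$ depending only on $\Omega$ and $\phi$. Plugging this into the previous estimate yields, for some constant $C_2>0$,
\[
\|v\|_{\mathscr{H}^2(\Omega)}^{2}\leq C_2\,h^{-k}e^{2\phi_{\min}/h}\int_{\Omega}e^{-2\phi/h}|v|^{2}\,\dd x.
\]

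Next I would invoke the continuous embedding $\mathscr{H}^2(\Omega)\hookrightarrow L^2(\Omega)$, namely an inequality $\|v\|_{L^{2}(\Omega)}\leq C_3\,\|v\|_{\mathscr{H}^{2}(\Omega)}$ with $C_3$ depending only on $\Omega$. This is a standard property of Hardy spaces on strips of bounded width and is discussed in Appendix~\ref{app.A}: on $\Omega_0$ it follows from Plancherel in the longitudinal variable after expressing a Hardy function as a Fourier integral with exponentially weighted amplitude, and on $\Omega$ it can be derived from the corresponding estimate on $\Omega_0$ together with interior Cauchy estimates (as in Proposition~\ref{prop:HardyCurved}), exploiting that $\kappa$ has compact support so that $\Omega$ coincides with $\Omega_0$ outside a compact region. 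Squaring and chaining with the previous display yields the announced bound with $C=C_2 C_3^2$.

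The main obstacle is this third ingredient. Since $\Omega$ is unbounded, one cannot directly invoke the bounded-domain Poisson representation nor globally transport the estimate via the Riemann map from $\mathbb{D}$ (which is not bilipschitz in the present unbounded setting). The resolution proceeds through the two-piece decomposition described above, treating the straight ends by Fourier analysis and the curved core by the local Cauchy estimate.
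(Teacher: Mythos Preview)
Your proof is correct and follows exactly the same three-step chain as the paper: the defining inequality \eqref{eq:sup} for $\mathscr{W}_k(h)$, the upper bound on $\lambda^{\mathrm{eff}}_k(h)$ from Section~\ref{sec.31}, and the continuous embedding $\mathscr{H}^2(\Omega)\hookrightarrow L^2(\Omega)$ from Proposition~\ref{prop:HardyCurved}\,\eqref{it.inclL2Curved}. The only cosmetic difference is that the paper obtains this embedding via a global biholomorphism $f:\Omega\to S_\delta$ with bounded derivative (Proposition~\ref{prop.biholo}) rather than the two-piece decomposition you sketch, so you can simply cite Proposition~\ref{prop:HardyCurved}\,\eqref{it.inclL2Curved} and drop the last paragraph.
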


\begin{proof}
From the continuous embedding \( \mathscr{H}^2(\Omega) \hookrightarrow L^2(\Omega) \) of Point \eqref{it.inclL2Curved} of Proposition \ref{prop:HardyCurved} and the upper bound of Point \eqref{it.AsymEff+} of Theorem \ref{thm.boundstatesDirac}, there exist \( c, C, h_0 > 0 \) such that, for all \( h \in (0, h_0) \) and all \( v \in \mathscr{W}_k(h) \),
\[
\begin{split}
ch\|v\|^2 & \leq h\|v\|^2_{\mathscr{H}^2(\Omega)} \leq (1+o(1))\lambda^{\mathrm{eff}}_k(h)\int_{\Omega}e^{-2\phi/h}|v|^2\,dx \\
& \leq C h^{1-k} e^{2\phi_{\min}/h} \int_{\Omega} e^{-2\phi/h}|v|^2\,dx.
\end{split}
\]
\end{proof}
The following lemma comes from \cite[Lemma 3.9]{BLTRS23}.
\begin{lemma}\label{lem.normL2}
    Let \(\alpha \in (1/3, 1/2)\).
    Then,
    \[
    \lim_{h \to 0} \sup_{v \in \mathscr{W}_{k}(h) \setminus \{0\}} \left| \frac{\int_{D(x_{\text{min}}, h^\alpha)} e^{-2\phi/h} |v(x)|^2 \, dx}{\int_{\Omega} e^{-2\phi/h} |v(x)|^2 \, dx} - 1 \right| = 0.
    \]
\end{lemma}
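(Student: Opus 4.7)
The plan is to show that the numerator and denominator differ by a quantity which is $o(1)$ times the denominator, uniformly in $v$. Equivalently, one shows
\[
\sup_{v\in\mathscr{W}_k(h)\setminus\{0\}}\frac{\int_{\Omega\setminus D(x_{\min},h^\alpha)}e^{-2\phi/h}|v|^2\,\mathrm{d}x}{\int_{\Omega}e^{-2\phi/h}|v|^2\,\mathrm{d}x}\xrightarrow[h\to 0]{}0\,.
\]
The strategy combines a pointwise Laplace-type bound on the weight $e^{-2\phi/h}$ outside $D(x_{\min},h^\alpha)$ with the polynomial $L^2$ control on $v$ provided by Lemma \ref{lem.vhL2}.

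\textbf{Step 1: pointwise bound on the weight.} Thanks to Assumption \ref{assu.0}, $x_{\min}$ is the unique minimum of $\phi$ in $\overline\Omega$, the Hessian there is positive definite, and $\liminf_{|x|\to\infty,\,x\in\Omega}\phi(x)>\phi_{\min}$. By Taylor expansion there exist $R_0>0$ and $c_1>0$ such that $\phi(x)-\phi_{\min}\geq c_1|x-x_{\min}|^2$ for $x\in\overline{\Omega}\cap\overline{D(x_{\min},R_0)}$. Moreover, combining the uniqueness of the minimum with the $\liminf$ condition and a compactness argument, there exists $c_2>0$ with $\phi(x)-\phi_{\min}\geq c_2$ for all $x\in\overline{\Omega}\setminus D(x_{\min},R_0)$. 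Hence, for $h$ small enough (so that $c_1 h^{2\alpha}\leq c_2$),
\[
\forall x\in\Omega\setminus D(x_{\min},h^\alpha)\,,\quad \phi(x)-\phi_{\min}\geq c_1 h^{2\alpha}\,,
\]
which yields the pointwise bound
\[
e^{-2\phi(x)/h}\leq e^{-2\phi_{\min}/h}\,e^{-2c_1 h^{2\alpha-1}}\,.
\]

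\textbf{Step 2: conclusion via the $L^2$ a priori bound.} Using the previous bound,
\[
\int_{\Omega\setminus D(x_{\min},h^\alpha)}e^{-2\phi/h}|v|^2\,\mathrm{d}x\leq e^{-2\phi_{\min}/h}\,e^{-2c_1 h^{2\alpha-1}}\,\|v\|^2\,.
\]
Invoking Lemma \ref{lem.vhL2} to bound $\|v\|^2$, there exists $C>0$ such that, for every $v\in \mathscr{W}_k(h)$,
\[
\int_{\Omega\setminus D(x_{\min},h^\alpha)}e^{-2\phi/h}|v|^2\,\mathrm{d}x\leq C h^{-k} e^{-2c_1 h^{2\alpha-1}}\int_{\Omega}e^{-2\phi/h}|v|^2\,\mathrm{d}x\,.
\]
Since $\alpha<1/2$, the exponent satisfies $2\alpha-1<0$, so $h^{2\alpha-1}\to+\infty$ as $h\to 0$, and therefore $h^{-k}e^{-2c_1 h^{2\alpha-1}}\to 0$. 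This ratio is independent of $v$, which gives the uniformity in $v\in\mathscr{W}_k(h)\setminus\{0\}$ and concludes the proof.

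\textbf{Main obstacle.} The only delicate point is justifying the uniform lower bound $\phi-\phi_{\min}\geq c_1h^{2\alpha}$ on the whole of $\Omega\setminus D(x_{\min},h^\alpha)$, which requires patching together the local non-degeneracy of $\phi$ near $x_{\min}$ with its behaviour at infinity; this is precisely what Assumption \ref{assu.0} is designed to provide. The restriction $\alpha>1/3$ is not needed for this lemma alone (only $\alpha<1/2$ matters here); it is imposed to be compatible with the companion estimate of Lemma \ref{lem.bargNormsup}, where the cubic Taylor remainder of $\phi$ forces $3\alpha-1>0$.
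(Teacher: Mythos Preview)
Your proof is correct and follows the same two-step skeleton as the paper: first establish $\phi-\phi_{\min}\geq c\,h^{2\alpha}$ on $\Omega\setminus D(x_{\min},h^\alpha)$, then feed the resulting pointwise bound on $e^{-2\phi/h}$ into Lemma~\ref{lem.vhL2}. The only difference is in how the first step is justified. The paper uses the maximum principle for the subharmonic function $\phi$ (recall $\Delta\phi=1$) to reduce the infimum over $\Omega\setminus D(x_{\min},h^\alpha)$ to the circle $\partial D(x_{\min},h^\alpha)$, where the Taylor expansion applies directly. You instead work purely from Assumption~\ref{assu.0}: a local Taylor lower bound $\phi-\phi_{\min}\geq c_1|x-x_{\min}|^2$ near $x_{\min}$, glued with a uniform lower bound away from $x_{\min}$ coming from uniqueness of the minimum plus the $\liminf$ condition at infinity. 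Your argument is a bit more hands-on but more elementary---it does not use the equation satisfied by $\phi$ and would go through for any $\phi$ obeying Assumption~\ref{assu.0}. Your closing remark that only $\alpha<1/2$ is used here (the constraint $\alpha>1/3$ being needed elsewhere) is also correct.
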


\begin{proof}
Assume that $\alpha\in\left(\frac{1}{3},\frac{1}{2}\right)$. For all $x\in D(x_{\min},h^\alpha)$, we find that
\[\phi(x)=\phi_{\min}+\frac{1}{2}\mathsf{Hess}_{x_{\min}}\phi (x-x_{\min}, x-x_{\min})+\mathscr{O}(h^{3\alpha})\,.\]
According to the maximum principle,
\[
\min_{x\in D(x_{\rm min}, \ h^{\alpha})^c}\phi(x) 
= \min_{x\in \partial D(x_{\rm min}, \ h^{\alpha})^c}\phi(x)
\geq \phi_{\min} + \frac {h^{2\alpha}}2 \min \mathrm{sp}(\mathsf{Hess}_{x_{\min}}) + \mathscr{O}(h^{3\alpha})\,.
\]
Then for any $v \in \mathscr W_k(h)$ we have by Lemma \ref{lem.vhL2}
\[
\frac {\int_{\Omega \setminus D(x_{\min},h^\alpha)} e^{-2\phi/h} |v(x)|^2 \, dx }{\int_{\Omega} e^{-2\phi/h} |v(x)|^2 \, dx} 
\leq \frac {\|v\|^2 e^{-2\phi_{\min}/h} e^{- h^{2\alpha-1} \min \mathrm{sp}(\mathsf{Hess}_{x_{\min}}) + \mathscr O(h^{3\alpha-1})} }{C^{-1} \|v\|^2 h^k e^{-2\phi_{\min}/h} },
\]
and the conclusion follows.
\end{proof}
\subsubsection{Proof of the lower bound}
We are now well-positioned to analyze the lower bound. 

Let  $\alpha\in(1/3,1/2)$ and $v\in\mathscr{W}_k(h)$. With Lemma \ref{lem.normL2},
	\begin{equation}\label{eq.taylphase} 
	he^{2\phi_{\min}/h}\|v\|_{\mathscr{H}^2(\Omega)}^2(1+o(1))
	\leq
	\lambda^{\mathrm{eff}}_k(h)
	\|e^{-\frac{1}{2h}\mathsf{Hess}_{x_{\min}}\phi (x-x_{\min}, x-x_{\min})}v\|_{L^2(D(x_{\min},h^{\alpha}))}^2\,.
	\end{equation}
	In the following, we divide the proof into several parts. First, we replace $v$ with its Taylor expansion of order $k-1$ at $x_{\min}$ in the right-hand side (RHS) of \eqref{eq.taylphase}. Second, we substitute the Hardy-Taylor expansion into the left-hand side (LHS) of the same equation.	
	\begin{enumerate}[\rm i.]
		\item 
		By the Cauchy formula estimation from Point \eqref{it.CauchyCurved} of Proposition \ref{prop:HardyCurved}, there exist constants \(C > 0\) and \(h_0 > 0\) such that for all \(h \in (0, h_0)\), for every \(v \in \mathscr{H}^2(\Omega)\), for all \(z_0 \in D(x_{\min}, h^{\alpha})\), and for each \(n \in \{0, \dots, k\}\),
		\begin{equation}\label{eq.CCS2}
		|v^{(n)}(z_{0})|\leq C\|v\|_{\mathscr{H}^2(\Omega)}\,.
		\end{equation}
		Let us define, for all $v\in \mathscr{H}^2(\Omega)$,
		\[N_{h}(v):=\|e^{-\frac{1}{2h}\mathsf{Hess}_{x_{\min}}\phi (x-x_{\min}, x-x_{\min})}v\|_{L^2(D(x_{\min},h^{\alpha}))}\,.\]
		By the Taylor formula, we can express
\[
v = \mathrm{Tayl}(v) + R(v),
\]
where $\mathrm{Tayl}(v)$ is the $(k-1)$-th degree polynomial Taylor approximation of $v$ at $z_{\min}$, as defined in Notation \ref{not.quasiMode}. Additionally, for all $z_0 \in D(z_{\min}, h^\alpha)$,
		\[|R(v)(z_{0})|\leq C|z_0-z_{\min}|^k\sup_{D(z_{\min}, h^\alpha)}|v^{(k)}|\,.\]
		With \eqref{eq.CCS2} and after rescaling, the Taylor remainder satisfies
\begin{equation}\label{eq.pre35}
N_{h}(R(v)) \leq Ch^{\frac{k+1}{2}} \|v\|_{\mathscr{H}^2(\Omega)}.
\end{equation}
Thus, by the triangle inequality,
\begin{equation}\label{eq.estitaylorBerg}
|N_{h}(v) - N_{h}(\mathrm{Tayl}(v))| \leq Ch^{\frac{k+1}{2}} \|v\|_{\mathscr{H}^2(\Omega)}.
\end{equation}
Therefore, with \eqref{eq.taylphase}, we obtain
\[
(1+o(1))e^{\phi_{\min}/h} \sqrt{h} \|v\|_{\mathscr{H}^2(\Omega)} \leq \sqrt{\lambda^{\mathrm{eff}}_k(h)} N_{h}(\mathrm{Tayl}(v)) + C \sqrt{\lambda^{\mathrm{eff}}_k(h)} h^{\frac{1+k}{2}} \|v\|_{\mathscr{H}^2(\Omega)},
\]
and so, according to the upper bound in Point \eqref{it.AsymEff+} of Theorem \ref{thm.boundstatesDirac},
\begin{equation}\label{eq.ineqTaylor}
(1+o(1))e^{\phi_{\min}/h} \sqrt{h} \|v\|_{\mathscr{H}^2(\Omega)} \leq \sqrt{\lambda^{\mathrm{eff}}_k(h)} N_{h}(\mathrm{Tayl}(v)) \leq \sqrt{\lambda^{\mathrm{eff}}_k(h)} \hat{N}_{h}(\mathrm{Tayl}(v)),
\end{equation}
where
\[
\hat{N}_{h}(w) = \|e^{-\frac{1}{2h}\mathsf{Hess}_{x_{\min}}\phi (x-x_{\min}, x-x_{\min})}w\|_{L^2(\mathbb{R}^2)}.
\]
		By \eqref{eq.estitaylorBerg} and Lemma \ref{lem.tayl2}, we also have
\begin{equation*}
|N_{h}( \mathrm{Tayl}_{\mathscr{H}^2(\Omega)}(v)) - N_{h}(\mathrm{Tayl}(v))| \leq Ch^{\frac{k+1}{2}} \|v\|_{\mathscr{H}^2(\Omega)}
\end{equation*}
so, according to the upper bound in Point \eqref{it.AsymEff+} of Theorem \ref{thm.boundstatesDirac}, and \eqref{eq.ineqTaylor},
\begin{equation}\label{eq.ineqTaylor2}
\begin{split}
(1+o(1))e^{\phi_{\min}/h} \sqrt{h} \|v\|_{\mathscr{H}^2(\Omega)} 
&\leq \sqrt{\lambda^{\mathrm{eff}}_k(h)} N_{h}(\mathrm{Tayl}_{\mathscr{H}^2(\Omega)}(v)) 
\\&\leq \sqrt{\lambda^{\mathrm{eff}}_k(h)} \hat{N}_{h}(\mathrm{Tayl}_{\mathscr{H}^2(\Omega)}(v)).
\end{split}
\end{equation}
		Inequalities \eqref{eq.ineqTaylor} and  \eqref{eq.ineqTaylor2} show that  $\mathrm{Tayl}$ and $\mathrm{Tayl}_{\mathscr{H}^2(\Omega)}$ are injective on $\mathscr{W}_{k}(h)$ and 
		\begin{equation}\label{eq.dim-k}
		\mathrm{dim}\mathrm{Tayl}\mathscr{W}_{k}(h)= \mathrm{dim}\mathrm{Tayl}_{\mathscr{H}^2(\Omega)}\mathscr{W}_{k}(h)=k\,.
		\end{equation}
		
		\item By Lemma \ref{lem.tayl2} and \eqref{eq.ineqTaylor}, we obtain
\begin{equation}\label{eq.ineqTaylor3}
\begin{split}
&\|v\|_{\mathscr{H}^2(\Omega)}^2  = \|\mathrm{Tayl}_{\mathscr{H}^2(\Omega)}(v)\|_{\mathscr{H}^2(\Omega)}^2+ \|v-\mathrm{Tayl}_{\mathscr{H}^2(\Omega)}(v)\|_{\mathscr{H}^2(\Omega)}^2
\\&\leq (1+o(1)) e^{-2\phi_{\min}/h} h^{-1}\lambda^{\mathrm{eff}}_k(h) \hat{N}_{h}(\mathrm{Tayl}(v))^2.
\end{split}
\end{equation}		
Inequality \eqref{eq.ineqTaylor3} ensures that
\begin{equation}\label{eq.quotient5}
\sup_{w\in \mathrm{Tayl}_{\mathscr{H}^2(\Omega)}\mathscr{W}_{k}(h)\setminus \{0\}}\frac{\|w\|_{\mathscr{H}^2(\Omega)}^2}{\hat{N}_{h}(\mathrm{Tayl}(w))^2}
\leq  
(1+o(1)) e^{-2\phi_{\min}/h} h^{-1}\lambda^{\mathrm{eff}}_k(h) \,.
\end{equation}

		\item
		We define $u = \mathrm{Tayl}_{\mathscr{H}^2(\Omega)}(v) - v^{(k-1)}(z_{\rm min})v_{k-1}$. By the triangle inequality, we have
		\begin{equation}\label{eq.est123}\begin{split}
			\|u\|_{\mathscr{H}^2(\Omega)}
			&\leq 
			\sum_{n=0}^{k-2}
			|v^{(n)}(z_{\min})|\|v_n\|_{\mathscr{H}^2(\Omega)}
			\leq 
			h^{-\frac{k-2}{2}}\sum_{n=0}^{k-2}h^{\frac{n}{2}}|v^{(n)}(z_{\min})|\|v_n\|_{\mathscr{H}^2(\Omega)}
			\\
			&\leq h^{-\frac{k-2}{2}}\sum_{n=0}^{k-1}h^{\frac{n}{2}}|v^{(n)}(z_{\min})||\|v_n\|_{\mathscr{H}^2(\Omega)}
			\leq Ch^{-\frac{k-2}{2}} h^{-\frac{1}{2}}\hat N_{h}(\mathrm{Tayl}(v))\,,
		\end{split}\end{equation}
		where we used the rescaling property
		\begin{equation}\label{eq.rescaling.Nh}
		\hat N_{h}\left(\sum_{n=0}^{k-1} c_{n}(z-z_{\min})^n\right)=h^{\frac{1}{2}}\hat N_{1}\left(\sum_{n=0}^{k-1} c_{n}h^{\frac{n}{2}}(z-z_{\min})^n\right)\,,
		\end{equation}
		and the equivalence of the norms in finite dimension: $\exists C>0\,,\forall d\in\mathbb{C}^k$:
		\[ C^{-1}\sum_{n=0}^{k-1}|d_{n}|\|v_n\|_{\mathscr{H}^2(\Omega)}\leq \hat N_{1}\left(\sum_{n=0}^{k-1} \frac{d_{n}}{n!}(z-z_{\min})^n\right) \leq C\sum_{n=0}^{k-1}|d_{n}|\|v_n\|_{\mathscr{H}^2(\Omega)}\,.\]
		Using again the triangle inequality with inequalities \eqref{eq.quotient5},  \eqref{eq.est123} and the upper bound of Point \eqref{it.AsymEff+} of Theorem \ref{thm.boundstatesDirac},
		\[\begin{split}
			&|v^{(k-1)}(z_{\rm min})|d^{k}_\mathscr{H}
			=
			\|v^{(k-1)}(z_{\rm min})v_{k-1}\|_{\mathscr{H}^2(\Omega)} 
			\leq  
			\|u\|_{\mathscr{H}^2(\Omega)} + \|\mathrm{Tayl}_{\mathscr{H}^2(\Omega)}(v)\|_{\mathscr{H}^2(\Omega)}
			\\&
			\leq
			\left(Ch^{-\frac{k-2}{2}} h^{-\frac{1}{2}} 
			+
			(1+o(1)) e^{-\phi_{\min}/h} h^{-1/2}\sqrt{\lambda^{\mathrm{eff}}_k(h)}\right)\hat N_{h}(\mathrm{Tayl}(v))
			\,.
		\end{split}\]
		Remark now that a rescaling ensures that
		\[\begin{split}
			\sup_{p\in \mathbb{C}_{k-1}[X]\setminus\{0\}}\frac{|p^{(k-1)}(z_{\rm min})|}{\hat N_{h}(p)}			
			=
			\frac{1}{h^{k/2}d^{k}_\mathscr{B}}\,.
		\end{split}\]
		By \eqref{eq.dim-k}, ${\rm Tayl} \mathscr{W}_k(h) = \mathbb{C}_{k-1}[X]$,  and 
		\[
		\frac{d^{k}_\mathscr{H}}{d^{k}_\mathscr{B}}h^{-k/2}(1 + o(1))\leq e^{-\phi_{\min}/h} h^{-1/2}\sqrt{\lambda^{\mathrm{eff}}_k(h)}\,.
		\]
		The lower bound follows.
		\end{enumerate}

The proof gives some controls on the functions.
\begin{lemma} 
Let $v=v_h$ be a function that realizes the maximum \eqref{eq:sup}.
There exists $C>0$ such that for all $h\in(0,h_0)$, 
\[\begin{split}
&h\|v-\mathrm{Tayl}_{\mathscr{H}^2(\Omega)}(v)\|_{\mathscr{H}^2(\Omega)}^2
+h\|v^{(k-1)}(z_{\rm min})v_{k-1}-\mathrm{Tayl}_{\mathscr{H}^2(\Omega)}(v)\|_{\mathscr{H}^2(\Omega)}^2
\\&\leq o(\lambda^{\mathrm{eff}}_k(h)) \|e^{-\phi/h}v\|_{L^2(\Omega)}^2\,,
\end{split}\]
\end{lemma}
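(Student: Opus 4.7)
The plan is to retrace the chain of inequalities of the lower bound proof of Section \ref{sec.32}, applied now to the specific maximizer $v_h$, and to observe that every loss in that chain must become an $o(1)$ correction. The crucial input is that $\lambda_k^{\mathrm{eff}}(h)$ is an \emph{infimum of maxima}, so any $k$-dimensional subspace $W\subset\mathscr{H}^2(\Omega)$ satisfies $\sup_{W}h\|\cdot\|^2_{\mathscr{H}^2}/\|e^{-\phi/h}\cdot\|^2 \geq \lambda_k^{\mathrm{eff}}(h)$. Combined with \eqref{eq:sup}, this will yield
\[
h\|v_h\|^2_{\mathscr{H}^2} = \lambda_k^{\mathrm{eff}}(h)\|e^{-\phi/h}v_h\|^2(1+o(1))\,.
\]
From there my goal will be to show that the three quantities $\|v_h\|_{\mathscr{H}^2}$, $\|\mathrm{Tayl}_{\mathscr{H}^2(\Omega)}(v_h)\|_{\mathscr{H}^2}$ and $|v_h^{(k-1)}(z_{\min})|d^k_\mathscr{H}$ agree up to a multiplicative factor $1+o(1)$; this will readily imply both announced estimates.

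Abbreviating $\alpha := v_h^{(k-1)}(z_{\min})$, $T:=\mathrm{Tayl}_{\mathscr{H}^2(\Omega)}$ and $u_h := T(v_h)-\alpha v_{k-1}$, I would first combine Lemma \ref{lem.normL2} and the Cauchy-type remainder estimate \eqref{eq.estitaylorBerg} into the Laplace identity $\|e^{-\phi/h}v_h\|^2 = (1+o(1))e^{-2\phi_{\min}/h}\hat{N}_h(\mathrm{Tayl}(v_h))^2$, and then invoke Bargmann orthogonality of $(P_n)_n$ to bound this from below by keeping only the $P_{k-1}$-mode, giving
\[
\|e^{-\phi/h}v_h\|^2 \geq (1-o(1))e^{-2\phi_{\min}/h}h^k|\alpha|^2(d^k_\mathscr{B})^2\,.
\]
The established asymptotic $\lambda_k^{\mathrm{eff}}(h)=h^{1-k}e^{2\phi_{\min}/h}(d^k_\mathscr{H}/d^k_\mathscr{B})^2(1+o(1))$ combined with the defining equality above will then yield the lower bound $\|v_h\|^2_{\mathscr{H}^2}\geq (1-o(1))|\alpha|^2(d^k_\mathscr{H})^2$. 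Inverting the same chain will produce $\hat{N}_h(\mathrm{Tayl}(v_h))^2=(1+o(1))h^k\|v_h\|^2_{\mathscr{H}^2}(d^k_\mathscr{B}/d^k_\mathscr{H})^2$, which, plugged into the estimate $\|u_h\|^2_{\mathscr{H}^2}\leq Ch^{-(k-1)}\hat{N}_h(\mathrm{Tayl}(v_h))^2$ of Step iii of Section \ref{sec.32}, will produce $\|u_h\|^2_{\mathscr{H}^2} = \mathscr{O}(h)\|v_h\|^2_{\mathscr{H}^2}$.

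The triangle inequality $|\alpha|d^k_\mathscr{H}\leq \|u_h\|_{\mathscr{H}^2}+\|v_h\|_{\mathscr{H}^2}$ will then deliver the matching upper bound $|\alpha|d^k_\mathscr{H}\leq (1+o(1))\|v_h\|_{\mathscr{H}^2}$, and together with the preceding lower bound this forces $|\alpha|d^k_\mathscr{H}=(1+o(1))\|v_h\|_{\mathscr{H}^2}$. The reverse triangle inequality next yields $\|T(v_h)\|_{\mathscr{H}^2}\geq |\alpha|d^k_\mathscr{H}-\|u_h\|_{\mathscr{H}^2}=(1-o(1))\|v_h\|_{\mathscr{H}^2}$, and the Pythagorean decomposition of Lemma \ref{lem.tayl2} will produce $\|v_h-T(v_h)\|^2_{\mathscr{H}^2}=o(\|v_h\|^2_{\mathscr{H}^2})$. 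Multiplying the two resulting bounds by $h$ and invoking the defining equality turns them into the announced $o(\lambda^{\mathrm{eff}}_k(h))\|e^{-\phi/h}v_h\|^2_{L^2(\Omega)}$ estimates. The most delicate point will be the scale comparison $h^{-(k-1)/2}=o(h^{-k/2}(d^k_\mathscr{H}/d^k_\mathscr{B}))$, which is exactly what guarantees that $\|u_h\|^2_{\mathscr{H}^2}$ is genuinely $o(\|v_h\|^2_{\mathscr{H}^2})$ rather than merely bounded by it; this is the same mechanism that made the original lower bound of Point \eqref{it.AsymEff+} work.
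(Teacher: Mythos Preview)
Your proposal is correct and follows essentially the same route as the paper's proof, which is a two-sentence sketch: the paper simply says that the first estimate follows from $\mathrm{Tayl}_{\mathscr{H}^2(\Omega)}$ being an orthogonal projection together with the fact that $v$ realizes the maximum, and that the second estimate is a reformulation of \eqref{eq.est123}. You have spelled out the saturation argument that the paper leaves implicit---namely, that for the maximizer $v_h$ every inequality in the chain of Section~\ref{sec.32} becomes an asymptotic equality, so that $\|v_h\|_{\mathscr{H}^2}$, $\|\mathrm{Tayl}_{\mathscr{H}^2(\Omega)}(v_h)\|_{\mathscr{H}^2}$ and $|v_h^{(k-1)}(z_{\min})|\,d^k_{\mathscr{H}}$ all agree to leading order; this is indeed what makes both Pythagoras and \eqref{eq.est123} yield $o(\lambda_k^{\mathrm{eff}}(h))\|e^{-\phi/h}v\|^2$ bounds. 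One minor comment: the ``delicate scale comparison'' you mention at the end, $h^{-(k-1)/2}=o(h^{-k/2})$, is just $h^{1/2}\to 0$ and is not actually delicate.
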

\begin{proof}
The first inequality comes from the fact that $\mathrm{Tayl}_{\mathscr{H}^2(\Omega)}$ is an orthogonal projection (Lemma \ref{lem.tayl2}) and that $v$ realizes the maximum. The last inequality is a reformulation of \eqref{eq.est123}.
\end{proof}

\subsection{Characterization of the positive eigenvalues and consequences}\label{sec.33}
The main two results in this section are Propositions  \ref{prop.asymptoticmu} \& \ref{prop.characterization}. Combining these two propositions with Point \eqref{it.AsymEff+} of Theorem \ref {thm.boundstatesDirac}, we get Point \eqref{it.AsymEff+2} of Theorem \ref{thm.boundstatesDirac}.

\subsubsection{Statement of the characterization}
First, let us note the following identity:
\begin{equation}\label{eq.idenintert}
	e^{\phi/h}d^\times_\mathbf{A}e^{-\phi/h} = -ih(\partial_1 + i\partial_2) = -2ih\partial_{\overline{z}}.
\end{equation}
 Consider the non-decreasing sequence of numbers 
\begin{equation}\label{eq.mukh}
\begin{split}
&\mu_k(h)
:=\inf_{\underset{\dim W=k}{W\subset\mathfrak{h}_{\mathbf{A}}(\Omega)}} \sup_{u\in W\setminus\{0\}}\frac{h\|u\|^2_{\partial\Omega}+\sqrt{h^2\|u\|^4_{\partial\Omega}+\|d^\times_\mathbf{A}u\|^2\|u\|^2}}{2\|u\|^2}
\\&=
\inf_{\underset{\dim W=k}{W\subset H^1(\Omega)+\mathscr{H}^2(\Omega)}} \sup_{v\in W\setminus\{0\}}
	\frac{h\|v\|^2_{\partial\Omega}+\sqrt{h^2\|v\|^4_{\partial\Omega}+\|e^{-\phi/h}(-2ih\partial_{\overline{z}})v\|^2\|e^{-\phi/h}v\|^2}}{2\|e^{-\phi/h}v\|^2}
\,,
\end{split}\end{equation}
where $\mathfrak{h}_\mathbf{A}(\Omega):=H^1(\Omega)+\mathscr{H}^2_{\mathbf{A}}(\Omega)$ with $\mathscr{H}^2_{\mathbf{A}}(\Omega)=e^{-\phi/h}\mathscr{H}^2(\Omega)$. 
The second equality follows directly from \eqref{eq.mukh}.

Due to Lemma \ref{lem.density}, we also have
\[
\begin{split}
&\mu_k(h)=\inf_{\underset{\dim W=k}{W\subset H^1(\Omega)}} \sup_{u\in W\setminus\{0\}}\frac{h\|u\|^2_{\partial\Omega}+\sqrt{h^2\|u\|^4_{\partial\Omega}+\|d^\times_\mathbf{A}u\|^2\|u\|^2}}{2\|u\|^2}
\\&=
\inf_{\underset{\dim W=k}{W\subset H^1(\Omega)}} \sup_{v\in W\setminus\{0\}}
	\frac{h\|v\|^2_{\partial\Omega}+\sqrt{h^2\|v\|^4_{\partial\Omega}+\|e^{-\phi/h}(-2ih\partial_{\overline{z}})v\|^2\|e^{-\phi/h}v\|^2}}{2\|e^{-\phi/h}v\|^2}
\,.
\end{split}
\]

\begin{proposition}\label{prop.asymptoticmu}
Let $k\geq 1$. Then, we have
\begin{equation}\label{eq.mulambdaeff}
\mu_k(h)=(1+o(1))\lambda_k^{\mathrm{eff}}(h)\,.
\end{equation}
Moreover, for $h$ small enough, $\mathscr{D}_h-\mu_k(h)$ is a Fredholm operator with index $0$.
\end{proposition}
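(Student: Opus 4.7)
The plan is to establish the two assertions of the proposition separately: first the asymptotic equivalence \eqref{eq.mulambdaeff}, via matching upper and lower bounds, and then the Fredholm property, which will follow from a gap argument using Theorem \ref{thm.ess}.

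\textbf{Upper bound $\mu_k(h)\leq\lambda_k^{\rm eff}(h)$.} I would first test the min-max defining $\mu_k(h)$ on images of $\mathscr{H}^2(\Omega)$: for a $k$-dimensional $W\subset\mathscr{H}^2(\Omega)$, set $\widetilde W = e^{-\phi/h}W\subset \mathscr{H}^2_{\mathbf A}(\Omega)\subset\mathfrak{h}_{\mathbf A}(\Omega)$. For every $v = e^{-\phi/h}u$ with $u$ holomorphic, the intertwining identity \eqref{eq.idenintert} gives $d^\times_{\mathbf A}v = e^{-\phi/h}(-2ih\partial_{\bar z})u = 0$, so the nonlinear Rayleigh quotient in \eqref{eq.mukh} collapses. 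Using $\phi_{|\partial\Omega}=0$ from Proposition \ref{prop.phi}, it reduces to $h\|u\|_{\partial\Omega}^2/\|e^{-\phi/h}u\|^2$, and taking the infimum over $W$ yields $\mu_k(h)\leq\lambda_k^{\rm eff}(h)$.

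\textbf{Location of $\mu_k(h)$ in the spectral gap.} The upper bound of Point \eqref{it.AsymEff+} of Theorem \ref{thm.boundstatesDirac} gives $\lambda_k^{\rm eff}(h)\leq Ch^{1-k}e^{2\phi_{\min}/h}$. Assumption \ref{assu.0} ensures $2\phi_{\min}<-\delta^2$, whence
\[
\mu_k(h)\leq\lambda_k^{\rm eff}(h) = o\!\left(\sqrt h\,e^{-\delta^2/h}\right) = o(\lambda^+_{\rm ess}(h))
\]
by Theorem \ref{thm.ess}. Combined with $0<\mu_k(h)$, this places $\mu_k(h)$ in the gap $(-\lambda^-_{\rm ess}(h),\lambda^+_{\rm ess}(h))$ for $h$ small.

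\textbf{Lower bound.} Pick a nearly-optimal $k$-dimensional $W\subset\mathfrak{h}_{\mathbf A}(\Omega)$. Any $v\in W$ with $q(v)\leq \mu_k(h)+\varepsilon$ satisfies $\|d^\times_{\mathbf A}v\|^2\leq 4(\mu_k(h)+\varepsilon)^2\|v\|^2$. The strategy is to project onto $\ker d^\times_{\mathbf A}=\mathscr{H}^2_{\mathbf A}(\Omega)$ via the $L^2$-orthogonal projection $\Pi$. The spectral gap of the magnetic Pauli-type operator $d_{\mathbf A}d^\times_{\mathbf A}$ above $0$ is at least of order $h$ (the first non-zero Landau level under the MIT boundary condition, whose square root agrees with $\lambda^+_{\rm ess}(h)$-type estimates), so $\|v-\Pi v\|\leq C\mu_k(h)/\sqrt h$, which is $o(\|v\|)$ by the previous paragraph. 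Consequently $\Pi|_W$ is injective for $h$ small, $\Pi W$ is a $k$-dimensional subspace of $\mathscr{H}^2_{\mathbf A}$, and the Rayleigh quotient for $\lambda_k^{\rm eff}$ computed on $\Pi W$ differs from $q$ on $W$ only by an $o(1)$ factor. This gives $\lambda_k^{\rm eff}(h)\leq(1+o(1))\mu_k(h)$.

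\textbf{Fredholm property.} Because $\mathscr{D}_h$ is self-adjoint and $\mu_k(h)$ lies in the gap $(-\lambda^-_{\rm ess}(h),\lambda^+_{\rm ess}(h)) = \R\setminus\mathrm{sp}_{\rm ess}(\mathscr{D}_h)$ (by Proposition \ref{prop.esscoin} and Theorem \ref{thm.ess}), the point $\mu_k(h)$ is not in the essential spectrum. General spectral theory of self-adjoint operators then gives that $\mathscr{D}_h-\mu_k(h)$ is Fredholm of index $0$.

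\textbf{Main obstacle.} The delicate step is the lower bound, which hinges on a quantitative coercivity estimate for $d^\times_{\mathbf A}$ on $(\ker d^\times_{\mathbf A})^\perp$: one needs a lower bound of the form $\|d^\times_{\mathbf A}w\|^2\geq c\,h^\alpha\|w\|^2$ (with $\alpha$ small enough to beat $\mu_k(h)^2$, which is exponentially small). The analogue in the bounded setting is standard via the first nonzero eigenvalue of $d_{\mathbf A}d^\times_{\mathbf A}$; on the unbounded strip one must make sure the closedness of the range and the gap survive, presumably through the reduction to the straight strip used in Section \ref{sec.2} and the fact that the fibered Pauli operators share the same Landau-level behaviour.
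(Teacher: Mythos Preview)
Your proposal follows essentially the same route as the paper's (sketched) proof: the upper bound by testing on $e^{-\phi/h}\mathscr{H}^2(\Omega)$, the lower bound by projecting near-optimal test functions onto $\ker d^\times_{\mathbf A}=\mathscr{H}^2_{\mathbf A}(\Omega)$ with a coercivity/elliptic estimate to control the error, and the Fredholm property from $\mu_k(h)$ falling in the essential-spectral gap. You correctly identify the main obstacle as the survival of the coercivity estimate on the unbounded strip.

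One point to sharpen in the lower bound: the $L^2$ bound $\|v-\Pi v\|\leq C\mu_k(h)h^{-1/2}\|v\|$ alone does not let you pass from $\|v\|_{\partial\Omega}$ to $\|\Pi v\|_{\partial\Omega}$, which is what the quotient for $\lambda_k^{\rm eff}$ needs. The paper (and \cite{BLTRS23}) handle this through elliptic estimates, i.e.\ an $H^1$-type control of $v-\Pi v$ in terms of $\|d^\times_{\mathbf A}v\|$, which then feeds into the trace. Your ``Main obstacle'' paragraph is pointing at exactly this, but in the body of the argument you should state that the projection error is controlled in a norm strong enough to bound the boundary trace, not merely in $L^2$.
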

\begin{proof}[Sketch of the proof]
	Note that by choosing test functions for $\mu_k(h)$ in the space $\mathscr{H}^2_{\mathbf{A}}(\Omega)$ and using \eqref{eq.idenintert}, we obtain $0 \leq \mu_k(h) \leq \lambda^{\mathrm{eff}}_k(h)$.
	The lower bound of \eqref{eq.mulambdaeff} is quite similar to the one presented in \cite[Section 3]{BLTRS23}. The key difference is that the domain is unbounded. Nevertheless, the same arguments from \cite[Section 3]{BLTRS23} apply: the test functions can be projected onto the Hardy space to eliminate the term $\|d^\times_\mathbf{A} u\|^2$ in \eqref{eq.mukh}. The necessary controls for this projection are obtained through elliptic estimates, which are valid for both bounded domains  as in \cite{BLTRS23} and unbounded domains with sufficiently flat behavior at infinity, such as half-spaces, straight tubes, and small perturbations of these structures.
	
	The second part of the statement follows from \eqref{eq.mulambdaeff}, Theorem \ref{thm.ess} \& Point \eqref{it.AsymEff+} of Theorem  \ref{thm.boundstatesDirac}. 
	\end{proof}

Let us now state the proposition that  connects the low-lying positive discrete spectrum of $\mathscr{D}_h$ to the $\mu_k(h)$, when $h$ is small.
\begin{proposition}\label{prop.characterization}
	Let $k\geq 1$. There exists $h_0>0$ such that, for all $h\in(0,h_0)$, the $k$-th positive eigenvalue of $\mathscr{D}_h$ exists and satisfies
\[\lambda^+_k(h)=\mu_k(h)\,.\]	
\end{proposition}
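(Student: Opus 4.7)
The plan is to adapt the nonlinear min--max framework of Proposition \ref{prop.charact} from the fibered one--dimensional setting to the full two--dimensional operator $\mathscr{D}_h$, with the Fredholm index-$0$ property of Proposition \ref{prop.asymptoticmu} taking over the role that compact resolvent played in the 1D case. For $\lambda>0$ I would introduce the quadratic form
\[
\mathscr{Q}_{\lambda,h}(u) := \|d^\times_{h,\mathbf{A}}u\|^2 + \lambda h\|u\|^2_{\partial\Omega} - \lambda^2\|u\|^2,\qquad u\in H^1(\Omega,\mathbb{C}),
\]
and the associated min--max values
\[
\ell_k(\lambda,h) := \inf_{\substack{W\subset H^1(\Omega)\\ \dim W=k}}\ \sup_{u\in W\setminus\{0\}}\frac{\mathscr{Q}_{\lambda,h}(u)}{\|u\|^2}.
\]
The quantity inside the supremum in the definition \eqref{eq.mukh} of $\mu_k(h)$ is exactly the positive root of the quadratic $\lambda\mapsto \mathscr{Q}_{\lambda,h}(u)/\|u\|^2$, and the monotonicity computation carried out in the proof of Proposition \ref{prop.charact}(b) carries over verbatim (the quantity $\lambda^{-1}\ell_k(\lambda,h)$ is strictly decreasing). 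This identifies $\mu_k(h)$ as the unique positive solution of $\ell_k(\lambda,h)=0$.

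Next I would show that for $\lambda$ in a neighborhood of $\mu_k(h)$ the value $\ell_k(\lambda,h)$ coincides with the $k$-th eigenvalue (counted with multiplicity) of the self-adjoint operator $\mathscr{L}_{\lambda,h}$ associated with $\mathscr{Q}_{\lambda,h}$, sitting strictly below its essential spectrum. A Weyl-sequence argument in the spirit of Proposition \ref{prop.esscoin} shows that the essential spectrum of $\mathscr{L}_{\lambda,h}$ inherits that of $d^\times_{h,\mathbf{A}_0}d_{h,\mathbf{A}_0}$ on the straight strip shifted by $-\lambda^2$. Combined with Theorem \ref{thm.ess}, the asymptotics $\mu_k(h)=(1+o(1))\lambda^{\mathrm{eff}}_k(h)$ from Proposition \ref{prop.asymptoticmu}, and the fact that $\lambda^{\mathrm{eff}}_k(h)=o(\lambda^+_{\mathrm{ess}}(h))$ under Assumption \ref{assu.0} (since $\phi_{\min}<-\delta^2/2$), this guarantees that $\ell_k(\lambda,h)$ is indeed the $k$-th discrete eigenvalue of $\mathscr{L}_{\lambda,h}$ for $h$ small, is continuous and strictly decreasing in $\lambda$, and crosses $0$ precisely at $\lambda=\mu_k(h)$.

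The bridge to the Dirac spectrum follows the scheme already used in Proposition \ref{prop.charact}. On one side, if $\psi=(\psi_1,\psi_2)$ is an eigenfunction of $\mathscr{D}_h$ with eigenvalue $\lambda>0$, an integration by parts combined with the infinite mass boundary condition gives $\mathscr{Q}_{\lambda,h}(\psi_1)=0$; applying this on the linear span of the first $k$ positive Dirac eigenfunctions (when they exist) as a test subspace yields $\ell_k(\lambda^+_k(h),h)\leq 0$, hence $\mu_k(h)\leq \lambda^+_k(h)$. Conversely, given a $k$-dimensional family of minimizers $(u_j)_{j=1}^k$ realizing $\ell_k(\mu_k(h),h)=0$, elliptic regularity puts each $u_j$ in $H^2(\Omega)$ with a suitable mixed boundary condition, and the transfer map $u\mapsto \psi=(u,\mu_k(h)^{-1}d^\times_{h,\mathbf{A}}u)$ produces $k$ linearly independent elements of $\mathrm{Dom}(\mathscr{D}_h)$ solving $\mathscr{D}_h\psi=\mu_k(h)\psi$; since $\mu_k(h)\ll \lambda^+_{\mathrm{ess}}(h)$ for small $h$, these are genuine discrete eigenvalues, giving $\lambda^+_k(h)\leq \mu_k(h)$.

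The main obstacle is the dimension counting in the presence of essential spectrum: one must certify that the transfer $u\leftrightarrow \psi$ between $\mathscr{L}_{\mu_k(h),h}$-eigenspaces and $\mathscr{D}_h$-eigenspaces is bijective (no dimensions lost or gained), and that the ordering of the $(\mu_k(h))_k$ matches the ordering of the positive Dirac eigenvalues. This is precisely the role of the Fredholmness and index-$0$ statement of Proposition \ref{prop.asymptoticmu}, together with Proposition \ref{prop.J}\eqref{eq.Jiv} which provides the algebraic identity linking $d^\times_{h,\mathbf{A}}$-solutions on the Hardy side to genuine Dirac eigenpairs. The conclusion $\lambda^+_k(h)=\mu_k(h)$ for $h$ small enough follows then by induction on $k$, using at each step the strict monotonicity in $\lambda$ of $\ell_k(\lambda,h)$ established above.
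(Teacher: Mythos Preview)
Your overall framework is correct and coincides with the paper's: introduce $q_\lambda$ and $\mathscr{L}_\lambda$, identify $\mu_k(h)$ as the unique positive root of $\ell_k(\cdot,h)$, and pass between $\ker\mathscr{L}_\lambda$ and $\ker(\mathscr{D}_h-\lambda)$ via the map $u\mapsto(u,\lambda^{-1}d^\times_{\mathbf A}u)$ of Proposition~\ref{prop.J}. Your proposal to locate $\mathrm{sp}_{\mathrm{ess}}(\mathscr{L}_\lambda)$ by a direct Weyl--sequence comparison is a viable alternative to the paper's route, which instead transfers the Fredholm index~$0$ property from $\mathscr{D}_h-\mu_k(h)$ to $\mathscr{L}_{\mu_k(h)}$ through Proposition~\ref{prop.J}\eqref{eq.Jiv}.

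However, both of your \emph{direct} inequality arguments have gaps in the dimension counting. For $\mu_k(h)\leq\lambda^+_k(h)$: from $\mathscr{Q}_{\lambda_j,h}(\psi^{(j)}_1)=0$ for each individual eigenfunction you cannot conclude that $\mathscr{Q}_{\lambda^+_k(h),h}\leq 0$ on the full span of $\psi^{(1)}_1,\dots,\psi^{(k)}_1$ when the $\lambda^+_j(h)$ are distinct; the cross terms $q_{\lambda^+_k}(\psi^{(j)}_1,\psi^{(l)}_1)$ for $j,l<k$ do not vanish in general. For $\lambda^+_k(h)\leq\mu_k(h)$: a $k$--dimensional subspace realizing $\ell_k(\mu_k(h),h)=0$ is spanned by the first $k$ eigenfunctions of $\mathscr{L}_{\mu_k(h)}$, whose eigenvalues are $\ell_1\leq\dots\leq\ell_k=0$; only those with eigenvalue exactly~$0$ lie in $\ker\mathscr{L}_{\mu_k(h)}$ and hence transfer to Dirac eigenfunctions with eigenvalue $\mu_k(h)$. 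You therefore produce $\dim\ker\mathscr{L}_{\mu_k(h)}$ eigenfunctions, not $k$.

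The paper fixes both directions by a multiplicity--grouping argument rather than a direct test--space argument. If $\mu_1(h)=\dots=\mu_{k_1}(h)<\mu_{k_1+1}(h)$, then $\ell_1(\mu_1)=\dots=\ell_{k_1}(\mu_1)=0<\ell_{k_1+1}(\mu_1)$; since $0\notin\mathrm{sp}_{\mathrm{ess}}(\mathscr{L}_{\mu_1})$ these are genuine eigenvalues, so $\dim\ker\mathscr{L}_{\mu_1(h)}=k_1$, and Proposition~\ref{prop.J}\eqref{eq.Jii} gives $\dim\ker(\mathscr{D}_h-\mu_1(h))=k_1$. This yields $\lambda^+_j(h)\leq\mu_j(h)$ for $j=1,\dots,k_1$, and one proceeds block by block. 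The converse is symmetric: each $\lambda^+_k(h)$ gives $\ker(\mathscr{D}_h-\lambda^+_k(h))\neq\{0\}$, hence (again by \eqref{eq.Jii}) $\ker\mathscr{L}_{\lambda^+_k(h)}\neq\{0\}$, so $\lambda^+_k(h)=\mu_p(h)$ for some $p$; matching multiplicities by the same mechanism gives $\mu_k(h)\leq\lambda^+_k(h)$. Your final paragraph correctly names these ingredients, but the preceding paragraphs should be replaced by this block--wise counting rather than the span arguments.
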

To prove Proposition \ref{prop.characterization}, we revisit in the following the proof of \cite[Proposition 3.2]{BLTRS23}. Here, we emphasize the differences caused by the unboundedness of our domain $\Omega$ compared to the bounded case considered in \cite{BLTRS23}. The key idea is that, although the operators do not have compact resolvents, everything works similarly to the bounded case as long as the considered eigenvalues are not embedded in the essential spectrum.

\subsubsection{Characterization of the $\mu_k(h)$ and relation to $\mathscr{D}_h$}
Let $\lambda>0$. For all $u\in \mathfrak{h}_{\mathbf{A}}(\Omega)$, we consider
\[q_\lambda(u)=\|d^\times_{\mathbf{A}}u\|^2+h\lambda\|u\|^2_{\partial\Omega}-\lambda^2\|u\|^2\,.\]
\begin{lemma}
The quadratic form $q_\lambda$ is closed on its domain $\mathfrak{h}_{\mathbf{A}}(\Omega)$.
\end{lemma}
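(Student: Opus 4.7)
My plan is to show that $\mathfrak{h}_{\mathbf{A}}(\Omega)$ is complete in the form norm of $q_\lambda$. Since $\|d^\times_{\mathbf{A}}u\|^2$ and $h\lambda\|u\|^2_{\partial\Omega}$ are non-negative, we have $q_\lambda(u) \geq -\lambda^2\|u\|^2$, so $q_\lambda$ is semibounded and the form norm is equivalent to
\[
\|u\|_*^2 := \|u\|^2 + \|d^\times_{\mathbf{A}}u\|^2 + h\lambda\|u\|^2_{\partial\Omega}.
\]
I will establish closedness by proving that $\|\cdot\|_*$ is equivalent to the sum norm
\[
\|u\|_{\mathrm{sum}}^2 := \inf\bigl\{\|\alpha\|_{H^1(\Omega)}^2 + \|v\|_{\mathscr{H}^2_{\mathbf{A}}(\Omega)}^2 : u = \alpha + v,\ \alpha \in H^1(\Omega),\ v\in \mathscr{H}^2_{\mathbf{A}}(\Omega)\bigr\}.
\]
The standard quotient construction makes the sum of two Hilbert spaces continuously embedded in $L^2(\Omega)$ into a Hilbert space under this norm, so once the equivalence is proved, completeness of $(\mathfrak{h}_{\mathbf{A}}(\Omega),\|\cdot\|_*)$ follows.

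The bound $\|u\|_* \leq C \|u\|_{\mathrm{sum}}$ is straightforward. For any decomposition $u = \alpha + v$, the identity $d^\times_{\mathbf{A}} v = 0$ (which is a direct consequence of the intertwining \eqref{eq.idenintert} combined with the holomorphy of $e^{\phi/h} v \in \mathscr{H}^2(\Omega)$) reduces $\|d^\times_{\mathbf{A}} u\|$ to $\|d^\times_{\mathbf{A}} \alpha\| \lesssim \|\alpha\|_{H^1}$ since $\mathbf{A}$ is bounded; the continuous inclusion $\mathscr{H}^2(\Omega) \hookrightarrow L^2(\Omega)$ from Appendix~\ref{app.A} and the triangle inequality control $\|u\|_{L^2}$; and the boundary norm splits via $\|u\|_{\partial\Omega} \leq \|\alpha\|_{\partial\Omega} + \|v\|_{\partial\Omega}$, the first term being controlled by the $H^1$-trace theorem and the second by the very definition of $\mathscr{H}^2_{\mathbf{A}}(\Omega)$.

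The converse inequality $\|u\|_{\mathrm{sum}} \leq C \|u\|_*$ is the main technical point, and is obtained by constructing an explicit decomposition via an inhomogeneous $\overline\partial$-equation. Given $u \in \mathfrak{h}_{\mathbf{A}}(\Omega)$ with $\|u\|_*$ finite, one sets $g := e^{\phi/h} d^\times_{\mathbf{A}} u \in L^2(\Omega)$ (using boundedness of $\phi$) and solves $-2ih\,\partial_{\overline z} \beta = g$ for some $\beta \in H^1(\Omega)$ with $\|\beta\|_{H^1} \lesssim \|g\|_{L^2}$. Then $\alpha := e^{-\phi/h} \beta \in H^1(\Omega)$, and by the intertwining \eqref{eq.idenintert} satisfies $d^\times_{\mathbf{A}} \alpha = d^\times_{\mathbf{A}} u$; therefore $v := u - \alpha$ is annihilated by $d^\times_{\mathbf{A}}$, its weighted version $e^{\phi/h} v$ is holomorphic, and its trace on $\partial\Omega$ lies in $L^2(\partial\Omega)$ since those of $u$ and $\alpha$ do. Hence $v \in \mathscr{H}^2_{\mathbf{A}}(\Omega)$ and both pieces admit the required norm control. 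The principal obstacle is solving the $\overline\partial$-equation with a one-derivative gain on the unbounded domain $\Omega$; this can be resolved either by transporting the problem through a conformal map to the half-plane (using simple connectedness of $\Omega$) and invoking classical Calderón--Zygmund estimates, or directly by exploiting that $\Omega$ coincides with a straight strip outside a compact set, where $\partial_{\overline z}$ can be inverted explicitly via partial Fourier transform in the longitudinal variable.
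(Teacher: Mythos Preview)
Your approach is correct and closely parallels the paper's. Both arguments hinge on the same mechanism: given $u$ in the form domain, one produces a splitting $u=\alpha+v$ with $\alpha\in H^1(\Omega)$ controlled by $\|d^\times_{\mathbf{A}}u\|$ and $v\in\mathscr{H}^2_{\mathbf{A}}(\Omega)$ controlled by the boundary norm. The paper obtains $\alpha$ as the $L^2$-orthogonal projection of $u$ onto $(\ker d^\times_{\mathbf{A}})^\perp=\overline{\mathrm{Ran}\,d_{\mathbf{A}}}$ and invokes elliptic estimates to place it in $H^1$; you obtain $\alpha$ by explicitly inverting $\partial_{\overline z}$ with a one-derivative gain. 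These are two realizations of the same elliptic input, and on the strip your Fourier-in-$s$ argument is exactly what underlies the paper's ``elliptic estimates valid for straight tubes and small perturbations''. The paper then argues directly with Cauchy sequences, whereas you package the same content as an equivalence of $\|\cdot\|_*$ with the sum norm; the latter is arguably cleaner since completeness of the sum space is standard interpolation theory.

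One point you pass over quickly: after showing $e^{\phi/h}v$ is holomorphic with an $L^2$ trace, you should justify that this places it in $\mathscr{H}^2(\Omega)$ as defined in Appendix~\ref{app.A} (via the $\sup_y$ condition). This follows because, writing $u=\alpha_0+v_0$ with $\alpha_0\in H^1$ and $v_0\in\mathscr{H}^2_{\mathbf{A}}$, one has $e^{\phi/h}v=e^{\phi/h}(\alpha_0-\alpha)+e^{\phi/h}v_0$, and a holomorphic function in $H^1(\Omega)$ lies in $\mathscr{H}^2(\Omega)$ by the uniform trace bound on horizontal lines. This is routine but worth stating.
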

\begin{proof}[Sketch of the proof]
The proof is similar to the case when $\Omega$ is bounded, see \cite[Lemma 2.4 \& Proposition 2.5 \rm (i)]{BLTRS23}, since the arguments do not use the boundedness. 

Let us outline the proof. Consider a sequence $(u_n)$ converging for $q_\lambda$. By projecting the sequence $(u_n)$ onto the orthogonal complement of the kernel of $d^\times_{\mathbf{A}}$, we obtain a sequence $(v_n)$ whose elements lie in the range of the adjoint of $d^\times_{\mathbf{A}}$. By elliptic estimates, this sequence converges in the $H^1$-norm, as well as in $\|\cdot\|_{\partial\Omega}$. Consequently, the sequence $(e^{\phi/h}(u_n - v_n)) \subset \mathscr{H}^2(\Omega)$ also converges in $\mathscr{H}^2(\Omega)$, and thus in $\|\cdot\|$.
	\end{proof}
We denote by $\mathscr{L}_\lambda$ the self-adjoint operator associated with $q_\lambda$. We denote by $(\ell_{k}(\lambda))_{k\geq 1}$ the non-decreasing sequence of the Rayleigh quotients of $q_\lambda$.

Let us explain the relations between $\mathscr{L}_\lambda$ and $\mathscr{D}_h-\lambda$.

\begin{proposition}\label {prop.J}
For all $u\in \mathrm{Dom}\,\mathscr{L}_\lambda$, we let
	\[\mathscr{J}_\lambda(u)=(u,\lambda^{-1}d^\times_{\mathbf{A}}u)\,.\]
	Then, we have the following
	\begin{enumerate}[\rm (i)]
	\item\label{eq.Ji} The application $\mathscr{J}_\lambda$ sends $\mathrm{Dom}\,\mathscr{L}_\lambda$ into $\mathrm{Dom}\,\mathscr{D}_h$ and, for all $u\in \mathrm{Dom}\,\mathscr{L}_\lambda$, we have
	\[(\mathscr{D}_h-\lambda)\mathscr{J}_\lambda(u)=(\mathscr{L}_\lambda u,0)\,.\]
		\item\label{eq.Jii} The application $\mathscr{J}_\lambda$ induces an isomorphism from $\ker\mathscr{L}_\lambda$ to $\ker(\mathscr{D}_h-\lambda)$. 
	\item \label{eq.Jiii}The application $\mathscr{J}_\lambda$ has closed range in $\mathrm{Dom}\,\mathscr{D}_h$.

	\item\label{eq.Jiv} If $\mathscr{D}_h-\lambda$ is Fredholm with index $0$, then so is $\mathscr{L}_\lambda$. In particular, if $\lambda\in\mathrm{sp}_{\mathrm{dis}}(\mathscr{D}_h)$, then $0\in\mathrm{sp}_{\mathrm{dis}}(\mathscr{L}_\lambda)$.
	\end{enumerate}
\end{proposition}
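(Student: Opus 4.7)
The four points naturally build on each other and I would prove them in order. Point \eqref{eq.Ji} is a block-matrix calculation: applying $\mathscr{D}_h$ to $(u, \lambda^{-1} d^\times_\mathbf{A} u)$ yields $(\lambda^{-1} d_\mathbf{A} d^\times_\mathbf{A} u, d^\times_\mathbf{A} u)$, so subtracting $\lambda\mathscr{J}_\lambda(u)$ kills the second component and leaves the first proportional to $\mathscr{L}_\lambda u = d_\mathbf{A} d^\times_\mathbf{A} u - \lambda^2 u$. The essential point is that $\mathscr{J}_\lambda(u)$ lies in $\mathrm{Dom}\,\mathscr{D}_h$, i.e.\ satisfies the infinite-mass condition $\psi_2 = \pm in\,\psi_1$ on $\partial\Omega$. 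This comes from the Euler--Lagrange analysis of $q_\lambda$: the natural boundary condition encoded in $\mathrm{Dom}\,\mathscr{L}_\lambda$ is precisely $\lambda^{-1} d^\times_\mathbf{A} u = \pm in\, u$ on $\partial\Omega$, the extra term $h\lambda\|u\|^2_{\partial\Omega}$ in $q_\lambda$ being exactly what is needed to enforce this. Point \eqref{eq.Jii} then follows readily: injectivity is trivial since the first component of $\mathscr{J}_\lambda(u)$ is $u$, while for any $\psi = (\psi_1, \psi_2)\in \ker(\mathscr{D}_h - \lambda)$ the second row of the eigenvalue equation forces $\psi_2 = \lambda^{-1}d^\times_\mathbf{A}\psi_1$, so $\psi = \mathscr{J}_\lambda(\psi_1)$ and the first row gives $\mathscr{L}_\lambda\psi_1 = 0$, with elliptic regularity ensuring $\psi_1 \in \mathrm{Dom}\,\mathscr{L}_\lambda$.

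For Point \eqref{eq.Jiii}, I would take a sequence $(u_n)\subset \mathrm{Dom}\,\mathscr{L}_\lambda$ such that $\mathscr{J}_\lambda(u_n)\to(f,g)$ in the graph norm of $\mathscr{D}_h$. Since $\mathbf{A}$ is bounded, this graph norm is equivalent to $\|\cdot\|_{H^1(\Omega;\mathbb{C}^2)}$, so $u_n \to f$ in $H^1$ and $d^\times_\mathbf{A} u_n \to d^\times_\mathbf{A} f$ in $L^2$; matching the convergence of the second component then yields $g = \lambda^{-1}d^\times_\mathbf{A} f$. The continuity of the trace map $H^1(\Omega) \to L^2(\partial\Omega)$ transfers the Robin-type condition to the limit, so that $f \in \mathrm{Dom}\,\mathscr{L}_\lambda$ and $(f, g) = \mathscr{J}_\lambda(f)$, showing that the range is closed.

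Point \eqref{eq.Jiv} is the main content and the principal obstacle. The finite-dimensionality of $\ker\mathscr{L}_\lambda$ follows from \eqref{eq.Jii} and the Fredholm hypothesis on $\mathscr{D}_h - \lambda$. To show that $\mathrm{Ran}\,\mathscr{L}_\lambda$ is closed in $L^2(\Omega)$, I would establish the identity $\mathrm{Ran}\,\mathscr{L}_\lambda = \{v\in L^2(\Omega) : (v, 0) \in \mathrm{Ran}(\mathscr{D}_h - \lambda)\}$. The forward inclusion is a direct consequence of \eqref{eq.Ji}. For the reverse inclusion, any preimage $\psi = (\psi_1, \psi_2)$ of $(v, 0)$ under $\mathscr{D}_h - \lambda$ satisfies $\psi_2 = \lambda^{-1} d^\times_\mathbf{A}\psi_1$ by the second row, hence $\psi = \mathscr{J}_\lambda(\psi_1)$, and the first row together with \eqref{eq.Ji} gives $\mathscr{L}_\lambda\psi_1 = v$. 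Thus $\mathrm{Ran}\,\mathscr{L}_\lambda$ is the preimage of $\mathrm{Ran}(\mathscr{D}_h - \lambda)$ under the continuous injection $v\mapsto(v,0)$, and is therefore closed; self-adjointness of $\mathscr{L}_\lambda$ forces the Fredholm index to vanish. The ``in particular'' statement is then immediate: if $\lambda \in \mathrm{sp}_{\mathrm{dis}}(\mathscr{D}_h)$, then $\mathscr{L}_\lambda$ is self-adjoint and Fredholm at $0$ with nontrivial kernel by \eqref{eq.Jii}, so $0 \in \mathrm{sp}_{\mathrm{dis}}(\mathscr{L}_\lambda)$. The main difficulty compared to the bounded-domain argument of \cite[Proposition 3.2]{BLTRS23} is the absence of compact resolvents; Fredholmness, assumed in the hypothesis, is what replaces compactness throughout.
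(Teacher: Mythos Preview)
Your overall architecture matches the paper's, and Points \eqref{eq.Jii}--\eqref{eq.Jiv} are essentially the paper's arguments (your range characterization in \eqref{eq.Jiv} is a more explicit unpacking of what the paper phrases as ``composition of Fredholm operators of index $0$'', using that $\mathscr{J}_\lambda$ is itself Fredholm of index $0$ by \eqref{eq.Jiii}).

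The genuine divergence is in Point \eqref{eq.Ji}. You propose to compute $\mathscr{D}_h\mathscr{J}_\lambda(u)$ directly and then verify the infinite-mass boundary condition via the Euler--Lagrange analysis of $q_\lambda$. The difficulty is that $\mathrm{Dom}\,\mathscr{D}_h \subset H^1(\Omega;\C^2)$, so you need both $u\in H^1$ and $d^\times_{\mathbf A}u\in H^1$ \emph{before} you can speak of boundary traces; but the form domain here is $\mathfrak{h}_{\mathbf A}(\Omega)=H^1(\Omega)+\mathscr{H}^2_{\mathbf A}(\Omega)$, and functions in $\mathscr{H}^2_{\mathbf A}$ are not in $H^1$ in general. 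So for $u\in\mathrm{Dom}\,\mathscr{L}_\lambda$ neither $u\in H^1$ nor $d^\times_{\mathbf A}u\in H^1$ is granted a priori, and your Euler--Lagrange step hides a nontrivial regularity claim that you would still have to prove.

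The paper sidesteps this entirely: it tests $\mathscr{J}_\lambda(u)$ against an arbitrary $v\in\mathrm{Dom}\,\mathscr{D}_h$, integrates by parts using the boundary condition satisfied by $v$ (not by $u$), and obtains
\[
\langle \mathscr{J}_\lambda(u),(\mathscr{D}_h-\lambda)v\rangle=\lambda^{-1}q_\lambda(u,v_1)=\lambda^{-1}\langle\mathscr{L}_\lambda u,v_1\rangle.
\]
This shows $\mathscr{J}_\lambda(u)\in\mathrm{Dom}\,\mathscr{D}_h^*=\mathrm{Dom}\,\mathscr{D}_h$ by self-adjointness, and the operator identity follows. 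The $H^1$-regularity of $u$ and $d^\times_{\mathbf A}u$ is thus a \emph{consequence} of \eqref{eq.Ji}, not an input. Your route is not wrong in spirit, but it front-loads a regularity analysis that the weak/adjoint argument gives for free.
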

\begin{proof}
Take $u\in \mathrm{Dom}\,\mathscr{L}_\lambda\subset \mathrm{Dom}\, q_\lambda$. Then, for all $v =(v_1,v_2) \in\mathrm{Dom}\,\mathscr{D}_h$,
\[\begin{split}\langle \mathscr{J}_\lambda(u), (\mathscr{D}_h-\lambda)v\rangle&=\langle u,-\lambda v_1+d_{\mathbf{A}} v_2\rangle+\langle \lambda^{-1}d^\times_{\mathbf {A}}u,-\lambda v_2+d_{\mathbf{A}}^\times v_1\rangle\\
&=\lambda^{-1}q_\lambda(u,v_1)\,,
\end{split}\]
where we used an integration by parts and the boundary condition satisfied by $v$. Since $u\in\mathrm{Dom}\,\mathscr{L}_\lambda$, we get that
\[q_\lambda(u,v_1)=\langle \mathscr{L}_\lambda u,v_1\rangle\,.\]
This shows that $\mathscr{J}_\lambda(u)\in\mathrm {Dom}\,\mathscr{D}_h^*=\mathrm {Dom}\,\mathscr{D}_h$ and Point \eqref{eq.Ji} follows.

Thanks to \eqref{eq.Ji}, we have only to check that 
$\mathscr{J}_\lambda : \ker\mathscr{L}_\lambda\to \ker(\mathscr{D}_h-\lambda)$ is surjective (since it is clearly injective). Take $v\in\ker(\mathscr{D}_h-\lambda)$. We have $d_{\mathbf{A}}v_2=\lambda v_1$ and $d^\times_{\mathbf{A}}v_1=\lambda v_2$. Let us check that $v_1\in\ker\mathscr{L}_\lambda$. For all $w \in \mathfrak{h}_{\mathbf{A}}(\Omega)$ we have
\[\begin{split}
	q_\lambda(v_1,w)&=\langle d^\times_{\mathbf{A}}v_1,d^\times_{\mathbf{A}}w\rangle+h\lambda\langle v_1,w\rangle_{\partial\Omega}-\lambda^2\langle v_1,w\rangle\\
	&=\lambda\langle v_2,d^\times_{\mathbf{A}}w\rangle+h\lambda\langle v_1,w\rangle_{\partial\Omega}-\lambda^2\langle v_1,w\rangle=0\,,
	\end{split}\]
where we used an integration by parts and the boundary condition. This proves Point \eqref{eq.Jii}.

Point \eqref{eq.Jiii} follows from the fact that the graph norm of $\mathscr{D}_h$ is the $H^1$-norm. In particular, $\mathscr{J}_\lambda$ is a continuous isomorphism between Banach spaces, from $\mathrm{Dom}\,\mathscr{L}_\lambda$ onto its closed range. Thus, $\mathscr{J}_\lambda$ is a Fredholm operator with index $0$. 

	The identity in \eqref{eq.Ji} implies \eqref{eq.Jiv} because the product of Fredholm operators with index 0 remains a Fredholm operator with index 0. Furthermore, $\lambda \in \mathrm{sp}_{\mathrm{dis}}(\mathscr{D}_h)$ is equivalent to stating that $\lambda$ belongs to the spectrum and that $\mathscr{D}_h - \lambda$ is a Fredholm operator with index 0.
	\end{proof}

\begin{lemma}\label{lem.ell1>0}
For $h$ small enough, we have $\mu_1(h)>0$.  Moreover, for all $\lambda\in(0,\mu_1(h))$, we have $\ell_1(\lambda)>0$.	
	\end{lemma}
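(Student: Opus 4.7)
The argument naturally splits into two independent parts. For the first claim that $\mu_1(h) > 0$ at small $h$, I would simply chain together Proposition \ref{prop.asymptoticmu}, which gives $\mu_1(h) = (1+o(1))\lambda^{\mathrm{eff}}_1(h)$, and Point \eqref{it.AsymEff+} of Theorem \ref{thm.boundstatesDirac}, which shows that $\lambda^{\mathrm{eff}}_1(h)$ is of the (strictly positive, albeit exponentially small) order $e^{2\phi_{\min}/h}(d^1_{\mathscr{H}}/d^1_{\mathscr{B}})^2$. This settles the first claim.

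The second claim is a soft consequence of the nonlinear min-max structure of $\mu_1(h)$. For any nonzero $u \in \mathfrak{h}_{\mathbf{A}}(\Omega)$, the polynomial $\lambda \mapsto q_\lambda(u) = \|d^\times_{\mathbf{A}} u\|^2 + h\lambda \|u\|^2_{\partial\Omega} - \lambda^2\|u\|^2$ is concave in $\lambda$ with $q_0(u) \geq 0$, and the nonlinear Rayleigh quotient inside the definition \eqref{eq.mukh} of $\mu_1(h)$ is precisely its unique positive root. By that very definition, this positive root is at least $\mu_1(h)$, so $\mu_1(h)$ lies between the two roots of the downward parabola, giving $q_{\mu_1(h)}(u) \geq 0$ uniformly in $u$, i.e. $\ell_1(\mu_1(h)) \geq 0$.

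To strengthen this into a strict bound when $\lambda \in (0, \mu_1(h))$, I would reproduce the monotonicity computation of step (b) in the proof of Proposition \ref{prop.charact}. Expanding, for $0 < \lambda_1 < \lambda_2$ and $u \neq 0$,
\[
\frac{q_{\lambda_1}(u)}{\lambda_1} - \frac{q_{\lambda_2}(u)}{\lambda_2} = \|d^\times_{\mathbf{A}} u\|^2 \frac{\lambda_2 - \lambda_1}{\lambda_1 \lambda_2} + (\lambda_2 - \lambda_1)\|u\|^2 \geq (\lambda_2 - \lambda_1)\|u\|^2,
\]
and specializing $\lambda_1 = \lambda$, $\lambda_2 = \mu_1(h)$ together with $q_{\mu_1(h)}(u) \geq 0$ yields the uniform pointwise lower bound $q_\lambda(u) \geq \lambda(\mu_1(h) - \lambda)\|u\|^2$, hence $\ell_1(\lambda) \geq \lambda(\mu_1(h)-\lambda) > 0$. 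I foresee no serious obstacle: the argument is algebraic and uniform in $u$, requiring no compactness or Fredholm theory at this stage --- those will enter later, in the proof of Proposition \ref{prop.characterization}.
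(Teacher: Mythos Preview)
Your proposal is correct and follows essentially the same approach as the paper. The first part is identical; for the second, the paper uses the same factorization $q_\lambda(u)=-(\lambda-\rho_+(u))(\lambda-\rho_-(u))$ with $\rho_+(u)\geq\mu_1(h)$ and $\rho_-(u)\leq 0$, but bounds directly $q_\lambda(u)\geq(\mu_1(h)-\lambda)\lambda$ for $\lambda\in(0,\mu_1(h))$ without passing through the intermediate step $q_{\mu_1(h)}(u)\geq 0$ and the monotonicity identity --- your detour is valid but unnecessary, since the factorization already yields the strict bound in one line.
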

\begin{proof}
The first part of the statement follows from  Proposition \ref{prop.asymptoticmu} and Theorem \ref{thm.boundstatesDirac} \eqref{it.AsymEff+}. Then, we take $\lambda\in(0,\mu_1(h))$. For all  $u\in \mathfrak{h}_{\mathbf{A}}(\Omega)$, we have
\[\rho_+(u):=\frac{h\|u\|^2_{\partial\Omega}+\sqrt{h^2\|u\|^4_{\partial\Omega}+\|d^\times_\mathbf{A}u\|^2\|u\|^2}}{2\|u\|^2}\geq\mu_1(h)>0\,.\]
We have, for all $u\in\mathfrak{h}_{\mathbf{A}}(\Omega)$ such that $\|u\|=1$,
\[q_\lambda(u)=-(\lambda-\rho_+(u))(\lambda-\rho_-(u))\,,\]
with $\rho_-(u)\leq 0$.  We have $\rho_+(u)-\lambda\geq\mu_1(h)-\lambda>0$
and
\[
q_\lambda(u)\geq (\mu_1(h)-\lambda)\lambda>0\,.
\]
 The conclusion follows.

	\end{proof}
	
	The following lemma essentially comes from \cite[Proposition 2.11]{BLTRS23} and makes the bridge between the $\mu_k(h)$ and the spectrum of $\mathscr{D}_h$ through $\mathscr{L}_\lambda$.
\begin{lemma}
Let $k\geq 1$. The equation $\ell_k(\lambda)=0$ admits $\mu_k(h)$ as unique positive solution. 
	\end{lemma}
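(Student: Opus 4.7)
The strategy is to adapt the scheme of \cite[Proposition 2.11]{BLTRS23} to the present unbounded setting. The key pointwise observation is that for each $u \in \mathfrak{h}_{\mathbf{A}}(\Omega)\setminus\{0\}$, the number $\rho_+(u)$ is the unique positive root of the concave quadratic $\lambda \mapsto q_\lambda(u)$; equivalently, for $\lambda > 0$ we have $q_\lambda(u) \leq 0$ if and only if $\rho_+(u) \leq \lambda$, and the same holds with strict inequalities.

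First I would establish the strict monotonicity of $\Lambda_k(\lambda):=\lambda^{-1}\ell_k(\lambda)$ on $(0,+\infty)$. The algebraic identity
\[
\frac{q_{\lambda_1}(u)}{\lambda_1\|u\|^2} - \frac{q_{\lambda_2}(u)}{\lambda_2\|u\|^2} = \left(\frac{1}{\lambda_1} - \frac{1}{\lambda_2}\right)\frac{\|d^\times_{\mathbf{A}} u\|^2}{\|u\|^2} + (\lambda_2 - \lambda_1) \geq \lambda_2 - \lambda_1,
\]
valid for $0<\lambda_1<\lambda_2$ and every $u \in \mathfrak{h}_{\mathbf{A}}(\Omega)\setminus\{0\}$, passes through the $\sup$ over $u\in W$ and then the $\inf$ over $k$-dimensional $W$ to yield $\Lambda_k(\lambda_1) - \Lambda_k(\lambda_2) \geq \lambda_2 - \lambda_1$. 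This already delivers uniqueness: every positive zero of $\ell_k$ is a zero of $\Lambda_k$, which is strictly decreasing.

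The existence of the zero at $\mu_k(h)$ then splits into two inequalities. On one side, for $\lambda > \mu_k(h)$, the infimum definition of $\mu_k(h)$ provides a $k$-dimensional $W$ with $\sup_{W\setminus\{0\}}\rho_+ < \lambda$ (the supremum being attained on the compact unit sphere of $W$ by continuity and $0$-homogeneity of $\rho_+$); hence $q_\lambda < 0$ on $W\setminus\{0\}$ and $\ell_k(\lambda) \leq \sup_{W\setminus\{0\}} q_\lambda/\|u\|^2 < 0$. On the other side, any $k$-dimensional $W$ admits an attained maximizer $u^\star$ of $\rho_+$ with $\rho_+(u^\star) \geq \mu_k(h)$, whence $q_{\mu_k(h)}(u^\star) \geq 0$ and, taking the infimum over $W$, $\ell_k(\mu_k(h)) \geq 0$. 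Letting $\lambda \to \mu_k(h)^+$ in the first inequality and invoking continuity of $\ell_k$ near $\mu_k(h)$ then gives $\ell_k(\mu_k(h)) \leq 0$, hence equality.

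The main technical obstacle is justifying the continuity of $\ell_k(\cdot)$ at $\mu_k(h)$: since $\Omega$ is unbounded, $\mathscr{L}_\lambda$ is not of compact resolvent and $\ell_k(\lambda)$ could a priori jump across the essential spectrum. However, by Proposition~\ref{prop.asymptoticmu}, $\mathscr{D}_h - \mu_k(h)$ is Fredholm of index $0$ for small $h$, and Proposition~\ref{prop.J}\eqref{eq.Jiv} transfers this to $\mathscr{L}_{\mu_k(h)}$; thus $0$ is separated from the essential spectrum of $\mathscr{L}_{\mu_k(h)}$. It follows that the $k$-th min-max $\ell_k(\lambda)$ corresponds to a discrete eigenvalue of $\mathscr{L}_\lambda$ for $\lambda$ in a neighborhood of $\mu_k(h)$, where standard analytic perturbation theory applies and yields the required continuity, closing the argument.
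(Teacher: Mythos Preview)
Your overall strategy matches the paper's: both use the strict monotonicity of $\Lambda_k(\lambda)=\lambda^{-1}\ell_k(\lambda)$ for uniqueness, and then establish $\ell_k(\mu_k(h))=0$ via the two inequalities. For $\ell_k(\mu_k(h))\geq0$ your argument and the paper's coincide. The difference lies in the inequality $\ell_k(\mu_k(h))\leq0$. The paper works directly at $\lambda=\mu_k(h)$: for each $\epsilon>0$ it picks a $k$-dimensional $W$ with $\sup_W\rho_+\leq\mu_k(h)+\epsilon$ and uses the factorization $q_{\mu_k(h)}(u)/\|u\|^2=(\rho_+(u)-\mu_k(h))(\mu_k(h)-\rho_-(u))$ to get $\ell_k(\mu_k(h))\leq\epsilon\max_{u\in W,\|u\|=1}(\mu_k(h)-\rho_-(u))$; since $\rho_++\rho_-=h\|u\|^2_{\partial\Omega}/\|u\|^2\geq0$ forces $|\rho_-(u)|\leq\rho_+(u)\leq\mu_k(h)+\epsilon$, this max is at most $2\mu_k(h)+\epsilon$, and letting $\epsilon\to0$ concludes without any appeal to continuity of $\ell_k$.

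Your route, via $\ell_k(\lambda)<0$ for $\lambda>\mu_k(h)$ followed by continuity at $\mu_k(h)$, also reaches the conclusion, but the continuity justification you give through Fredholm theory and analytic perturbation is both heavier than needed and not fully closed as written: knowing $0\notin\mathrm{sp}_{\mathrm{ess}}(\mathscr{L}_{\mu_k(h)})$ does not by itself place $\ell_k(\mu_k(h))$ strictly below $\inf\mathrm{sp}_{\mathrm{ess}}(\mathscr{L}_{\mu_k(h)})$, which is what one would need to invoke perturbation of a discrete eigenvalue, and the essential spectrum of $\mathscr{L}_\lambda$ need not be constant in $\lambda$ since the $\lambda$-dependent boundary term is not a relatively compact perturbation. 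The gap closes far more cheaply: for $\lambda>\mu_k(h)$ one has $q_\lambda(u)\geq q_{\mu_k(h)}(u)-(\lambda^2-\mu_k(h)^2)\|u\|^2$ (just drop the nonnegative increment $h(\lambda-\mu_k(h))\|u\|^2_{\partial\Omega}$), hence $\ell_k(\mu_k(h))\leq\ell_k(\lambda)+(\lambda^2-\mu_k(h)^2)<\lambda^2-\mu_k(h)^2\to0$. This elementary one-sided Lipschitz bound is all the ``continuity'' you actually need.
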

\begin{proof}
The proof of existence and uniqueness follows the same reasoning as in \cite[Lemma 2.10]{BLTRS23}. 
Let us recall some key elements. The existence is guaranteed by the continuity of $\ell_k$, the fact that $\ell_k(\lambda)$ is positive for small $\lambda$, and that $\lim_{\lambda \to +\infty} \ell_k(\lambda) = -\infty$.

The uniqueness is based on \cite[Identity (2.10)]{BLTRS23}: for $0 < \lambda_1 < \lambda_2$, we have
\[
	\lambda_1^{-1}\ell_k(\lambda_1)\geq (\lambda_2-\lambda_1) + \lambda_2^{-1}\ell_k(\lambda_2)\,.
\] 
Let us check that $\mu_k(h)$ solves the equation. Take $\epsilon>0$ and consider $W\subset \mathfrak{h}_{\mathbf{A}}(\Omega)$ with $\dim W=k$ such that
\[\max_{u\in W\setminus\{0\}}\rho_+(u)\leq \mu_k(h)+\epsilon\,.\]
In particular, for all $u\in W$, we have $\rho_+(u)\leq \mu_k(h)+\epsilon$. Then, we have
\[\begin{split}
	\ell_k(\mu_k(h))
	\leq \max_{\underset{\|u\|=1}{u\in W\setminus\{0\}}} 
		q_{\mu_k(h)}(u)
	&=\max_{\underset{\|u\|=1}{u\in W\setminus\{0\}}}
		(\mu_k(h)-\rho_-(u))(\rho_+(u)-\mu_k(h)) \\
	&\leq\epsilon \max_{\underset{\|u\|=1}{u\in W\setminus\{0\}}}
		(\mu_k(h)-\rho_-(u))\,.
\end{split}\]
Taking the limit $\epsilon\to0$, we get	$\ell_k(\mu_k(h))\leq 0$. Conversely, for all $W\subset \mathfrak{h}_{\mathbf{A}}(\Omega)$ such that $\dim W=k$, there exists $u_k\in W\setminus\{0\}$ and $\|u_k\| = 1$ such that
\[\mu_k(h)\leq\max_{u\in W\setminus\{0\}}\rho_+(u)=\rho_+(u_k)\,, \]
and then
\[
\max_{\underset{\|u\|=1}{u\in W\setminus\{0\}}} 
		q_{\mu_k(h)}(u)
\geq 
	q_{\mu_k(h)}(u_k)
=
	(\mu_k(h)-\rho_-(u_k))(\rho_+(u_k)-\mu_k(h))
\geq 0\,.
\]
We used here the fact that the second root $\mu_-(u_k)$ is non positive.
Taking the infimum over $W$, we get $\ell_k(\mu_k(h))\geq 0$.
\end{proof}

\subsubsection{Proof of Proposition \ref{prop.characterization}}\label{sec.proof312}

For all $k\geq 1$ and for $h$ small enough $\mathscr{D}_h-\mu_k(h)$ is Fredholm with index $0$. From Proposition \ref{prop.J}, we get that $\mathscr{L}_{\mu_k(h)}$ is Fredholm with index $0$ and with a non-empty kernel (of finite dimension) since $\ell_k(\mu_k(h))=0$. We get that $\mu_k(h)\in\mathrm{sp}_{\mathrm{dis}}(\mathscr{D}_h)$. This shows that, for all $k\geq 1$,
\begin{equation}\label{eq.ublamu}
\lambda^+_k(h)\leq \mu_k(h)\,.
\end {equation}
Let us explain this. Assume that $\mu_1(h)=\ldots=\mu_{k_1}(h)<\mu_{k_1+1}(h)$. This implies that $\ell_1(\mu_1(h))=\ell_{k_1}(\mu_1(h))=0$ and $\ell_{k_1+1}(\mu_1(h))>0$ so that, by the min-max theorem, $\dim\ker\mathscr{L}_{\mu_1(h)}= k_1$ and then $\dim\ker(\mathscr{D}_{h}-\mu_1(h))= k_1$.  This shows \eqref{eq.ublamu} for $k=1,\ldots,k_1$. By induction and similar considerations, we get  \eqref{eq.ublamu}. 

Conversely, we notice that $\mathscr{L}_{\lambda_k^+(h)}$ is Fredholm with index $0$, with non-empty kernel so that, for some $p$, $\ell_{p}(\lambda_k^+(h))=0$. Thus, for some $p$, $\lambda_k^+(h)=\mu_p(h)$. Moreover, assume that $\lambda_1^+(h)$ has multiplicity $m_1$. Thus, we have $\dim\ker\mathscr{L}_{\lambda_1^+(h)}=m_1$. Therefore there exists $p\in\mathbb{N}$ such that $\ell_{p+1}(\lambda_1^+(h))=\ldots=\ell_{p+m_1}(\lambda_1^+(h))=0$ and $\ell_p(\lambda_1^+(h))<0<\ell_{p+m_1+1}(\lambda_1^+(h))$. In particular, $\mu_{p+1}(h)=\ldots=\mu_{p+m_1}(h)=\lambda_1^+(h)$. This shows that $\mu_k(h)\leq\lambda_k^+(h)$ for $k=1,\ldots,m_1$. By induction, we can check that this inequality is true for $k\geq 1$.

\section*{Acknowledgments}
This work was conducted within the France 2030 framework programme, Centre Henri Lebesgue ANR-11-LABX-0020-01 and CIMI ANR-11-LABX-0040. This work has been partially supported by CNRS International Research Project Spectral Analysis of Dirac Operators – SPEDO. The authors are grateful to CIRM where this work was started and they also thank  Enguerrand Lavigne-Bon for many discussions at the origin of this work.

\appendix
\section{Hardy space on the strip}\label{app.A}
\subsection{Hardy space on the straight strip}
Let us consider the strip $S_\delta=\R\times(-\delta,\delta)$ and consider the following set of holomorphic functions
\[
\begin{split}
	\mathscr{H}^2(S_\delta)&= \mathscr{O}(S_\delta)\cap L^\infty((-\delta,\delta)_y, L^2(\mathbb{R}_x))
	\\&= \{u\in\mathscr{O}(S_\delta) : M(u):=\sup_{y\in(-\delta,\delta)}\|u(\cdot+iy)\|_{L^2(\R)}<+\infty\}\,.
\end{split}
\]
Let us gather the well-known properties of the Hardy space $\mathscr{H}^2(S_\delta)$ (see, for instance, \cite[Chapter 19]{Rudin} dealing with the half-space).
\begin{proposition}\label{prop:Hardy}
The following holds.
\begin{enumerate}[\rm (i)]
\item \label{it.Banach}
The space $\mathscr{H}^2(S_\delta)$ is a Banach space.
\item \label{it.PaleyWiener} [Paley-Wiener] For all $u\in \mathscr{H}^2(S_\delta)$, the map
\[(-\delta,\delta)\ni y\mapsto u(\cdot+iy)\in L^2(\R)\]
is continuous and can be extended by continuity to $[-\delta,\delta]$. This defines a trace operator at the boundary :
\[
	T\colon u\in \mathscr{H}^2(S_\delta)\mapsto T(u) \in L^2(\partial S_\delta)\,.
\]
\item \label{it.Equiv} The norms $u\mapsto \|Tu\|_{L^2(\partial S_\delta)} =  \sqrt{\|u(\cdot-i\delta)\|^2_{L^2(\R)}+\|u(\cdot+i\delta)\|^2_{L^2(\R)}}$ and $u\mapsto M(u)$ are equivalent. Moreover, $\mathscr{H}^2(S_\delta)$ endowed with $\|T\cdot\|_{L^2(\partial S_\delta)}$ is a Hilbert space and $T$ becomes an isometry. 
	\item\label{it.inclL2} 
	We have the continuous embedding $\mathscr{H}^2(S_\delta)\subset L^2(S_\delta)\,,$
with 
$\|u\|_{L^2(S_\delta)}\leq \sqrt{\delta }\|Tu\|_{L^2(\partial S_\delta)}$, 
for all $u\in \mathscr{H}^2(S_\delta)$.
\item \label{it.Cauchy} For $u\in \mathscr{H}^2(S_\delta)$, $z_0\in S_\delta$ and $k\in\mathbb{N}$, we have
\[\begin{split}
	|u^{(k)}(z_0)|  
		&\leq
	\sqrt{\frac{(2k)!}{2^{2k+1}\pi}}
	{\rm dist}(z_0,\partial S_\delta)^{-\frac{2k+1}{2}}
	\|
	Tu
	\|_{L^2(\partial S_\delta)}\,.
\end{split}\]

\end{enumerate}
\end{proposition}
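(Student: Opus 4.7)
The plan is to reduce all five statements to computations in a weighted $L^2$ space via Fourier transform in the horizontal variable (this is the strip analogue of the classical Paley--Wiener theorem for the half-plane, as in Rudin, Ch.\ 19). For $u\in\mathscr{O}(S_\delta)$ with $M(u)<+\infty$ and $y\in(-\delta,\delta)$, denote by $\hat u(\cdot,y)$ the Fourier transform of $u(\cdot+iy)$ in $L^2(\R)$. The Cauchy--Riemann equation $\partial_y u=i\partial_x u$ formally gives $\partial_y\hat u(\xi,y)=-\xi\hat u(\xi,y)$, hence the key identity
\[
\hat u(\xi,y)=e^{-\xi y}\hat u_0(\xi),\qquad \xi\in\R,\ y\in(-\delta,\delta),\ \hat u_0:=\hat u(\cdot,0),
\]
which is derived rigorously by applying Cauchy's theorem to $z\mapsto u(z)e^{-i\xi z}$ on the rectangle $[-R,R]\times[y_1,y_2]\subset S_\delta$ and letting $R\to\infty$ along a well-chosen subsequence (see obstacle paragraph below).

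\textbf{Points \eqref{it.Banach}, \eqref{it.PaleyWiener}, \eqref{it.Equiv}.} By Plancherel, $\|u(\cdot+iy)\|_{L^2(\R)}^2=\int_\R e^{-2\xi y}|\hat u_0(\xi)|^2\,\dd\xi$. Letting $y\to-\delta^+$ on $\{\xi>0\}$ and $y\to\delta^-$ on $\{\xi<0\}$, Fatou's lemma gives
\[
\int_\R(e^{2\delta\xi}+e^{-2\delta\xi})|\hat u_0(\xi)|^2\,\dd\xi\leq 2 M(u)^2,
\]
so $\hat u_0$ lies in the weighted space $L^2(\R,(e^{2\delta\xi}+e^{-2\delta\xi})\,\dd\xi)$. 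Dominated convergence (with dominant $4(e^{2\delta\xi}+e^{-2\delta\xi})|\hat u_0|^2$) then yields that $u(\cdot+iy)$ is Cauchy in $L^2(\R)$ as $y\to\pm\delta$, producing the trace $Tu\in L^2(\partial S_\delta)$ of \eqref{it.PaleyWiener}; continuity of $y\mapsto u(\cdot+iy)$ is obtained similarly. Plancherel identifies $\|Tu\|_{L^2(\partial S_\delta)}^2$ with the above weighted integral, and the pointwise bound $\sup_{y\in(-\delta,\delta)}e^{-2\xi y}\leq e^{2\delta\xi}+e^{-2\delta\xi}$ together with Fatou yield $M(u)\leq\|Tu\|_{L^2(\partial S_\delta)}\leq\sqrt{2}\,M(u)$, which is \eqref{it.Equiv}. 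The map $u\mapsto\hat u_0$ is an isometric bijection of $(\mathscr{H}^2(S_\delta),\|T\cdot\|)$ onto the Hilbert weighted space (surjectivity via the explicit inverse $u(x+iy)=\frac{1}{2\pi}\int e^{i\xi x-\xi y}\hat u_0(\xi)\,\dd\xi$), which proves \eqref{it.Banach}.

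\textbf{Point \eqref{it.inclL2} and point \eqref{it.Cauchy}.} Integrating the key identity in $y$ and using Fubini,
\[
\|u\|_{L^2(S_\delta)}^2=\int_\R|\hat u_0(\xi)|^2\,\frac{\sinh(2\delta\xi)}{\xi}\,\dd\xi,
\]
and the elementary bound $\tanh(t)\leq t$ for $t\geq 0$ rewrites as $\sinh(2\delta\xi)/\xi\leq\delta(e^{2\delta\xi}+e^{-2\delta\xi})$, giving \eqref{it.inclL2}. For \eqref{it.Cauchy}, I apply Cauchy's theorem to $\zeta\mapsto u(\zeta)/(\zeta-z_0)^{k+1}$ on $[-R,R]\times[-\delta',\delta']$ with $\delta'<\delta$, obtain (after the subsequence argument and $\delta'\to\delta$)
\[
u^{(k)}(z_0)=\frac{k!}{2\pi i}\int_{\partial S_\delta}\frac{u(\zeta)}{(\zeta-z_0)^{k+1}}\,\dd\zeta,
\]
and then apply Cauchy--Schwarz. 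With $r_\pm:=\delta\mp\Im z_0$, each boundary integral reduces (by translation) to $r_\pm^{-(2k+1)}\int_\R(u^2+1)^{-(k+1)}\,\dd u=\pi(2k)!/(4^k(k!)^2 r_\pm^{2k+1})$ via the beta function $B(1/2,k+1/2)=\sqrt\pi\,\Gamma(k+1/2)/k!$. Since $\min(r_+,r_-)=\mathrm{dist}(z_0,\partial S_\delta)$, summing the two contributions yields exactly the stated constant $(2k)!/(2^{2k+1}\pi)$.

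\textbf{Main obstacle.} The unique subtlety is that functions in $L^\infty_y L^2_x$ need not decay pointwise in $x$, so the vertical sides of the rectangular contours used both in deriving the Fourier identity and in \eqref{it.Cauchy} cannot be killed by a naive estimate. The fix is uniform: $x\mapsto\int_{-\delta'}^{\delta'}|u(x+iy)|^2\,\dd y$ lies in $L^1(\R)$ by Fubini, hence tends to $0$ along some sequence $R_n\to\infty$; Cauchy--Schwarz on the vertical segments then produces vanishing contributions, which is precisely what is needed to complete both the Fourier derivation and the Cauchy representation.
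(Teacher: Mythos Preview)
Your proof is correct and, for points \eqref{it.PaleyWiener}, \eqref{it.Equiv}, \eqref{it.inclL2}, follows essentially the same Fourier-transform route as the paper (Fatou for the weighted bound, dominated convergence for continuity, $\tanh t\leq t$ for the embedding). Two places differ in a mildly interesting way. For \eqref{it.Banach}, the paper argues directly that $\mathscr{H}^2(S_\delta)$ is closed in the Banach space $L^\infty_y L^2_x$ via the continuous embedding into $L^1_{\mathrm{loc}}$ and distributional holomorphy; your argument via the isometric bijection onto the weighted Hilbert space $L^2(\R,(e^{2\delta\xi}+e^{-2\delta\xi})\,\dd\xi)$ is just as short and has the advantage of giving the Hilbert structure of \eqref{it.Equiv} for free. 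For \eqref{it.Cauchy}, the paper stays in Fourier space, writing $u^{(k)}(z_0)=\mathscr{F}^{-1}(\xi^k e^{-y_0\xi}\hat u_0)(x_0)$ and applying Cauchy--Schwarz with $\|\xi^k e^{-(\delta-|y_0|)|\xi|}\|_{L^2(\R)}$; you instead push the rectangular Cauchy formula to the boundary and use Cauchy--Schwarz in physical space together with the beta-integral $\int_\R(1+u^2)^{-(k+1)}\,\dd u$. Both land on exactly the same constant $\sqrt{(2k)!/(2^{2k+1}\pi)}$, and your subsequence argument for killing the vertical sides is a detail the paper leaves implicit.
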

\begin{proof}
$L^\infty((-\delta,\delta)_y, L^2(\mathbb{R}_x))$ is a Banach space  that is continuously embedded in $L^1_{\rm loc}(S_\delta)$. Therefore, the distribution theory ensures that $\mathscr{H}^2(S_\delta)$ is a closed subset and Point \eqref{it.Banach} follows.
To show Point \eqref{it.PaleyWiener}, consider $u\in \mathscr{H}^2(S_\delta)$ and $y\in(-\delta,\delta)$. We can consider the partial Fourier transform $\mathscr{F}u(\cdot+iy)\in L^2(\R)$ and check by Cauchy formula that
\[\mathscr{F}[u(\cdot+iy)](\xi)=e^{-y\xi}\mathscr{F}[u(\cdot)](\xi)\,.\]
From the Parseval formula, it follows that, for all $y\in(-\delta,\delta)$,
\[\int_{\R}e^{-2y\xi}|\mathscr{F}[u(\cdot)](\xi)|^2\dd\xi=\|u(\cdot+iy)\|^2_{L^2(\R)}\leq M(u)^2\,.\]
Thanks to the Fatou lemma (by sending $y\to \pm\delta$), we see that
\[\int_{\R}\cosh(2\delta\xi)|\mathscr{F}[u(\cdot)](\xi)|^2\dd\xi\leq M(u)^2\,,\]
and in particular that
\[M(u)^2\leq\int_{\R}e^{2\delta|\xi|}|\mathscr{F}[u(\cdot)](\xi)|^2\dd\xi\leq 2M(u)^2\,.\]
By using the dominated convergence theorem, it shows that  $(-\delta,\delta)\ni y\mapsto u(\cdot+iy)\in L^2(\R)$ is continuous. This application has also limits in $\pm\delta$. These limits are $\mathscr{F}^{-1}(e^{\mp \delta\xi}\mathscr{F}[u(\cdot)])$.
Point \eqref{it.Equiv} follows from Point \eqref{it.PaleyWiener}.
Let us turn to Point \eqref{it.inclL2}. Let $u\in \mathscr{H}^2(S_\delta)$, we have
\[\begin{split}
	\|u\|_{L^2(S_\delta)}^2
	&=
	\int_{-\delta}^\delta\|\mathscr{F}u(\cdot+iy)\|^2_{L^2(\R)}{\rm d}y
	=\delta\int_\mathbb{R}|\mathscr{F}u|^2\left(\frac{e^{2\xi\delta}-e^{-2\xi\delta}}{2\xi\delta}\right){\rm d}\xi
	\\&\leq
	\delta\|T u \|_{\partial S_\delta}^2\left\|\frac{\tanh(x)}{x}\right\|_{L^\infty}
	\leq \delta\|T u \|_{\partial S_\delta}^2\,.
\end{split}\]
Let us now show Point \eqref{it.Cauchy}. Let $u\in \mathscr{H}^2(S_\delta)$ and $z_0 = x_0 + iy_0\in S_\delta$. We have
\[\begin{split}
	(-i)^ku^{(k)}(z_0)  &= (-i)^k\partial_x^ku(z_0)= \mathscr{F}^{-1}\left(\xi^k\mathscr{F}(u(\cdot+iy_0))\right)(x_0)
	\\&=\mathscr{F}^{-1}\left(
	\xi^ke^{-y_0\xi}
	\mathscr{F}[u(\cdot)]
	\right)(x_0)\,,
\end{split}\]
so that the Cauchy-Schwarz inequality ensures 
\[\begin{split}
	|u^{(k)}(z_0)|  
	&\leq \frac{1}{\sqrt{2\pi}}\|
	\xi^ke^{-y_0\xi}
	\mathscr{F}[u(\cdot)]
	\|_{L^1(\mathbb{R})}
	\\&\leq
	\frac{1}{\sqrt{2\pi}}
	\|
	|\xi|^ke^{-(\delta-|y_0|)|\xi|}
	\|_{L^2(\mathbb{R})}
	\|
	e^{\delta|\xi|}
	\mathscr{F}[u(\cdot)]
	\|_{L^2(\mathbb{R})}
	\\&\leq
	\frac{1}{\sqrt{2\pi}(\delta-|y_0|)^{\frac{2k+1}{2}}}
	\|
	|\xi|^ke^{-|\xi|}
	\|_{L^2(\mathbb{R})}
	\|
	Tu
	\|_{L^2(\partial S_\delta)}\,.
\end{split}\]

\end{proof}

\begin{lemma}\label{lem.density0}
	The space $H^1(S_\delta)\cap\mathscr{H}^2(S_\delta)$ is dense in $\mathscr{H}^2(S_\delta)$. More precisely, for all $u\in \mathscr{H}^2(S_\delta)$, there exists $(u_\epsilon)_{\epsilon>0}\subset H^1(S_\delta)\cap\mathscr{H}^2(S_\delta)$ such that \[\lim_{\epsilon\to +0}\|T(u_\epsilon-u)\|_{L^2(\partial S_\delta)}=0\,.\]
\end{lemma}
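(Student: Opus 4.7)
My plan is to regularize $u$ by a Fourier-side cutoff. Recall from the proof of Proposition \ref{prop:Hardy} that each $u\in\mathscr{H}^2(S_\delta)$ satisfies $\int_\R\cosh(2\delta\xi)|\mathscr{F}u(\xi)|^2\dd\xi<+\infty$, and that on the strip $u(x+iy)=\mathscr{F}^{-1}\bigl(\xi\mapsto e^{-y\xi}\mathscr{F}u(\xi)\bigr)(x)$ for $y\in[-\delta,\delta]$, where $\mathscr{F}$ is the partial Fourier transform in $x$ applied to the trace $u(\cdot+i0)$. In particular, $\|Tu\|_{L^2(\partial S_\delta)}^2=\int_\R 2\cosh(2\delta\xi)|\mathscr{F}u(\xi)|^2\dd\xi$, so convergence in the $\mathscr{H}^2$-norm is weighted $L^2$-convergence on the Fourier side. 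The idea is that cutting $\mathscr{F}u$ to a compact set in $\xi$ preserves the exponential weight while providing arbitrary regularity.

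Pick a function $\chi\in C^\infty_c(\R)$ with $\chi(0)=1$ and $0\leq\chi\leq 1$, and set, for $\epsilon>0$,
\[u_\epsilon(x+iy):=\mathscr{F}^{-1}\bigl(\xi\mapsto e^{-y\xi}\chi(\epsilon\xi)\mathscr{F}u(\xi)\bigr)(x)\,,\qquad (x,y)\in\R\times[-\delta,\delta]\,.\]
Since $\chi(\epsilon\cdot)$ has compact support, the integrand is absolutely integrable and $u_\epsilon$ extends to an entire function of $z=x+iy$. By Plancherel,
\[\|u_\epsilon(\cdot+iy)\|^2_{L^2(\R)}=\int_\R e^{-2y\xi}|\chi(\epsilon\xi)|^2|\mathscr{F}u(\xi)|^2\dd\xi\leq 2 M(u)^2\,,\]
uniformly in $y\in[-\delta,\delta]$, which gives $u_\epsilon\in\mathscr{H}^2(S_\delta)$. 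For the $H^1$-regularity, the compact support of $\chi(\epsilon\cdot)$ ensures that $\xi\chi(\epsilon\xi)$ is bounded (with bound depending on $\epsilon$). Differentiating under the integral, one obtains $\mathscr{F}(\partial_x u_\epsilon(\cdot+iy))(\xi)=i\xi e^{-y\xi}\chi(\epsilon\xi)\mathscr{F}u(\xi)$ and similarly for $\partial_y$, whence Fubini and Parseval yield $\partial_x u_\epsilon,\partial_y u_\epsilon\in L^2(S_\delta)$. Thus $u_\epsilon\in H^1(S_\delta)\cap\mathscr{H}^2(S_\delta)$.

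It only remains to see the convergence of traces. From the Paley-Wiener identity,
\[\|T(u_\epsilon-u)\|^2_{L^2(\partial S_\delta)}=\int_\R 2\cosh(2\delta\xi)|1-\chi(\epsilon\xi)|^2|\mathscr{F}u(\xi)|^2\dd\xi\,.\]
The integrand is dominated by $(1+\|\chi\|_\infty)^2\cdot 2\cosh(2\delta\xi)|\mathscr{F}u|^2\in L^1(\R)$ and tends to $0$ pointwise as $\epsilon\to 0$ since $\chi(\epsilon\xi)\to\chi(0)=1$. Dominated convergence concludes. The only point requiring real care in this scheme is securing the $H^1$-regularity; this is handled by the compact support of $\chi(\epsilon\cdot)$, which trades off $\epsilon$-uniformity for boundedness of $\xi\chi(\epsilon\xi)$, and no tool beyond Parseval, Fubini, and dominated convergence is needed.
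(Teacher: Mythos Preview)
Your proof is correct. Both your argument and the paper's rest on the Paley--Wiener description of $\mathscr{H}^2(S_\delta)$ and finish the convergence step by dominated convergence against the weight $\cosh(2\delta\xi)$, but the regularizing families are genuinely different. The paper dilates in the complex variable, setting $u_\epsilon(z)=u((1-\epsilon)z)$; this makes $u_\epsilon$ holomorphic on the slightly wider strip $S_{\delta/(1-\epsilon)}$, and the $H^1$-membership is then obtained by observing that on the Fourier side the dilation produces an extra decaying factor $e^{-\epsilon\delta|\xi|}$, which absorbs any polynomial in $\xi$. Your approach replaces this by a compactly supported Fourier multiplier $\chi(\epsilon\xi)$, which is arguably cleaner for the $H^1$-check (compact Fourier support gives $u_\epsilon\in H^s$ for every $s$ immediately) at the modest cost of making $u_\epsilon$ depend on the auxiliary choice of $\chi$. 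The paper's construction has the slight conceptual advantage that it is gauge-free and transports directly to the curved strip via the biholomorphism, whereas yours is tied to the Fourier transform on $S_\delta$; for the density statement itself, however, both arguments are equally short and valid.
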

\begin{proof}
 Let $u\in\mathscr{H}^2(S_\delta)$	and $\epsilon>0$. We let
	\[u_\epsilon(x)=u((1-\epsilon)x)\,.\]
	The function $u_\epsilon$ belongs to $\mathscr{O}(S_{\delta/(1-\epsilon)})$. In particular, $u_\epsilon\in\mathscr{C}^\infty(\overline{S}_\delta)$. We also see that $u_\epsilon\in\mathscr{H}^2(S_\delta)$. In fact, $u_\epsilon\in H^1(S_\delta)$. To see this, we notice that
	\[\mathscr{F}(u_\epsilon(\cdot+iy))=e^{-y\xi}\mathscr{F}(u_\epsilon)=(1-\epsilon)^{-1}e^{-y\xi}\mathscr{F}(u)((1-\epsilon)^{-1}\xi)\,.
	\]
	We have
	\[\begin{split}\int_{\R} e^{\alpha|\xi|}|\mathscr{F}(u_\epsilon(\cdot+iy))|^2\mathrm{d}\xi&=(1-\epsilon)^{-1}\int_{\R}e^{\alpha|\xi|}e^{-2y(1-\epsilon)\xi}|\mathscr{F}[u]|^2\mathrm{d}\xi\\
		&\leq(1-\epsilon)^{-1}\int_{\R}e^{(\alpha-2\delta\epsilon)|\xi|}e^{2\delta|\xi|}|\mathscr{F}[u]|^2\mathrm{d}\xi\,.
	\end{split}\]
	We recall that
	\[\int_{\R}\cosh(2\delta\xi)|\mathscr{F}[u(\cdot)](\xi)|^2\dd\xi\leq M(u)^2\,.\]
	Thus, by taking $\alpha=\delta\epsilon$, we get
	\[\int_{\R} e^{\alpha|\xi|}|\mathscr{F}(u_\epsilon(\cdot+iy))|^2\mathrm{d}\xi\leq 2(1-\epsilon)^{-1}M(u)^2\,.\]
	Integrating then with respect to $y$, we infer that $\partial_x^n u_\epsilon\in L^2(S_\delta)$ for all $n\geq 1$. By using the Cauchy-Riemann relation $\partial_x u_\epsilon+i\partial_y u_\epsilon=0$, we get that $u_\epsilon\in H^1(S_\delta)$.
	
	Let us now consider the approximation. We have
	\[\|T(u_\epsilon-u)\|^2_{L^2(\partial S_\delta)}=\int_{\R}(e^{-2\delta\xi}+e^{2\delta\xi})|\mathscr{F}(u_\epsilon-u)|^2\mathrm{d}\xi\,,
	\]
	which goes to $0$ as $\epsilon$ goes to $0$ by the dominate convergence theorem.
\end{proof}

\subsection{Biholomorphism}
We would like to define the Hardy space $\mathscr{H}^2(\Omega_\delta)$. Of course, by the Riemann mapping theorem, we can transform $\Omega_\delta$ into $S_\delta$ or even into the unit disk by means of a biholomorphism. As we can guess, the problem of defining the Hardy space is the behavior of the biholomorphism near the boundary. 
The purpose of this section is to construct a biholomorphism whose derivatives are well-controlled up to the boundary.

\begin{proposition}\label{prop.biholo}
There exist $\delta_0, C>0$ and for all $\delta\in(0,\delta_0)$, a biholomorphism  $f\colon\Omega_\delta\to S_\delta$  such that
\[
	\|\tilde f(s,t)-(s+it)\|_{\mathscr{C}^1(\overline{S_\delta})}\leq C\delta\,.\]
\end{proposition}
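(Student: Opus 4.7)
The idea is to view $\Theta\colon S_\delta \to \Omega_\delta$ as an almost-holomorphic parametrization and correct it into a genuine biholomorphism via the Beltrami equation. Identifying $\mathbb{R}^2 \simeq \mathbb{C}$ and writing $\gamma'(s) = e^{i\theta(s)}$ with $\theta'=\kappa$, one has $\Theta(s,t) = \gamma(s) + it\,e^{i\theta(s)}$, and a direct computation gives $\partial_z\Theta = e^{i\theta}(1 - t\kappa/2)$ together with $\partial_{\bar z}\Theta = -\tfrac{t\kappa}{2}\,e^{i\theta}$. The associated Beltrami coefficient
\[
\mu(z) := \frac{\partial_{\bar z}\Theta}{\partial_z\Theta} = \frac{-t\kappa(s)/2}{1-t\kappa(s)/2}
\]
is smooth, compactly supported in $s$ (since $\kappa$ is), and, for $\delta$ small enough, satisfies $\|\mu\|_{\mathscr{C}^k(\overline{S_\delta})} \leq C_k\delta$ for every $k \geq 0$.

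Any biholomorphism $f\colon \Omega_\delta \to S_\delta$ is equivalent, via $\tilde f := f\circ\Theta$, to a strip-preserving diffeomorphism $\tilde f\colon S_\delta \to S_\delta$ satisfying the Beltrami equation $\partial_{\bar z}\tilde f = \mu\,\partial_z\tilde f$ together with the boundary condition $\mathrm{Im}\,\tilde f(s,\pm\delta) = \pm\delta$. Writing $\tilde f(z) = z + v(z)$, the problem becomes
\[
\partial_{\bar z}v = \mu(1 + \partial_z v),\qquad \mathrm{Im}\,v\big|_{\partial S_\delta} = 0.
\]
To solve this, I would extend $\mu$ to a function $\check\mu$ on $\mathbb{C}$ by iterated Schwarz reflection across the two lines $\{\mathrm{Im}\,z = \pm\delta\}$; this yields a $4\delta$-periodic (in $t$) extension, still compactly supported in $s$, with unchanged $L^\infty$-norm. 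The Ahlfors-Bers theorem then produces a unique principal solution $\tilde f\colon\mathbb{C}\to\mathbb{C}$ of $\partial_{\bar z}\tilde f=\check\mu\,\partial_z\tilde f$ normalized by $\tilde f(z)-z\to 0$ at infinity. By uniqueness, this solution inherits the reflection symmetries of $\check\mu$, hence preserves the lines $\{\mathrm{Im}\,z = \pm\delta\}$ and induces a bijection of $S_\delta$ onto itself. Setting $f := \tilde f|_{S_\delta}\circ \Theta^{-1}$ then provides the desired biholomorphism.

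The main obstacle is the $\mathscr{C}^1(\overline{S_\delta})$-estimate on $v = \tilde f(z) - z$. Ahlfors-Bers directly supplies only $L^p$ and Hölder bounds of the form $\|v\|_{\mathscr{C}^{0,\alpha}(\mathbb{C})} \leq C\|\mu\|_{L^\infty}$, since the Beurling transform is not bounded on $L^\infty$. To bootstrap, one differentiates the Beltrami equation to obtain $\partial_{\bar z}(\partial_z v) = (\partial_z\mu)(1+\partial_z v) + \mu\,\partial_z(\partial_z v)$, which has the same structure with a small source. Since $\mu$ and its derivatives are $\mathscr{O}(\delta)$ in $L^\infty$ and compactly supported in $s$, a Neumann series argument exploiting $\|\check\mu\,T\|_{L^p\to L^p}<1$ (with $T$ the Beurling transform and $p$ near $2$), together with Calderón-Zygmund and standard Schauder estimates, yields $\|\partial v\|_{L^\infty(\mathbb{C})} \leq C\delta$ and hence the required $\mathscr{C}^1$ bound on $\overline{S_\delta}$.
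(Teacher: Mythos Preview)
Your approach via the Beltrami equation is genuinely different from the paper's, and the overall architecture (compute $\mu$, extend by reflection, apply Ahlfors--Bers, read off the symmetry) is reasonable. However, two concrete points undermine the argument as written.

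First, the claim $\|\mu\|_{\mathscr{C}^k(\overline{S_\delta})}\leq C_k\delta$ is false for $k\geq 1$: from $\mu=\frac{-t\kappa/2}{1-t\kappa/2}$ one computes $\partial_t\mu=\frac{-\kappa/2}{(1-t\kappa/2)^2}$, which is $O(1)$, not $O(\delta)$. Your bootstrap explicitly invokes ``$\mu$ and its derivatives are $\mathscr{O}(\delta)$ in $L^\infty$'', so the quantitative $\mathscr{C}^1$ conclusion $\|v\|_{\mathscr{C}^1}\leq C\delta$ does not follow from the stated mechanism. One can still get $\|\mu\|_{\mathscr{C}^{0,\alpha}}\leq C\delta^{1-\alpha}$ by interpolation, but this yields only $O(\delta^{1-\alpha})$ control on $\partial v$, not $O(\delta)$.

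Second, the iterated Schwarz reflection produces a $\check\mu$ that is merely Lipschitz across the lines $t=\pm\delta$: the even reflection forces $\partial_t\check\mu$ to jump by $-2\partial_t\mu(s,\pm\delta)\approx\pm\kappa(s)$ there. This is precisely where you want to read off boundary regularity, and it interferes with differentiating the Beltrami equation globally. In addition, $\check\mu$ is $4\delta$-periodic in $t$ (hence not compactly supported), so the ``principal solution normalized by $\tilde f(z)-z\to 0$ at infinity'' needs justification; one would have to work on the cylinder $\mathbb{C}/4i\delta\mathbb{Z}$ or normalize differently.

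For comparison, the paper avoids all of this machinery. It constructs $f=\alpha+i\beta$ directly: $\beta$ is the harmonic function on $\Omega_\delta$ with Dirichlet data $\pm\delta$ on $\Gamma^\pm_\delta$, obtained by solving $\Delta u=-\Delta t$ in $H^1_0$ (where $t$ is the transverse coordinate); elliptic estimates on the thin domain give $\|\tilde\beta-t\|_{\mathscr{C}^1}\leq C\delta$. The real part $\alpha$ is then the harmonic conjugate $\nabla\alpha=(\nabla\beta)^\perp$, so $f$ is holomorphic by construction with the right boundary behaviour built in. Surjectivity follows from the open mapping theorem and a connectedness argument, and injectivity (for $\delta$ small) from the $\mathscr{C}^1$-closeness to the identity via a one-line Taylor argument. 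This route is more elementary and sidesteps both the derivative-of-$\mu$ issue and the reflection/normalization subtleties.
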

The following propositions will allow the construction of the imaginary part of $f$.
\begin{proposition}
There exists a unique function $\beta\colon \Omega_\delta\to \mathbb{R}$ such that $\tilde\beta(s,t)-t\in H_0^1(S_\delta)$ and satisfying
\[\Delta\beta=0\,,\quad \beta_{|\Gamma_\delta^{\pm}}=\pm\delta\,.\]	
In fact, $\tilde\beta(s,t)-t\in \mathscr{S}(\overline{S_\delta})$ and in particular $\beta\in\mathscr{C}^\infty(\overline{\Omega_\delta})$.
\end{proposition}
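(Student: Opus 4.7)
The plan is to pull the problem back to the straight strip $S_\delta = \mathbb{R} \times (-\delta,\delta)$ via the tubular diffeomorphism $\Theta$ from \eqref{eq:Diffeo}, and then to exploit the compact support of the curvature $\kappa$. Setting $\tilde\beta = \beta\circ\Theta$, the equation $\Delta\beta=0$ becomes, in tubular coordinates,
\[
L\tilde\beta := \partial_s(m^{-1}\partial_s\tilde\beta) + \partial_t(m\,\partial_t\tilde\beta) = 0,
\]
where $m(s,t)=1-t\kappa(s)$ is uniformly bounded away from $0$ and $\infty$ for $\delta$ small enough, and the boundary condition reads $\tilde\beta(s,\pm\delta)=\pm\delta$. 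Writing $\tilde\beta(s,t)=t+u(s,t)$, we are reduced to finding $u\in H^1_0(S_\delta)$ such that $Lu = -Lt = \kappa(s)$ in $S_\delta$. The crucial point is that the source term is compactly supported in $s$, since $\kappa$ is.

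For existence and uniqueness, I would apply Lax-Milgram to the bilinear form
\[
a(u,v) = \int_{S_\delta}\bigl(m^{-1}\partial_s u\,\partial_s v + m\,\partial_t u\,\partial_t v\bigr)\,ds\,dt
\]
on $H^1_0(S_\delta)$. Continuity is immediate from the uniform bounds on $m^{\pm 1}$; coercivity follows from those same bounds together with the Poincaré inequality $\|u\|_{L^2(S_\delta)} \leq C_\delta\|\partial_t u\|_{L^2(S_\delta)}$ valid on $H^1_0(S_\delta)$. The right-hand side $v\mapsto -\int_{S_\delta}\kappa\, v\,ds\,dt$ is a continuous linear functional thanks to the compact support of $\kappa$. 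Standard elliptic regularity (Schauder estimates up to the boundary, iterated on translates of bounded boxes in $s$) then upgrades the solution to $u\in\mathscr{C}^\infty(\overline{S_\delta})$, since both the coefficients and the source are smooth.

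To prove Schwartz decay, I would use the fact that on the two half-strips $\{\pm s > s_0\}$ lying outside $\mathrm{supp}(\kappa)$, one has $m\equiv 1$ so that $\Delta u=0$ with $u(s,\pm\delta)=0$. Expanding $u(s,\cdot)$ on the Dirichlet eigenbasis $\bigl(\sin(n\pi(t+\delta)/(2\delta))\bigr)_{n\geq 1}$ of $-\partial_t^2$ on $(-\delta,\delta)$ with eigenvalues $\lambda_n=(n\pi/(2\delta))^2$, each Fourier coefficient $c_n(s)$ satisfies $c_n''=\lambda_n c_n$. The $H^1$ condition forces the exponentially growing mode to vanish, yielding $c_n(s)=\alpha_n^{\pm}e^{\mp\sqrt{\lambda_n}\,s}$ for $\pm s>s_0$. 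Summing and differentiating term by term gives exponential decay of $u$ and all its $s$-derivatives; mixed and $t$-derivatives are then controlled via the equation $\partial_t^2 u=-\partial_s^2 u$ on the half-strips, which yields $u\in\mathscr{S}(\overline{S_\delta})$.

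The main technical obstacle is the Schwartz estimate: one must verify that the uniform exponential decay of all modes can be summed to produce genuine Schwartz bounds on every derivative, including mixed ones, uniformly up to $\partial S_\delta$. This can be handled by combining the eigenfunction decomposition above with Caccioppoli-type estimates on translating boxes of fixed size in the $s$ variable, which pass the exponential decay from the $L^2$ norm to all $\mathscr{C}^k$ norms. Uniqueness of $\beta$ follows directly from Lax-Milgram applied to $a$, which then transfers back to $\Omega_\delta$ through $\Theta$ and gives $\beta\in\mathscr{C}^\infty(\overline{\Omega_\delta})$.
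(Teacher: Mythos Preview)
Your proposal is correct and follows essentially the same approach as the paper: subtract the transverse coordinate $t$ to reduce to a homogeneous Dirichlet problem with source compactly supported in $s$, solve by Lax--Milgram, and invoke elliptic regularity. The only cosmetic difference is that the paper works directly on $\Omega_\delta$ (solving $\Delta u=-\Delta t$ there and citing \cite{BLLTRR23} for the Schwartz decay), whereas you pull back to $S_\delta$ and supply the separation-of-variables argument for the decay explicitly.
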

\begin{proof}
Define $t\colon \Omega_\delta\to \mathbb{R}$ the transverse coordinate to $\gamma$. Note that $\Delta t\in \mathscr{C}^\infty_c(\Omega_\delta)$ since $\kappa\in \mathscr{C}^\infty_c(\mathbb{R})$. We are led to solve in $H^1_0(\Omega_\delta)$ the Poisson problem
$
	\Delta u = -\Delta t\,.	
$
The unique solution is then $\beta = u + t$. For more details, one refers to \cite{BLLTRR23}.
\end{proof}

Following the same analysis as in \cite[Section 3.2]{BLLTRR23}, we can prove that, when $\delta$ is small enough, $\beta$ is approximated by $t$.
\begin{proposition}\label{prop.tcoordinates}
There exist $\delta_0,C>0$ such that, for all $\delta\in(0,\delta_0)$,
\[\|\tilde\beta(s,t)-t\|_{\mathscr{C}^1(\overline{S_\delta})}\leq C\delta\,.\]	
In particular, $\nabla\beta$ is uniformly non-zero on $\overline{\Omega}_\delta$.
\end{proposition}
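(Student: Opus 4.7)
The strategy is to work in tubular coordinates $(s,t)\in S_\delta$, derive the Dirichlet problem satisfied by $w:=\tilde\beta(s,t)-t$, and then rescale the transverse variable so that the small parameter $\delta$ is moved into the coefficients and source term of a uniformly elliptic PDE on a fixed strip. The pulled-back Laplacian reads
\[
\Delta_\Theta = m^{-2}\partial_s^2+\partial_t^2+m^{-3}t\kappa'(s)\partial_s-m^{-1}\kappa(s)\partial_t,\qquad m=1-t\kappa(s),
\]
and a direct computation yields $\Delta_\Theta t = -\kappa(s)/m$. Since $\tilde\beta$ is $\Delta_\Theta$-harmonic with boundary values $\pm\delta$ on $\{t=\pm\delta\}$, the function $w$ solves
\[
\Delta_\Theta w = \kappa(s)/m \text{ on } S_\delta,\qquad w|_{\partial S_\delta}=0.
\]
For $\delta$ small enough we have $m\geq 1/2$, so the right-hand side is smooth, compactly supported in $s$ and bounded uniformly in $\delta$.

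Next I would rescale $\tau=t/\delta\in(-1,1)$ and set $W(s,\tau):=w(s,\delta\tau)$, so $\partial_t=\delta^{-1}\partial_\tau$. Multiplying the equation by $\delta^2$ gives, on the fixed strip $S_1=\R\times(-1,1)$,
\[
\partial_\tau^2 W + \delta^2 m^{-2}\partial_s^2 W + \delta^3 m^{-3}\tau\kappa'(s)\partial_s W - \delta\, m^{-1}\kappa(s)\partial_\tau W = \delta^2\,\kappa(s)/m,
\]
with $W|_{\tau=\pm 1}=0$ and $m=1-\delta\tau\kappa(s)$. This is a second-order uniformly elliptic equation on $S_1$ whose coefficients and source are smooth, compactly supported in $s$ and uniformly bounded in $\delta\in(0,\delta_0)$, with a source carrying an explicit factor $\delta^2$.

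Standard elliptic regularity (iterating interior and boundary $H^2$ estimates, using the compact $s$-support of the data together with the Schwartz decay of $w$ already guaranteed by the preceding proposition) yields $\|W\|_{H^k(S_1)}\leq C_k\delta^2$ for every $k\in\N$, hence by Sobolev embedding $\|W\|_{\mathscr{C}^1(\overline{S_1})}\leq C\delta^2$. Undoing the rescaling gives $\|w\|_{L^\infty}+\|\partial_s w\|_{L^\infty}\leq C\delta^2$ and $\|\partial_t w\|_{L^\infty}=\delta^{-1}\|\partial_\tau W\|_{L^\infty}\leq C\delta$, which produces the announced bound $\|\tilde\beta-t\|_{\mathscr{C}^1(\overline{S_\delta})}\leq C\delta$. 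For the last claim, in tubular coordinates $\nabla_{s,t}\tilde\beta=(0,1)+\nabla_{s,t}w$ is uniformly within $C\delta$ of the unit vector $(0,1)$, hence uniformly bounded below in norm for $\delta$ small; since the Jacobian of $\Theta$ is uniformly bounded above and below for small $\delta$, $\nabla\beta$ is uniformly non-zero on $\overline{\Omega_\delta}$. The main subtlety is exactly this transverse rescaling: it is what turns the naive $\delta^2$ bound on $w$ and $\partial_s w$ into the correct $\delta$ bound on $\partial_t w$, and hence on the full $\mathscr{C}^1$ norm.
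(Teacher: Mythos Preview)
Your set-up is exactly right: the tubular Laplacian, the equation $\Delta_\Theta w=\kappa/m$ with Dirichlet data, and the rescaled equation on $S_1$ are all correct. The gap is the sentence ``This is a second-order uniformly elliptic equation on $S_1$\dots''. After multiplying by $\delta^2$ the principal symbol is $\xi_\tau^2+\delta^2 m^{-2}\xi_s^2$; its smallest eigenvalue is of order $\delta^2$, so the family $(L_\delta)_\delta$ is \emph{not} uniformly elliptic. Standard interior/boundary $H^2$ estimates carry a constant proportional to the inverse ellipticity ratio, here $\sim\delta^{-2}$, and therefore do not deliver $\|W\|_{H^k}\le C_k\delta^2$: applied naively they give at best an $O(1)$ bound on $W$, which after undoing the rescaling yields only $\|\partial_t w\|_\infty=O(\delta^{-1})$.

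What makes the argument work is not uniform ellipticity but the anisotropic structure: the degenerate direction $s$ is tangential to the boundary, so every $\partial_s^j W$ still satisfies the Dirichlet condition at $\tau=\pm1$ and the one-dimensional Poincar\'e inequality $\|\partial_s^j W\|\le C\|\partial_\tau\partial_s^j W\|$ holds with a constant independent of $\delta$. One therefore tests the variational identity
\[
\int_{S_1}\!\big(\delta^2 m^{-1}\partial_s W\,\partial_s\Phi+m\,\partial_\tau W\,\partial_\tau\Phi\big)\,ds\,d\tau=-\delta^2\!\int_{S_1}\!\kappa\,\Phi\,ds\,d\tau
\]
with $\Phi=(-1)^j\partial_s^{2j}W$, integrates by parts in $s$, and uses Poincar\'e in $\tau$ to close. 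This yields $\|\partial_s^j W\|+\|\partial_\tau\partial_s^j W\|\le C_j\delta^2$ for all $j$; reading $\partial_\tau^2 W$ off the equation then gives $\|\partial_s^j\partial_\tau^\ell W\|\le C_{j,\ell}\delta^2$, and Sobolev embedding on $S_1$ produces $\|W\|_{\mathscr C^1}\le C\delta^2$ as you claimed. This tangential-differentiation scheme is presumably the content of \cite[Section~3.2]{BLLTRR23}, to which the paper defers; you should replace the appeal to ``standard elliptic regularity'' by this argument. Once this is done, your final paragraph recovering $\nabla\beta\neq0$ from the $\mathscr C^1$ closeness of $\tilde\beta$ to $t$ is fine.
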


We recall Poincaré's Lemma.
\begin{lemma}
	Let $x_0\in \gamma(\mathbb{R})$.
	For all $x\in\overline{\Omega}_\delta$, we let
	\[\alpha(x)=\int_{\gamma_{x_0,x}} (\nabla\beta)^{\perp}\cdot\overrightarrow{\mathrm{d}\ell}\,,\]
	where $\gamma_{x_0,x}$ is a path of class $\mathscr{C}^1$ connecting $x_0$ to $x$. 	
	
	Then, the function $\alpha$ is well defined (it does not depend on the choice of path) and it is a smooth function on $\overline{\Omega}_\delta$ that satisfies
	\[\nabla\alpha=(\nabla\beta)^{\perp}\,.\]
\end{lemma}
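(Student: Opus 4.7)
The plan is to recognize this as the standard Poincaré lemma for closed $1$-forms on a simply connected planar domain, and to verify the three ingredients: (a) simple connectedness of $\overline{\Omega}_\delta$, (b) closedness of the $1$-form associated with $(\nabla\beta)^\perp$, (c) smoothness of the resulting primitive up to the boundary.

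First I would note that $\Omega_\delta = \Theta(\Omega_0)$ is the diffeomorphic image of the simply connected straight strip $\Omega_0 = \mathbb{R}\times(-\delta,\delta)$ under the injective map $\Theta$ from \eqref{eq:Diffeo}, so $\overline{\Omega}_\delta$ is simply connected. Next, using that $\beta \in \mathscr{C}^\infty(\overline{\Omega}_\delta)$ is harmonic (by its defining equation $\Delta\beta=0$), I would compute the two-dimensional curl of $(\nabla\beta)^\perp = (-\partial_2\beta, \partial_1\beta)$:
\[
\mathrm{curl}\,(\nabla\beta)^\perp \;=\; \partial_1(\partial_1\beta) - \partial_2(-\partial_2\beta) \;=\; \Delta\beta \;=\; 0\,.
\]
This gives the closedness of the $1$-form $\omega := -\partial_2\beta\,\mathrm{d}x_1 + \partial_1\beta\,\mathrm{d}x_2$ on $\overline{\Omega}_\delta$.

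Path independence then follows by a standard Green/Stokes argument. Given two $\mathscr{C}^1$ paths $\gamma_1,\gamma_2$ from $x_0$ to $x$, their concatenation $\gamma_1 \cup (-\gamma_2)$ is a closed piecewise $\mathscr{C}^1$ loop in the simply connected domain $\overline{\Omega}_\delta$, hence bounds a region $U$ (after standard approximation by smooth loops if needed), and
\[
\int_{\gamma_1}\omega - \int_{\gamma_2}\omega \;=\; \oint_{\partial U}\omega \;=\; \iint_U \mathrm{curl}\,(\nabla\beta)^\perp \,\mathrm{d}x \;=\; 0\,.
\]
Hence $\alpha$ is well defined.

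Finally, I would establish the gradient identity by choosing, for $x$ fixed and $y$ near $x$, a path from $x_0$ to $y$ that coincides with a path $\gamma_{x_0,x}$ up to $x$ and then follows the straight segment $[x,y] \subset \overline{\Omega}_\delta$. Then
\[
\alpha(y) - \alpha(x) \;=\; \int_0^1 (\nabla\beta)^\perp(x + s(y-x)) \cdot (y-x)\,\mathrm{d}s\,,
\]
and letting $y \to x$ gives $\nabla\alpha(x) = (\nabla\beta)^\perp(x)$, first at interior points and, by continuity of $(\nabla\beta)^\perp$ up to the boundary (Proposition \ref{prop.tcoordinates}), on all of $\overline{\Omega}_\delta$. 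Smoothness of $\alpha$ on $\overline{\Omega}_\delta$ is then inherited from $\beta \in \mathscr{C}^\infty(\overline{\Omega}_\delta)$ via the relation $\nabla\alpha = (\nabla\beta)^\perp$. I do not anticipate a serious obstacle here; the only mild technical point is justifying the Stokes-type argument on the possibly unbounded domain, which is handled by restricting to a bounded simply connected subdomain containing both paths.
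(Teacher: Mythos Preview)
Your argument is correct and is precisely the standard Poincar\'e lemma reasoning that the paper is invoking; in fact the paper does not give a proof at all, simply labeling the statement as a recall of Poincar\'e's Lemma. Your write-up supplies exactly the three expected ingredients (simple connectedness of the strip, $\mathrm{curl}\,(\nabla\beta)^\perp=\Delta\beta=0$, and the segment computation for $\nabla\alpha$), so there is nothing to add or correct.
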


We let $f=\alpha+i\beta$. Then, by construction, we see that $f$ is holomorphic on $\Omega_\delta$.
\begin{proposition}\label{prop.tcoordinates2}
There exist $\delta_0,C>0$ such that, for all $\delta\in(0,\delta_0)$,
\[\|\tilde f(s,t)-(s+it)\|_{\mathscr{C}^1(\overline{S_\delta})}\leq C\delta\,.\]	
\end{proposition}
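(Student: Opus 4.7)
The proof combines Proposition \ref{prop.tcoordinates} with the Cauchy-Riemann equations satisfied by $f$, pulled back to the straight strip via $\Theta$. Since $\|\tilde\beta - t\|_{\mathscr{C}^1(\overline{S_\delta})} = O(\delta)$ is already granted by Proposition \ref{prop.tcoordinates}, the substantive task is to bound $\tilde\alpha - s$ in $\mathscr{C}^1$. Setting $h := \tilde f - (s+it) = h_1 + i h_2$ with $h_1 = \tilde\alpha - s$ and $h_2 = \tilde\beta - t$, the goal reduces to $\|h_1\|_{L^\infty} + \|\nabla h_1\|_{L^\infty} \leq C\delta$.

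The key observation is the form of the Cauchy-Riemann equation in the $(s,t)$ coordinates. Identifying $\mathbb{R}^2 \cong \mathbb{C}$, one has $\mathbf{n}(s) = i\gamma'(s)$, hence $\Theta(s,t) = \gamma(s) + i t \gamma'(s)$, which yields $\partial_s \Theta = m(s,t)\gamma'(s)$ and $\partial_t \Theta = i\gamma'(s)$ with $m = 1 - t\kappa$. Applying the chain rule to $\tilde f = f \circ \Theta$ and using the holomorphy of $f$, one obtains
\[
\partial_s \tilde f + i m \, \partial_t \tilde f = 0 \qquad \text{on } S_\delta.
\]
Since $\partial_s(s+it) + im\,\partial_t(s+it) = 1 - m = t\kappa$, the function $h$ satisfies $\partial_s h + im\,\partial_t h = -t\kappa$. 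Separating real and imaginary parts yields
\[
\partial_s h_1 = m\,\partial_t h_2 - t\kappa, \qquad \partial_t h_1 = -m^{-1}\,\partial_s h_2.
\]
Combined with the bounds $|m-1| + |t\kappa| \leq C\delta$ and $\|h_2\|_{\mathscr{C}^1} \leq C\delta$, this instantly gives $\|\nabla h_1\|_{L^\infty} \leq C\delta$, hence $\|\nabla h\|_{L^\infty} \leq C\delta$.

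To conclude, I would fix the base point $x_0$ in Poincar\'e's lemma so that $h_1(s_0, 0) = 0$ for some $s_0$ in the straight part of $\gamma$, and then write
\[
h_1(s,t) = \int_{s_0}^s \partial_t h_2(s',0)\,ds' + \int_0^t \partial_t h_1(s, t')\,dt'.
\]
The second integral is $O(\delta^2)$ by the gradient bound just obtained. The first integral is the delicate part: it requires $\partial_t h_2(\cdot, 0)$ to be $L^1(\mathbb{R})$ with norm $O(\delta)$, which the mere $\mathscr{C}^1$ bound from Proposition \ref{prop.tcoordinates} cannot provide. The main obstacle is thus this quantitative $L^1$-in-$s$ estimate. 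My plan is to use the stronger Schwartz class property $\tilde\beta - t \in \mathscr{S}(\overline{S_\delta})$ built into the construction of $\beta$: outside $\mathrm{supp}(\kappa)$, the function $\tilde\beta - t$ is harmonic in the straight half-strip and vanishes on the boundary, so a separation-of-variables or Phragm\'en-Lindel\"of argument produces exponential decay in $s$ with amplitude controlled at the curved-straight interface, while in the curved region (of $O(1)$ extent in $s$) the quantity $\partial_t h_2$ is uniformly $O(\delta)$. Summing the two contributions produces an $L^1$-in-$s$ norm of order $O(\delta)$, closing the estimate. Carrying this out quantitatively would follow the same route as in \cite[Section 3.2]{BLLTRR23}, possibly via $\delta$-dependent weighted elliptic estimates for $\tilde\beta - t$.
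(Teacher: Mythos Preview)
The paper does not supply a proof of this proposition; it is stated immediately after the Poincar\'e lemma and left to the reader. Your route via the pulled-back Cauchy--Riemann relation $\partial_s\tilde f + im\,\partial_t\tilde f = 0$ is exactly the natural one, and your derivation of the gradient bound $\|\nabla h_1\|_{L^\infty}\leq C\delta$ from Proposition~\ref{prop.tcoordinates} is correct and is presumably what the authors have in mind.

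You are also right that the $L^\infty$ bound on $h_1=\tilde\alpha-s$ is not a formal consequence of the $\mathscr{C}^1$ bound on $h_2=\tilde\beta-t$ alone: one genuinely needs an integrability-in-$s$ estimate for $\partial_t h_2(\cdot,0)$, which the paper does not make explicit. Your proposed resolution is sound. On the compact curved region $|s|\leq L$ the uniform bound $|\partial_t h_2|\leq C\delta$ gives a contribution $O(\delta)$. On each straight half-strip, $h_2$ is harmonic with zero Dirichlet data on $t=\pm\delta$; expanding in the sine basis $\sin\!\big(n\pi(t+\delta)/(2\delta)\big)e^{-n\pi(s-L)/(2\delta)}$ and using that both $h_2(L,\cdot)$ and $\partial_t h_2(L,\cdot)$ are $O(\delta)$ in $L^\infty$ (hence their Fourier coefficients satisfy $\sum n^2|a_n|^2\leq C\delta^4$), one finds $\int_L^\infty|\partial_t h_2(s,0)|\,\dd s\leq C\sum|a_n|\leq C\delta^2$. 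So your plan closes, and in fact with room to spare. This fills what appears to be a small gap in the paper's presentation.
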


 It remains to show that $f$ is a biholomorphism.


\begin{lemma}
We have
\[f(\Omega_\delta)\subset S_\delta\,.\]
\end{lemma}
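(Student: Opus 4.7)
The plan is to show that $\Im f = \beta$ takes values strictly in $(-\delta,\delta)$ on $\Omega_\delta$. Since $\Re f = \alpha$ is real-valued by construction, this suffices to conclude $f(\Omega_\delta)\subset S_\delta = \mathbb{R}\times(-\delta,\delta)$.

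First, I would collect what we already know about $\beta$. By construction, $\beta$ is harmonic on $\Omega_\delta$ with boundary values $\pm\delta$ on the two connected components $\Gamma_\delta^{\pm}$ of $\partial\Omega_\delta$. Moreover, $\tilde\beta(s,t) - t \in \mathscr{S}(\overline{S_\delta})$, so in the tubular coordinates $\beta$ differs from the transverse coordinate $t \in (-\delta,\delta)$ by a Schwartz-class correction; in particular $\tilde\beta$ is bounded on $\overline{S_\delta}$, and at infinity $\tilde\beta(s,t) \to t$ uniformly in $t$.

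Next, I would apply the maximum principle to $\beta$ on $\Omega_\delta$. Consider the function $u = \beta - \delta$: it is harmonic, non-positive on $\partial\Omega_\delta$ (indeed, $u = 0$ on $\Gamma_\delta^+$ and $u = -2\delta$ on $\Gamma_\delta^-$), and $\limsup_{|x|\to\infty, x\in\Omega_\delta} u(x) = 0$ since $\tilde\beta \to t \leq \delta$ at infinity. By the maximum principle for bounded harmonic functions on unbounded domains (applied on an exhaustion of $\Omega_\delta$ by large bounded pieces, using the decay of $\tilde\beta - t$ to control the error on the truncation caps), we obtain $u\leq 0$, hence $\beta\leq \delta$ on $\Omega_\delta$. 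A symmetric argument applied to $-\beta - \delta$ gives $\beta \geq -\delta$. To upgrade to strict inequalities in the interior, I would invoke the strong maximum principle: if $\beta(x_0)=\delta$ for some $x_0\in\Omega_\delta$, then $\beta\equiv\delta$, contradicting $\beta = -\delta$ on $\Gamma_\delta^-$; similarly for the lower bound. Therefore $-\delta < \beta(x) < \delta$ for every $x\in\Omega_\delta$.

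The main (minor) obstacle is justifying the maximum principle on the unbounded domain $\Omega_\delta$. This is where the Schwartz-class control from Proposition~\ref{prop.tcoordinates} is essential: it guarantees that $\beta$ is bounded and that its behavior at infinity is compatible with the interior bound, so that one can truncate $\Omega_\delta$ along vertical cross-sections at $|s|=R$ and let $R\to\infty$ without losing the inequality. Combining the two strict bounds on $\beta$ gives $f(x) = \alpha(x) + i\beta(x) \in \mathbb{R}\times(-\delta,\delta) = S_\delta$, which is the claim.
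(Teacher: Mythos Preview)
Your proof is correct and follows essentially the same approach as the paper: both use that $\beta$ is bounded with $\tilde\beta-t$ vanishing at infinity to reduce to the maximum principle, the paper via a maximizing-sequence dichotomy and you via exhaustion by bounded subdomains. You are in fact slightly more explicit than the paper about invoking the strong maximum principle to obtain the strict inequality $-\delta<\beta<\delta$ in the interior; one small attribution slip is that the Schwartz decay of $\tilde\beta-t$ comes from the proposition constructing $\beta$, not from Proposition~\ref{prop.tcoordinates} (which only records the $\mathscr{C}^1$ bound).
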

\begin{proof}
By Proposition \ref{prop.tcoordinates}, $\beta$ is bounded. Let $(s_n, t_n)_n$ a maximizing sequence for $\beta$. Either there is a bounded subsequence, then the limit is attained at the boundary due to the maximum principle or there is a subsequence such that $|s_n|\to+\infty$ and $t_n\to t_\infty\in[-\delta,\delta]$. Proposition \ref{prop.tcoordinates} ensures that $t_\infty = \sup \beta\in[-\delta,\delta]$ so that $\sup\beta = \delta$. The same holds for the infimum.
\end{proof}

\begin{lemma}\label{lem:surj}
We have $f(\Omega_\delta)=S_\delta$.	
\end{lemma}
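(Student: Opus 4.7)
The plan is a standard topological argument: show $f(\Omega_\delta)$ is both open and closed in $S_\delta$ and conclude by connectedness of the strip. The previous lemma gives $f(\Omega_\delta)\subset S_\delta$, and Proposition \ref{prop.tcoordinates2} tells us $\tilde f$ is $C^1$-close to the identity, so in particular $f$ is non-constant and holomorphic. The open mapping theorem immediately yields that $f(\Omega_\delta)$ is open in $\mathbb C$, hence open in $S_\delta$.

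For closedness in $S_\delta$, I would take a sequence $(x_n)\subset\Omega_\delta$ with $f(x_n)\to w\in S_\delta$ and show $w\in f(\Omega_\delta)$. Writing $x_n=\Theta(s_n,t_n)$ with $(s_n,t_n)\in S_\delta$, Proposition \ref{prop.tcoordinates2} gives
\[
|(s_n+it_n)-\tilde f(s_n,t_n)|\leq C\delta,
\]
so $|s_n+it_n|\leq C\delta+|f(x_n)|$ is bounded. Since moreover $|t_n|<\delta$, up to extraction $(s_n,t_n)\to(s_\infty,t_\infty)\in\overline{S_\delta}$, and continuity of $\tilde f$ on $\overline{S_\delta}$ yields $\tilde f(s_\infty,t_\infty)=w$.

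The key step is to rule out $t_\infty=\pm\delta$. If this happened, then $\Theta(s_\infty,t_\infty)\in\Gamma_\delta^{\pm}$, and the boundary condition $\beta|_{\Gamma_\delta^{\pm}}=\pm\delta$ together with $\Im f=\beta$ would force $\Im w=\pm\delta$, contradicting $w\in S_\delta$. Hence $t_\infty\in(-\delta,\delta)$, $\Theta(s_\infty,t_\infty)\in\Omega_\delta$, and $w\in f(\Omega_\delta)$.

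Since $f(\Omega_\delta)$ is a nonempty clopen subset of the connected set $S_\delta$, we conclude $f(\Omega_\delta)=S_\delta$. The main (and rather mild) obstacle is the tightness/properness input in the closedness step, which is supplied essentially for free by the uniform estimate of Proposition \ref{prop.tcoordinates2}; everything else is standard topology of holomorphic maps combined with the Dirichlet-type boundary behavior of $\beta$.
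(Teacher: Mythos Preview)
Your proof is correct and follows essentially the same clopen argument as the paper: openness via the open mapping theorem, closedness by showing any convergent sequence $(f(x_n))$ with limit in $S_\delta$ has $(x_n)$ bounded and subsequentially convergent to an interior point, then concluding by connectedness. Your presentation is in fact slightly more explicit than the paper's --- you spell out the boundedness of $(s_n,t_n)$ directly from the uniform estimate of Proposition~\ref{prop.tcoordinates2}, and you invoke the Dirichlet condition $\beta|_{\Gamma_\delta^\pm}=\pm\delta$ to exclude boundary limits, whereas the paper handles both points more tersely.
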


\begin{proof}
Since $f$ is not constant, $f(\Omega_\delta)$ is an open set (by the open mapping theorem), which is also connected. Let us show that is it closed in $S_\delta$. Consider a sequence $x_n\in \Omega_\delta$ such that $\lim_{n\to+\infty}f(x_n)=\ell\in S_\delta$.	If $(x_n)$ is not bounded, we may assume that $s_n\to+\infty$ and thus $(f(x_n))$ is not bounded. Thus, $(x_n)$ is bounded and we may assume that $x_n\to x_\infty\in\overline{\Omega}_\delta$. We have $\ell=f(x_\infty)$ since $f$ is continuous. We cannot have $x_\infty\in\partial\Omega_\delta$ since $\ell\in S_\delta$. Therefore $x_\infty\in\Omega_\delta$. By connectedness, we get the result.
\end{proof}

\begin{lemma}\label{lem:inj}
There exists $\delta_0>0$ such that for all $\delta\in(0,\delta_0)$, $f$ is injective. 
\end{lemma}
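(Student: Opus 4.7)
The plan is to reduce injectivity of $f$ to a small perturbation argument on the convex straight strip $S_\delta$, using the $\mathcal{C}^1$ estimate of Proposition \ref{prop.tcoordinates2} as the main ingredient.

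First, since $\Theta : S_\delta \to \Omega_\delta$ is a smooth diffeomorphism (by the choice of $\delta$ at the beginning of the paper), injectivity of $f : \Omega_\delta \to \mathbb{C}$ is equivalent to injectivity of $\tilde f := f \circ \Theta : S_\delta \to \mathbb{C}$. Write $\tilde f(s,t) = (s+it) + g(s,t)$, so that by Proposition \ref{prop.tcoordinates2} the perturbation $g : \overline{S_\delta} \to \mathbb{C}$ satisfies $\|g\|_{\mathcal{C}^1(\overline{S_\delta})} \leq C\delta$ for some universal constant $C > 0$ and all sufficiently small $\delta$.

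Next, I would suppose that $z_1, z_2 \in S_\delta$ satisfy $\tilde f(z_1) = \tilde f(z_2)$, which rewrites as $z_1 - z_2 = g(z_2) - g(z_1)$. Because $S_\delta = \mathbb{R} \times (-\delta,\delta)$ is convex, the segment $[z_1,z_2]$ lies entirely in $\overline{S_\delta}$, and applying the mean value inequality to the real and imaginary parts of $g$ along this segment yields
\[
|g(z_2) - g(z_1)| \leq \|Dg\|_{L^\infty(\overline{S_\delta})}\, |z_2 - z_1| \leq C\delta\, |z_2 - z_1|.
\]
Fixing $\delta_0 < 1/C$, any $\delta \in (0,\delta_0)$ then forces $|z_1 - z_2| \leq C\delta |z_1-z_2|$ with $C\delta < 1$, hence $z_1 = z_2$, proving injectivity of $\tilde f$ and therefore of $f$.

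There is no serious obstacle here; the only point requiring attention is that $\tilde f$ is \emph{not} holomorphic on $S_\delta$ (the pullback by the non-conformal map $\Theta$ destroys holomorphicity), so complex-analytic shortcuts such as the open mapping theorem or Rouché-type arguments do not directly apply to $\tilde f$. The uniform $\mathcal{C}^1$ bound of Proposition \ref{prop.tcoordinates2} is precisely the substitute that makes the contraction-type estimate work uniformly on the whole unbounded strip, including in the asymptotic region $|s|\to\infty$ where $\gamma$ is straight and $g$ decays.
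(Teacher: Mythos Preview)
Your proof is correct and follows essentially the same approach as the paper: both arguments pull back to the convex straight strip via $\Theta$, invoke the $\mathscr{C}^1$ estimate of Proposition \ref{prop.tcoordinates2}, and apply the mean value (Taylor integral) inequality along the segment joining two hypothetical preimages to force them to coincide. The paper phrases this as a contradiction argument with a sequence $\delta_n\to 0$, whereas you give a direct quantitative bound with an explicit threshold $\delta_0<1/C$; the underlying idea is identical.
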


\begin{proof}
Assume by contradiction that there is a sequence $(\delta_n)\to 0$ such that $f_{\delta_n}$ is not injective. There exist $(x_n^1),(x_n^2)\subset \Omega_{\delta_n}$ such that $x_n^1\ne x_n^2$ and $f_{\delta_n}(x_n^1) = f_{\delta_n}(x_n^2)$. By the Taylor formula and Proposition \ref{prop.tcoordinates2}, 
\[\begin{split}
 0 &= \tilde f_{\delta_n}(y_n^2) - \tilde f_{\delta_n}(y_n^1)
 =\int_0^1d\tilde f_{\delta_n}(y_n^1 + u(y^2_n-y^1_n))\cdot (y^2_n-y^1_n){\rm d}u
\sim_{n\to+\infty}y^2_n-y^1_n\,,
\end{split}\]
where $y_n^j = \Gamma^{-1}(x_n^j)$. This implies for $n$ large enough, that $y^2_n=y^1_n$ which is a contradiction.
\end{proof}

\subsection{Hardy space on a curved strip}
Assume that there exists $f\colon \Omega\to S_\delta$ with $f', (f^{-1})'\in L^\infty(\Omega)$.
We are now in good position to define $\mathscr{H}^2(\Omega)$.
\begin{definition}\label{def.hardycurved}
We denote
\[
	\mathscr{H}^2(\Omega) = \{
		u\in \mathscr{O}(\Omega)\,, u \circ f\in \mathscr{H}^2(S_\delta)
	\}\,.
\]
\end{definition}
\begin{proposition}\label{prop:HardyCurved}
The following holds.
\begin{enumerate}[\rm (i)]
\item \label{it.traceCurved}
The trace operator $\mathscr{T}\colon v\in \mathscr{H}^2(\Omega)\mapsto [T(v\circ f^{-1})]\circ f\in L^2(\partial \Omega)$ is well-defined and moreover, $\mathscr{H}^2(\Omega)$ endowed with $\|\mathscr{T}\cdot\|_{L^2(\partial \Omega)}$ is a Hilbert space and $\mathscr{T}$ becomes an isometry. 
	\item\label{it.inclL2Curved} 
	We have the continuous embedding $\mathscr{H}^2(\Omega)\subset L^2(\Omega)\,,$ and for all $v\in \mathscr{H}^2(\Omega_\delta)$, 
\[
	\|v\|_{L^2(\Omega)}\leq \sqrt{\delta \|{(f^{-1})}'\|_{L^\infty}}\|
	\mathscr{T}v
	\|_{L^2(\partial \Omega)}\,.
\]
	\item \label{it.CauchyCurved} For $k\in\mathbb{N}$, $z_0\in\Omega$, $v\in \mathscr{H}^2(\Omega)$, we have
\[\begin{split}
	|v^{(k)}(z_0)|  
		&\leq
	\sqrt{\frac{(2k)!}{2^{2k+1}\pi}
	\frac{\|f'\|_{L^\infty}}{\|(f^{-1})'\|_{L^\infty}^{2k+1}}}
	{\rm dist}(z_0,\partial \Omega)^{-\frac{2k+1}{2}}
	\|
	\mathscr{T}v
	\|_{L^2(\partial \Omega)}\,.
\end{split}\]
\end{enumerate}
\end{proposition}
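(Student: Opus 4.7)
The plan is to reduce each assertion to its straight-strip counterpart, Proposition \ref{prop:Hardy}, via the biholomorphism $f:\Omega\to S_\delta$. Set $u := v\circ f^{-1}$ so that, by Definition \ref{def.hardycurved}, $v\in\mathscr{H}^2(\Omega)$ amounts to $u\in\mathscr{H}^2(S_\delta)$. The assumption $\|f'\|_{L^\infty}, \|(f^{-1})'\|_{L^\infty}<\infty$ makes $f$ bi-Lipschitz and produces three geometric identities that will be used throughout: the arc-length change of variables $ds_{\partial\Omega}=|(f^{-1})'(w)|\,ds_{\partial S_\delta}$, the area change of variables $dx=|(f^{-1})'(w)|^2\,dw$ (the real Jacobian of a holomorphic map is $|(f^{-1})'|^2$), and the Lipschitz distance comparison $\mathrm{dist}(f(\zeta),\partial S_\delta)\geq \mathrm{dist}(\zeta,\partial\Omega)/\|(f^{-1})'\|_{L^\infty}$. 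The lower bound $|(f^{-1})'|\geq \|f'\|_{L^\infty}^{-1}$ (which follows from $(f^{-1})'(f(z))=1/f'(z)$) will also be essential.

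For Point \eqref{it.traceCurved}, Proposition \ref{prop:Hardy}(ii)--(iii) gives $Tu\in L^2(\partial S_\delta)$, and the arc-length identity yields $\|\mathscr{T}v\|^2_{L^2(\partial\Omega)} = \int_{\partial S_\delta}|Tu|^2\,|(f^{-1})'|\,ds$, which is finite and comparable to $\|Tu\|^2_{L^2(\partial S_\delta)}$ because $|(f^{-1})'|$ is bounded both above and below. Injectivity of $\mathscr{T}$ follows because $\|\mathscr{T}v\|_{L^2(\partial\Omega)}=0$ forces $Tu=0$ and hence $u\equiv 0$ by Proposition \ref{prop:Hardy}(iii). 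Completeness of $\mathscr{H}^2(\Omega)$ for the new norm is transported from that of $(\mathscr{H}^2(S_\delta),\|T\cdot\|_{L^2(\partial S_\delta)})$ via the same two-sided comparison applied to Cauchy sequences.

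For Point \eqref{it.inclL2Curved}, the naive substitution gives $\|v\|^2_{L^2(\Omega)} = \int_{S_\delta}|u|^2|(f^{-1})'|^2\,dw$, which an $L^\infty$ bound on $(f^{-1})'$ does not convert cleanly into $\|u\|^2_{L^2(S_\delta)}$. The key device is to introduce the twisted function $\tilde u := u\sqrt{(f^{-1})'}$, well-defined and holomorphic on the simply connected strip $S_\delta$ since $(f^{-1})'$ is non-vanishing. Because $|\tilde u|^2 = |u|^2\,|(f^{-1})'|$, one factor of $|(f^{-1})'|$ is absorbed:
\[
\|v\|^2_{L^2(\Omega)} \leq \|(f^{-1})'\|_{L^\infty}\int_{S_\delta}|\tilde u|^2\,dw = \|(f^{-1})'\|_{L^\infty}\,\|\tilde u\|^2_{L^2(S_\delta)}.
\]
On the boundary, $\|T\tilde u\|^2_{L^2(\partial S_\delta)} = \int_{\partial S_\delta}|Tu|^2|(f^{-1})'|\,ds = \|\mathscr{T}v\|^2_{L^2(\partial\Omega)}$, so applying Proposition \ref{prop:Hardy}(iv) to $\tilde u$ delivers the bound with constant $\sqrt{\delta\,\|(f^{-1})'\|_{L^\infty}}$.

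For Point \eqref{it.CauchyCurved}, the plan is to apply Cauchy's integral formula directly to $v$ on a circle $|\zeta-z_0|=r$ contained in $\Omega$ (with $r$ to be optimized below $d:=\mathrm{dist}(z_0,\partial\Omega)$), giving $|v^{(k)}(z_0)|\leq k!\,r^{-k}\sup_{|\zeta-z_0|=r}|v(\zeta)|$, and then to bound each $|v(\zeta)|=|u(f(\zeta))|$ by the $k=0$ case of Proposition \ref{prop:Hardy}(v) applied at $f(\zeta)$, combined with the distance comparison $\mathrm{dist}(f(\zeta),\partial S_\delta)\geq (d-r)/\|(f^{-1})'\|_{L^\infty}$. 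The factor $\|Tu\|_{L^2(\partial S_\delta)}$ appearing in this estimate is then traded for $\|\mathscr{T}v\|_{L^2(\partial\Omega)}$ via $|(f^{-1})'|\geq\|f'\|_{L^\infty}^{-1}$ in the arc-length formula, which produces one factor of $\sqrt{\|f'\|_{L^\infty}}$. Optimizing $r$ near $2kd/(2k+1)$ yields the announced exponent $-(2k+1)/2$ on the distance. The main obstacle is the careful bookkeeping of the constants in terms of $\|f'\|_{L^\infty}$ and $\|(f^{-1})'\|_{L^\infty}$; the power $\|(f^{-1})'\|_{L^\infty}^{2k+1}$ in the stated bound emerges from the distance comparison being used once per power of $d^{-1/2}$ through the Cauchy optimization, while the single factor $\|f'\|_{L^\infty}$ comes only from the final norm conversion on the boundary.
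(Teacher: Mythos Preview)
Your treatment of Points \eqref{it.traceCurved} and \eqref{it.inclL2Curved} is correct and essentially coincides with the paper's. For \eqref{it.inclL2Curved} the paper twists by the full derivative, setting $u=(f^{-1})'\,(v\circ f^{-1})$, which makes $\|v\|_{L^2(\Omega)}=\|u\|_{L^2(S_\delta)}$ an \emph{equality} (the Jacobian $|(f^{-1})'|^2$ is absorbed exactly) and then bounds $\|Tu\|_{L^2(\partial S_\delta)}\leq \|(f^{-1})'\|_{L^\infty}^{1/2}\|\mathscr T v\|_{L^2(\partial\Omega)}$. Your square-root twist swaps which of the two identities is exact and which carries the $L^\infty$ factor; the final constant is the same.

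For Point \eqref{it.CauchyCurved} there is a genuine problem with your constant bookkeeping. Your scheme is sound and does give a bound of the right shape $C_k\,\mathrm{dist}(z_0,\partial\Omega)^{-(2k+1)/2}\|\mathscr T v\|_{L^2(\partial\Omega)}$, but the justification you give for the stated constant is wrong. In your argument the Lipschitz distance comparison is invoked exactly \emph{once} (to pass from $\mathrm{dist}(f(\zeta),\partial S_\delta)$ to $d-r$), producing a single factor $\|(f^{-1})'\|_{L^\infty}^{1/2}$ in the \emph{numerator}; it is not applied ``once per power of $d^{-1/2}$''. After optimizing $r=2kd/(2k+1)$ your method yields
\[
|v^{(k)}(z_0)|\leq \frac{k!\,(2k+1)^{k+1/2}}{(2k)^k}\sqrt{\frac{\|f'\|_{L^\infty}\,\|(f^{-1})'\|_{L^\infty}}{2\pi}}\;d^{-(2k+1)/2}\,\|\mathscr T v\|_{L^2(\partial\Omega)},
\]
which differs from the displayed constant both in the numerical prefactor and in the exponent of $\|(f^{-1})'\|_{L^\infty}$. (Indeed, the stated placement of $\|(f^{-1})'\|_{L^\infty}^{2k+1}$ in the \emph{denominator} looks like a misprint: any route through the distance comparison $\mathrm{dist}(f(z_0),\partial S_\delta)\geq \mathrm{dist}(z_0,\partial\Omega)/\|(f^{-1})'\|_{L^\infty}$ produces powers of $\|(f^{-1})'\|_{L^\infty}$ in the numerator.) The paper's own proof for \eqref{it.CauchyCurved} is very terse---it only records the distance inequality and says the point ``follows''---and for every application in the paper only the qualitative estimate $|v^{(k)}(z_0)|\leq C\|v\|_{\mathscr H^2(\Omega)}$ on compact subsets is used. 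So your argument suffices for the paper's purposes; just do not claim it reproduces the displayed constant.
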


\begin{proof}
The proposition follows from Proposition \ref{prop:Hardy}. Let us develop some points. Let $v\in \mathscr{H}^2(\Omega)$ and define $u = (f^{-1})' v\circ f^{-1}\in \mathscr{H}^2(S_\delta)$. By Point \eqref{it.inclL2} of Proposition \ref{prop:Hardy}, we have
\[\begin{split}
	\|v\|_{L^2(\Omega)} &= \|u\|_{L^2(S_\delta)}\leq \sqrt{\delta }\|Tu\|_{L^2(\partial S_\delta)}
	\leq \sqrt{\delta \|(f^{-1})'\|_{L^\infty}}\|\mathscr{T}v\|_{L^2(\partial \Omega)}\,.
\end{split}\]
Point \eqref{it.inclL2Curved} follows.

Let $z_0,z_1\in\overline{\Omega}$, $\tilde z_0 = f(z_0)$, $\tilde z_1 = f(z_1)$, $\tilde \gamma\colon [0,1]\to S_\delta$ a smooth path such that $\tilde \gamma(0) = \tilde z_0$, $\tilde \gamma(1) = \tilde z_1$ and $\gamma = f^{-1}\circ \tilde \gamma$. We have
\[\begin{split}
	{\rm dist}_{\Omega}(z_0,z_1) \leq \int_0^1|\gamma'(t)|dt = \int_0^1|(f^{-1})'\circ\tilde \gamma(t)||\tilde\gamma'(t)|dt
	&\leq\|(f^{-1})'\|_{L^\infty}\int_0^1|\tilde\gamma'(t)|dt\,.
\end{split}\] 
Now, taking the infimum over all path in $\overline{S_\delta}$ between $\tilde z_0$ and $\tilde z_1$, we get
\[
{\rm dist}_{\Omega}(z_0,z_1)\leq \|(f^{-1})'\|_{L^\infty}{\rm dist}_{S_\delta}(\tilde z_0,\tilde z_1)\,,
\]
so that
\[
{\rm dist}_{\Omega}(z_0,\partial\Omega)\leq \|(f^{-1})'\|_{L^\infty}{\rm dist}_{S_\delta}(\tilde z_0,\partial S_\delta)\,,
\]
and Point \eqref{it.CauchyCurved} follows.
\end{proof}

We end this section by stating a useful density lemma, which follows from Lemma \ref{lem.density0}.

\begin{lemma}\label{lem.density}
The space $H^1(\Omega)\cap\mathscr{H}^2(\Omega)$ is dense in $\mathscr{H}^2(\Omega)$. More precisely, for all $u\in \mathscr{H}^2(\Omega)$, there exists $(u_n)\subset H^1(\Omega)\cap\mathscr{H}^2(\Omega)$ such that \[\lim_{n\to +\infty}\|\mathscr{T}(u_n-u)\|_{L^2(\partial\Omega)}=0\,.\]
\end{lemma}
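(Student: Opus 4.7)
The strategy is to transport the density result already established on the straight strip (Lemma \ref{lem.density0}) to the curved strip via the biholomorphism $f\colon \Omega\to S_\delta$ of Proposition \ref{prop.biholo}. Given $u\in\mathscr{H}^2(\Omega)$, I would set $\tilde u=u\circ f^{-1}\in\mathscr{H}^2(S_\delta)$ (consistent with Definition \ref{def.hardycurved} and the definition of $\mathscr{T}$). Lemma \ref{lem.density0} yields a sequence $(\tilde u_n)\subset H^1(S_\delta)\cap\mathscr{H}^2(S_\delta)$ with $\|T(\tilde u_n-\tilde u)\|_{L^2(\partial S_\delta)}\to 0$. The natural candidates on the curved side are
\[
u_n:=\tilde u_n\circ f\in \mathscr{H}^2(\Omega)\,.
\]

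The first step is to check that $u_n\in H^1(\Omega)\cap\mathscr{H}^2(\Omega)$. For the $L^2(\Omega)$ part, I would use the embedding from Point (ii) of Proposition \ref{prop:HardyCurved}. For the $H^1$ part, since $f$ is holomorphic,
\[
\partial_z u_n=(\partial_z\tilde u_n)\circ f\cdot f'\,,
\]
and the area-element change of variables under the biholomorphism gives
\[
\int_\Omega |(\partial_z\tilde u_n)\circ f|^2\,dA
=\int_{S_\delta}|\partial_z\tilde u_n|^2|(f^{-1})'|^2\,dA
\leq \|(f^{-1})'\|_{L^\infty}^2\,\|\tilde u_n\|_{H^1(S_\delta)}^2\,.
\]
Combining with $\|f'\|_{L^\infty}<\infty$ from Proposition \ref{prop.biholo} shows that $\partial_z u_n\in L^2(\Omega)$, and the Cauchy--Riemann equations handle the other derivative; hence $u_n\in H^1(\Omega)\cap\mathscr{H}^2(\Omega)$.

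The second step is the convergence of traces. Unwinding the definition of $\mathscr{T}$,
\[
\mathscr{T}(u_n-u)=\bigl[T(\tilde u_n-\tilde u)\bigr]\circ f\,.
\]
Since $f$ is conformal up to the boundary and $\partial\Omega=f^{-1}(\partial S_\delta)$, arc length transforms by $|(f^{-1})'|$ along $\partial S_\delta$, so the change of variables on the boundary gives
\[
\|\mathscr{T}(u_n-u)\|_{L^2(\partial\Omega)}^2
=\int_{\partial S_\delta}|T(\tilde u_n-\tilde u)|^2|(f^{-1})'|\,d\sigma
\leq \|(f^{-1})'\|_{L^\infty}\|T(\tilde u_n-\tilde u)\|_{L^2(\partial S_\delta)}^2\,,
\]
which tends to $0$ by the choice of $(\tilde u_n)$.

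The only delicate point is the verification that $f$ and $f^{-1}$ are smooth enough up to the boundary for both change-of-variables computations (interior area integral and boundary arc-length integral) to be justified; this is precisely the content of Proposition \ref{prop.biholo}, which provides $f'\in L^\infty(\Omega)$ and $(f^{-1})'\in L^\infty(S_\delta)$, and ensures that $f$ extends as a $\mathscr{C}^1$ diffeomorphism of the closed domains. Everything else reduces to routine transport under a bi-Lipschitz conformal change of coordinates.
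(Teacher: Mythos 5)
Your proposal is correct and follows precisely the route the paper intends: the paper only remarks that the lemma ``follows from Lemma~\ref{lem.density0}'' without spelling out the transport, and you fill in exactly that step---pull $u$ back to $\tilde u=u\circ f^{-1}\in\mathscr{H}^2(S_\delta)$, approximate on the straight strip, push forward by $f$, and control $H^1$-membership and the boundary trace via the bounded derivatives $f'$, $(f^{-1})'$ supplied by Proposition~\ref{prop.biholo}. The change-of-variables computations for the interior area integral and the boundary arc-length integral are correctly handled.
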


\bibliographystyle{abbrv}
\bibliography{biblioguide}
\end{document}